\def\dOi{10(3:2)2014}
\keywords{Monadic Second-Order Logic, Fixed Points, Boundedness}
\newtheorem{observation}[thm]{Observation}
\newtheorem*{thm*}{Theorem}
\newcommand{\Cn}[1]{\expandafter\newcommand\csname#1\endcsname}
\renewcommand{\phi}{\varphi}
\begin{document}

\title[Decidability Results for the Boundedness Problem]{Decidability Results for the Boundedness Problem}

\author[A.~Blumensath]{Achim Blumensath\rsuper a}
\address{{\lsuper{a,b}}Fachbereich Mathematik, Technische Universit\"at Darmstadt}
\email{\{blumensath,otto\}@mathematik.tu-darmstadt.de}

\author[M.~Otto]{Martin Otto\rsuper b}
\address{\vspace{-18 pt}}

\author[M.Weier]{Mark Weyer\rsuper c}
\address{{\lsuper c}\vspace{-12 pt}}
\email{mark@weyer-zuhause.de}

\begin{abstract}
We prove decidability of the boundedness problem for
monadic least fixed-point recursion based on
positive monadic second-order ($\MSO$) formulae over trees.
Given an $\MSO$-formula $\varphi(X,x)$ that is positive in~$X$, it is decidable
whether the fixed-point recursion based on~$\varphi$ is spurious
over the class of all trees in the sense that there is some uniform
finite bound for the number of iterations $\varphi$~takes
to reach its least fixed point, uniformly across all trees.
We also identify the exact complexity of this problem.
The proof uses automata-theoretic techniques.

This key result extends, by means of model-theoretic interpretations,
to show decidability of the boundedness problem for $\MSO$
and guarded second-order logic ($\GSO$) over the classes of structures
of fixed finite tree-width. Further model-theoretic transfer arguments
allow us to derive major known decidability results for boundedness
for fragments of first-order logic as well as new ones.
\end{abstract}

\maketitle

\section{Introduction}
\label{sect:intro}

In applications one frequently employs tailor-made logics
to achieve a balance between expressive power and algorithmic manageability.
Adding fixed-point operators to weak logics turned out to be a good way
to achieve such a balance.
Think, for example of the addition of transitive closure operators
or more general fixed-point constructs to database
query languages, or of various fixed-point defined reachability
or recurrence assertions to logics used in verification, like 
linear or branching time temporal logics or the modal $\mu$-calculus.
Fixed-point operators introduce a measure of relational recursion and
typically boost expressiveness in the direction of more dynamic and
less local properties. They offer relational recursion based on the 
iteration of relation updates that are definable in the underlying logic.
We here primarily consider monadic least fixed points, based on formulae
$\varphi(X,x)$ that are monotone (positive) in the
monadic recursion variable~$X$.
On a fixed structure~$\fA$,
any such~$\varphi$ induces a monotone operation
$F_\varphi : P \mapsto \set{ a \in \fA }{ \fA \models \varphi(P,a)}$
on monadic relations $P \subseteq A$.
The least fixed point of this operation over~$\fA$,
denoted as $\varphi^\infty(\fA)$, is also the first stationary point
of the monotone, ordinal-indexed iteration sequence of stages $\varphi^\alpha(\fA)$ starting from
$\varphi^0(\fA) := \emptyset$, with updates
$\varphi^{\alpha+1}(\fA) := F_\varphi( \varphi^{\alpha}(\fA))$
and unions in limits. 
The least~$\alpha$ for which
$\varphi^{\alpha+1}(\fA) = \varphi^\alpha(\fA)$ is called the closure
ordinal for this fixed-point iteration on~$\fA$.

For a concrete fixed-point process it may be hard to tell
whether the recursion employed is crucial or whether it is
spurious and can be eliminated.
Indeed this question comes in two versions\?:
(a) one can ask whether a resulting fixed point is also
uniformly definable in the base logic without fixed-point recursion
(purely an expressiveness issue)\?;
(b) one may also be interested to know whether the given fixed-point
iteration terminates within a uniformly bounded finite number
of steps (an algorithmic issue, concerning the dynamics of the
fixed-point recursion rather than its result).

The boundedness problem $\BDD(L,\cC)$
for a class of formulae~$L$
and a class of structures~$\cC$
concerns question~(b)\?: to decide, for a given formula $\varphi \in L$,
whether there is a finite upper bound on its closure
ordinal, uniformly across all structures $\fA \in \cC$.
We call such fixed-point iterations, or~$\varphi$ itself,
\emph{bounded over~$\cC$.}

Interestingly, for first-order logic, as well as for many natural
fragments, the two questions concerning eliminability of least fixed
points coincide -- at least over the class of all structures.
By a classical theorem of
Barwise and Moschovakis~\cite{BarwiseMoschovakis78},
the only way that the fixed point $\varphi^\infty(\fA)$ can be
first-order definable for every~$\fA$, is that there is some finite~$\alpha$
for which $\varphi^\infty(\fA) = \varphi^\alpha(\fA)$
for all~$\fA$. The converse is clear from the fact that the unfolding
of the iteration to any fixed finite depth~$\alpha$ is easily mimicked
in $\FO$.

In other cases -- and even for $\FO$ over other, restricted classes of
structures, e.g., in finite model theory -- the two problems can
indeed be distinct, and of quite independent interest.

We here deal with the boundedness issue.
Boundedness (even classically, over the class of all structures, and
for just monadic fixed points as considered above) is
undecidable for most first-order fragments of interest (see, e.g.,~\cite{HillebrandEtAl95}).
Notable exceptions are monadic boundedness for positive existential
formulae (\textsc{Datalog})~\cite{CosmadakisGaKaVa88},
for modal formulae~\cite{Otto99}, and for
(a restricted class of) universal formulae without equality~\cite{Otto06}.

One common feature of these decidable cases of the boundedness problem
is that the fragments concerned have a kind of tree-model property
(not just for satisfiability in the fragment itself, but also for the
fixed points and for boundedness).
This is obvious for the modal fragment~\cite{Otto99}, but clearly
also true for positive existential $\FO$ (derivation trees for monadic
\textsc{Datalog} programs can be turned into models of bounded tree-width),
and similarly also for the restricted universal fragment in~\cite{Otto06}.

Motivated by this observation,
\cite{KOS}~has made a first significant step
in an attempt to analyse the boundedness problem from the opposite perspective,
varying the class of structures rather than the class of formulae. The hope
is that this approach could go beyond an ad-hoc exposition of the
decidability of the boundedness problem for individual syntactic fragments,
and offer a unified model-theoretic explanation instead.
\cite{KOS}~shows that boundedness is decidable for
\emph{all} monadic fixed points in $\FO$ over the class of all acyclic
relational structures.
Technically \cite{KOS}~expands on modal
and locality-based proof ideas and reductions to
the monadic second-order theory of trees from~\cite{Otto99,Otto06}
that also rest on the availability of a Barwise--Moschovakis
equivalence. These techniques do not seem to extend to either
the class of all trees (where Barwise--Moschovakis fails) or to
bounded tree-width (where certain simple locality criteria fail).

The present investigation offers another step forward in the alternative
approach to the boundedness problem, on a methodologically very
different note. Its most important novel feature may be that it deals 
with a setting where neither locality nor
Barwise--Moschovakis are available. On the one hand, the class of formulae
considered is extended from first-order logic $\FO$
to full monadic second-order logic $\MSO$ -- a leap which greatly increases
the robustness of the results w.r.t.\ interpretations, and hence
their model-theoretic impact. On the other hand, automata are crucially used
and the underlying structures are restricted to trees.
Using $\MSO$-interpretations it follows that
the boundedness problem for $\MSO$ is decidable
over any $\MSO$-definable class of bounded tree-width, and similarly
even for guarded second-order logic $\GSO$ instead of $\MSO$.

These ramifications demonstrate the strength and unifying explanatory
power of our main decidability result in the wider context of the
boundedness issue.
One of our strongest concrete decidability results
concerns the boundedness problem for $\GSO$ over
$\GSO$-definable classes of bounded tree-width, cf.~Corollary~\ref{cor:decidability of BDD(GF), etc}.
This, in its turn,
encompasses all the major, previously known decidability results
for natural fragments of $\FO$ and, furthermore, settles decidability
of boundedness for the guarded fragment $\GF$.
Equally importantly it goes a long way
to explain the perceived dichotomy between the many undecidability
results, which may typically be understood in terms of reductions from
the tiling problem over suitably grid-like structures, and the
comparatively rare cases of decidability, which can now be systematically
linked to some generalised tree-model property.

Among the classical and previously known decidability results
for the boundedness of (systems of) monadic least fixed points,
which can be integrated into this new picture, are those for
\begin{itemize}
\item[--] monadic \textsc{Datalog,} or systems of monadic least fixed points
  for the purely existential--positive fragment of first-order logic,
  \cite{CosmadakisGaKaVa88}\?;
\item[--] dually, (systems of) monadic least fixed points
  in the purely universal-negative fragment of first-order logic
  (which may equivalently be phrased in terms of the
  boundedness for greatest fixed points for \textsc{Datalog} or for
  existential--positive first-order), \cite{Otto06}\?;
\item[--] modal logic, \cite{Otto99}\?;
\item[--] monadic least fixed points for unconstrained $\FO$
  in restriction to the class of all acyclic relational structures, \cite{KOS}.
\end{itemize}

\noindent Our decidability results are based on a reduction of the monadic
boundedness problem to the \emph{limitedness problem} for
\emph{weighted parity automata,} whose decidability is due to
Colcombet and L\"oding \cite{ColcombetLoeding08}
(cf.~Theorems
\ref{thm:finite limitedness decidable}~and~\ref{thm:limitedness decidable}
below).
This reduction introduces a rather sophisticated annotation
(of ternary tree structures) that records dependencies between the stages
of a fixed-point iteration over these tree structures\?; it is
established that, subject to a limitedness condition on a related
cost function, these annotations can serve as certificates for boundedness.

The overall structure of the paper is as follows.
We divide the material into two major parts\?:
the first part, comprising Sections~\ref{sect:start I}--\ref{sect:end I},
is devoted to the development of the
new techniques and leads up to the core technical result\?:
the decidability of the boundedness problem for $\MSO$
on the class of all ternary trees
through reduction to the limitedness problem for a certain class of automata.
The ramifications of this result are investigated
in the second half of the paper. 
Sections~\ref{sect:start II}--\ref{sect:end II} develop
transfer and reduction arguments that allow us to make links with previously
known decidability results and to derive several new concrete
decidability results. 
Section~\ref{sect:complexity}, finally, discusses complexity issues.

\section{Preliminaries}
\label{sect:prelims}

We assume some familiarity with basic concepts of logic
as can be found, e.g., in \cite{EbbinghausFlum95}.
Throughout the paper we assume that all vocabularies are finite
and that they contain only relation symbols and constant symbols,
but no function symbols.

Consider a second-order formula $\varphi(X,\bar x)$
with free variables as indicated in an underlying vocabulary~$\tau$.
Suppose that $\varphi(X,\bar x)$ is positive in the $r$-ary
second-order variable~$X$ and $\bar x = (x_1,\dots,x_r)$ is a matching
tuple of free first-order variables.
Any $X$-positive formula of this format induces, over every
$\tau$-structure~$\fA$, an operation on the power set of $A^r$\?:
\begin{align*}
  P \mapsto \varphi(\fA,P) := \set{ \bar{a} \in A^r }{ (\fA,P,\bar{a}) \models \varphi }\,.
\end{align*}

As $\varphi$~is $X$-positive, this operation is monotone
($P \subseteq P'$ implies $\varphi(\fA,P) \subseteq \varphi(\fA,P')$)
and hence possesses a unique least fixed point,
which we denote as $\varphi^\infty(\fA)$. This least fixed point
is obtained as the limit of the monotone sequence of inductive stages
$\varphi^\alpha(\fA)$
induced by~$\varphi$ on~$\fA$. These stages are defined by transfinite
induction, for all ordinals $\alpha$, according to\?:
\begin{align*}
  \varphi^0(\fA)          &:= \emptyset\,,  \\
  \varphi^{\alpha+1}(\fA) &:= \varphi(\fA,\varphi^\alpha(\fA))\,, \\
  \varphi^\delta(\fA)     &:= \bigcup_{\alpha < \delta} \varphi^\alpha(\fA)
                              \quad\text{for limits } \delta\,.
\end{align*}

The \emph{finite stages} $\varphi^\alpha(\fA)$, for $\alpha < \omega$,
are uniformly definable by formulae, which we also denote by~$\varphi^\alpha$,
obtained from $\varphi(X,\bar x)$ by iterated substitution
of~$\varphi$ for~$X$ in~$\varphi$.
Letting $\varphi[\psi(\bar x)/X]$ stand for the result of
replacing all free occurrences of~$X$ in atoms~$X\bar y$
in~$\varphi$ by~$\psi(\bar y)$, we obtain
formulae~$\varphi^\alpha$ for $\alpha < \omega$, by
\begin{align*}
  \varphi^0 := \bot
  \quad\text{and}\quad
  \varphi^{\alpha+1} := \varphi[\varphi^\alpha(\bar x)/X]\,.
\end{align*}

Clearly, for finite $\alpha$, $\varphi^\alpha \in \MSO$ for $\varphi \in \MSO$,
and similarly for all natural fragments of first- and second-order logic that
are closed under this substitution operation. It is easy to see that
$\varphi^\alpha$~defines the stage $\varphi^\alpha(\fA)$
for finite $\alpha$, uniformly across all~$\fA$. We therefore need not
distinguish between the two readings of $\varphi^\alpha(\fA)$ for
\emph{finite}~$\alpha$. For infinite~$\alpha$, on the other hand,
we do not regard~$\varphi^\alpha$ as a formula (it would in general have to be
a formula in some infinitary extension of the base logic), but only
allow $\varphi^\alpha(\fA)$ as shorthand notation for the corresponding stage
of~$\varphi$ over~$\fA$.

Because of monotonicity,
$\varphi^\infty(\fA) = \bigcup_\alpha \varphi^\alpha(\fA) = \varphi^\gamma(\fA)$
for the least ordinal~$\gamma$ for which
$\varphi^{\gamma+1}(\fA) = \varphi^\gamma(\fA)$.
This ordinal~$\gamma$ is called the \emph{closure ordinal} for~$\varphi$
on~$\fA$, denoted $\cl{\varphi}{\fA}$.
The \emph{stage} of an individual $\bar a \in \varphi^\infty(\fA)$
is the least ordinal~$\alpha$ such that $\bar a \in \varphi^{\alpha+1}(\fA)$\?;
therefore, the closure ordinal could also be described as the least
ordinal greater than the stages of all members of the fixed point $\varphi^\infty(\fA)$.

The closure ordinal can in general only be bounded, for simple
cardinality reasons, by the successor cardinal of the cardinality
of~$\fA$, or by $\card{A}^r$ for finite~$\fA$.

For instance, the fixed-point induction based on $\varphi(X,x) = \forall y (Ryx \to Xy)$
yields as its fixed point over $\fA = (A,R)$ the set of elements
$a \in A$ that are well-founded w.r.t.\ $R$\?;
over the well-ordering $\fA = (\alpha,{<})$, the closure ordinal is~$\alpha$.
In fact, $\varphi^\infty(\fA) = \alpha$\?;
the stage of $\beta \in \alpha$ is~$\beta$.

The fixed-point induction based on~$\varphi$, or for simplicity\?: $\varphi$~itself,
is said to be bounded if, for some finite $\alpha < \omega$,
$\cl{\varphi}{\fA} \leq \alpha$ for all~$\fA$.
Similarly, $\varphi$~is bounded
over the class~$\cC$ if, for some $\alpha < \omega$,
$\cl{\varphi}{\fA} \leq \alpha$ for all $\fA \in \cC$.

\begin{defi}
{\normalfont (a)} Let $\varphi$~be a formula over~$\tau$, positive in~$X$,
and let $\alpha < \omega$.
We say that $\varphi$~is \emph{bounded by~$\alpha$} over a class~$\cC$
if $\varphi^{\alpha}(\fA) = \varphi^{\alpha+1}(\fA)$, for all $\fA \in \cC$.
We call~$\varphi$ \emph{bounded} over~$\cC$
if it is bounded by some $\alpha < \omega$.

{\normalfont (b)} The \emph{boundedness problem} for a logic~$L$
over a class~$\cC$ is the problem to decide,
given a formula $\varphi \in L$, whether $\varphi$~is bounded over~$\cC$.
We denote this decision problem as $\BDD(L,\cC)$.

The \emph{monadic boundedness problem} is
the corresponding problem where we only consider formulae~$\varphi$
with \emph{monadic} variables~$X$.
We denote it as $\BDDm(L,\cC)$.

If $\cC$~is the class of all structures,
we just write $\BDD(L)$ or $\BDDm(L)$.
\end{defi}

A vocabulary~$\tau$ is called a \emph{tree vocabulary,}
if $\tau$~consists of one binary relation symbol~$E$
and, otherwise, only of constant symbols and unary relation symbols.
A $\tau$-structure~$\fT$ is called a \emph{tree structure,}
or \emph{tree} for short,
if $E^\fT$ is a symmetric, acyclic, and connected relation on~$T$.
In particular, tree structures are undirected.

In Part~I of the paper,
we shall exclusively look at $\BDDm(\MSO,\cT)$
for the class of $\MSO$-formulae $\varphi(X,x)$ suitable for
monadic fixed points (positive in the monadic variable~$X$)
over the class~$\cT$ of all tree structures
and some of its subclasses. We refer to this core problem as
the boundedness problem for $\MSO$ over trees for short.

\begin{thm}[Main theorem]
$\BDDm(\MSO,\cT)$, the monadic boundedness problem for $\MSO$
over the class of all tree structures, is decidable.
\end{thm}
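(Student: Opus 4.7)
The plan is to reduce $\BDDm(\MSO,\cT)$ to the limitedness problem for weighted parity tree automata, which is decidable by Colcombet--L\"oding. As a preliminary normalisation, I would first collapse the problem from arbitrary (unranked, undirected) trees to rooted ternary trees via an $\MSO$-interpretation\?: both the interpretation and the recovery of the original tree are $\MSO$-definable, so any input formula $\varphi(X,x)$ over general trees can be translated into an $\MSO$-formula $\varphi'(X',x')$ over ternary trees whose closure ordinals pointwise match those of $\varphi$. Hence $\BDDm(\MSO,\cT)$ reduces to its restriction to the class of all ternary trees.

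For the core reduction, recall that $\varphi$ is bounded iff the stage function $a \mapsto \min\setlr{\alpha : a \in \varphi^{\alpha+1}(\fT)}$ is uniformly bounded by a finite constant over all trees $\fT$ and all $a \in \varphi^\infty(\fT)$. For each $a$ of stage $\beta$, the witness that $a$ enters the fixed point is a finite derivation that unfolds $\varphi$ roughly $\beta$ times, each unfolding selecting, at the occurrences of the recursion variable $X$, elements of strictly smaller stage. The observation I would exploit is that on a tree these derivations can be organised locally along the tree structure, so that a single annotation on $\fT$ can simultaneously certify stage upper bounds for \emph{all} elements, provided it correctly records the dependency pattern between $X$-occurrences and the neighbouring nodes that discharge them.

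Concretely, I would introduce an annotation alphabet and construct a weighted parity tree automaton $\cA_\varphi$ on ternary trees over this alphabet. The parity (acceptance) condition, which is expressible in $\MSO$ and hence realisable by automata, would enforce global coherence\?: at each node the annotation specifies an $\MSO$-type and a justification that every element claimed to lie in $\varphi^\infty(\fT)$ really satisfies $\varphi$ using only the declared, strictly lower-stage witnesses. The cost function of $\cA_\varphi$ would be chosen so that the cost of an accepting run equals, up to a constant, the maximal stage attested by the annotation. Then $\varphi$ is bounded over ternary trees iff there is a finite $N$ such that every tree admits an accepting annotation of cost at most $N$ -- which is exactly the limitedness of $\cA_\varphi$, and hence decidable.

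The main obstacle will be the design of the annotation and the accompanying soundness and completeness proof. Soundness -- that any valid annotation really bounds stages -- should follow by induction on the dependency order encoded in the annotation. Completeness is the delicate half\?: whenever $\varphi$ is bounded by some $\alpha$, one must exhibit, on every tree, an accepting annotation whose cost is bounded by a constant depending only on $\alpha$. Since the stage function is defined globally by transfinite induction, this amounts to showing that for bounded $\varphi$ the relevant global stage information already manifests itself in the $\MSO$-types of bounded-depth neighbourhoods of the tree, and can therefore be guessed and verified by a finite-state device. Matching the quantitative bookkeeping of the cost function to the iteration dynamics of $\varphi$ tightly enough to make boundedness correspond exactly to limitedness is where the technical heart of the argument will lie.
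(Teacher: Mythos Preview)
Your overall plan matches the paper's: reduce to ternary trees, then to limitedness of weighted parity automata via annotations that encode type information. Two points deserve attention.

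First, the reduction to ternary trees has a genuine gap. An $\MSO$-interpretation whose domain is the class of ternary trees can only produce \emph{countable} trees (any first-child/next-sibling style encoding, or indeed any one-dimensional interpretation out of a countable structure, yields a countable output). So the interpretation argument alone only gives $\BDDm(\MSO,\cT_{\aleph_0}) \leq \BDDm(\MSO,\cT_3)$. To bridge the gap from $\cT$ to $\cT_{\aleph_0}$ the paper proves a separate L\"owenheim--Skolem property for $\MSO$ on trees: every tree has a countable subtree with the same $\MSO$-theory. This is not automatic and is established via $\MSO$-automata; your proposal omits this step.

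Second, your description of the core automaton is on target but glosses over the central difficulty. You propose that the cost of an accepting run equal the maximal attested stage ``up to a constant''. That is too strong: the per-node stage values along a tree are not a regular datum (e.g.\ a formula whose stages along a path of length $2n$ read $1,2,\dots,n,n,\dots,2,1$). The paper's solution is to replace stages by an approximation that \emph{is} automaton-checkable. Each node is annotated not with a stage number but with its finite \emph{history} of $X$-positive $n$-types (the strictly increasing sequence of tiles seen across all stages, duplicates removed), together with a local \emph{synchronisation} linking history indices across neighbours. From such an annotation one extracts a \emph{rank} via dependency chains of jumps; the paper proves that in any globally consistent annotation the rank is bounded by the stage, and that conversely in the canonical annotation $\cA_\beta(\fT)$ the stage is bounded by the rank. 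The weighted automaton then computes (an approximation of) rank rather than stage, and even that only up to a logarithmic distortion. What makes the argument go through is that boundedness of ranks and boundedness of stages coincide, which is all limitedness requires; the tight quantitative match you envisage does not hold and is not needed.
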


In Part~II we employ model-theoretic interpretations and similar transfer arguments
to deduce from this result the decidability of many other boundedness problems.
In particular, we obtain new proofs of many previous decidability results
for boundedness,
as well as some new results like the decidability for the guarded fragment
of first-order logic and for full guarded second-order logic over structures
of bounded tree-width.

\bigskip
\section*{Part I. The main result}

In this first part we prove the main technical result, the decidability
of the monadic boundedness problem for $\MSO$ on the class of all ternary trees.
The ramifications of this result will then be investigated
in the second half of the paper.

\medskip
To help the reader through the later technicalities,
we start with a simplified outline of the proof idea
towards the main theorem.
The key idea is to derive, for every formula~$\varphi$, a
bound $N = N(\varphi)$ that provides a uniform strict upper bound on the
closure ordinals $\cl{\varphi}{\fT}$ over any tree structure~$\fT$ in case
$\varphi$ \emph{is} bounded.
Then boundedness of $\varphi$ is equivalent to the unsatisfiability
of $\varphi^N \wedge \neg \varphi^{N-1}$
(over the class of all tree structures~$\fT$).
In other words, a formula which (on the class of all trees)
is not bounded by this number~$N$ is not bounded at all.
To reason towards such a uniform bound~$N$,
assume that for some tree~$\fT$,
some node~$v$ enters the fixed point in stage~$N$.
Then $(\fT,\varphi^N(\fT),v) \models \varphi$
but $(\fT,\varphi^{N-1}(\fT),v) \not\models \varphi$.
Using a Feferman--Vaught style lemma (cf.~Proposition~\ref{prop:Feferman-Vaught}),
this change in the status of~$\varphi$ can be traced back to some other node~$w$
such that $(\fT,\varphi^N(\fT),w) \models Xx$
but $(\fT,\varphi^{N-1}(\fT),w) \not\models Xx$,
which means that $w$~entered the fixed point in stage $N-1$.
In this way we obtain a path of dependencies
which travels through the tree
and at places decreases the stage by~$1$.
In a chain of $N$~such jumps,
we conclude that, if $N$~is large
in comparison to the number of types used in the Feferman--Vaught style lemma,
then the path has repetitions and we can use a pumping argument
to produce trees where some node enters the fixed point at arbitrarily large stages.
Consequently, $\varphi$~is unbounded.

The actual proof has to deal with further difficulties,
so it does not exactly follow this outline.
One difficulty is that a pumping lemma essentially requires that
(in some very loose sense) we only use regular properties.
In particular, we have to weaken the counting of stages
and, consequently,
we will slightly relax the concept of a dependency.
Also, it is not sufficient to consider a single dependency path\?:
we have to do the pumping such that it works for all paths simultaneously.
Fortunately, there is already a suitable pumping theorem for a certain kind
of weighted automaton that we can reduce our problem to.
The main part of this paper describes this highly non-trivial reduction.

\paragraph*{\itshape Convention.}
For technical reasons we choose in the following not to distinguish
formally between (assignments to) free first and second-order variables
(and interpretations of) constant or relation symbols. For instance,
we shall often regard $x$~and~$X$, which in usual parlance occur
free in $\varphi(X,x)$, as part of the vocabulary, and think of
assignments $a \in A$ and $P \subseteq A$ over some~$\fA$
in terms of the expansion $(\fA,P,a)$ of~$\fA$.

\section{A Feferman--Vaught theorem for positive types}
\label{sect:Feferman-Vaught}
\label{sect:start I}

For a vocabulary~$\tau$, we denote by
$\MSO^n[\tau]$ the set of all $\MSO$-formulae
over~$\tau$ with quantifier rank at most~$n$
(we count both first- and second-order quantifiers).
If $X \in \tau$ is a unary predicate we write
$\Lpos n$ for the subset of all formulae
where the predicate~$X$ occurs only positively.
Recall that, for finite vocabularies~$\tau$, $\MSO^n[\tau]$,
and hence also $\Lpos n$,
is finite up to logical equivalence.

\begin{defi}
Let $\tau$~be a vocabulary and $X \in \tau$.
The \emph{$X$-positive $n$-type} of a $\tau$-structure~$\fA$ is the set
\begin{align*}
  \mtype{n}{\fA} := \set{ \varphi\in\Lpos n }{ \fA\models\varphi }\,.
\end{align*}
We write $\mType{n}$
for the set of all $X$-positive $n$-types of $\tau$-structures.
\end{defi}

Let $\fT_1$ and $\fT_2$ be tree structures.
If $T_1$ and $T_2$ are disjoint,
and if furthermore no constant symbol is interpreted in both trees,
then we define a concatenation operation as follows\?:
let $c_1$~and~$c_2$ be constant symbols from the structures
$\fT_1$~and~$\fT_2$, respectively.
Then we denote by $\fT_1 +_{c_1,c_2} \fT_2$ the tree
obtained from the disjoint union of the trees $\fT_1$~and~$\fT_2$
by adding an edge between $c_1^{\fT_1}$ and $c_2^{\fT_2}$.
Note that every finite tree
can be constructed from one-element trees using this operation and reduct operations.

If $\fT$ is a tree and $vw$ an edge of~$\fT$,
then removing~$vw$ from~$\fT$ produces two disjoint trees.
Of these, we denote the one containing the vertex~$v$ by~$\fT_{vw}$.
Note that, if there are constants $c$~and~$d$ for $v$~and~$w$,
then $\fT = \fT_{vw} +_{c,d} \fT_{wv}$.
If $c$~is a constant symbol not interpreted by~$\fT_{vw}$,
then we set $\fT_{vw,c} := \parlr{\fT_{vw},v}$,
where the expansion interprets~$c$ by~$v$.

We will frequently need a derived operation\?:
let $\fT_1$~and~$\fT_2$ be trees
such that $T_1$ and $T_2$ are disjoint, and
suppose that there is exactly one constant symbol~$c$
that is interpreted both in $\fT_1$~and in~$\fT_2$.
Let $d$~be a constant symbol which is interpreted in neither.
Then we denote by $\fT_1 \add_c \fT_2$
the reduct of $\fT_1 +_{c,d} \subst{\fT_2}{d}{c}$
that expels~$d$ from the vocabulary ($\subst{\fT_2}{d}{c}$ denotes
the structure obtained from~$\fT_2$ by renaming the constant symbol~$c$
to~$d$).
Intuitively,
$c$~denotes the root of (directed versions of) the respective trees,
and $\add_c$~appends its second argument as a new subtree
below the root of its first argument.
For a more uniform treatment,
we allow the empty tree~$\triangle$
as a neutral second argument to~$\add_c$, and
we use~$\triangle$ also for its type.

\begin{prop}\label{prop:Feferman-Vaught}
For every $n < \omega$,
there is a binary operation~$\oplus^n_{c_1,c_2}$ on $X$-positive $n$-types
such that,
for all trees $\fT_1,\fT_2$ for which $\fT_1 +_{c_1,c_2} \fT_2$ is defined,
we have
\begin{align*} \mtype n{\fT_1 +_{c_1,c_2} \fT_2}
    = \mtype n{\fT_1} \oplus^n_{c_1,c_2} \mtype n{\fT_2}\,.
\end{align*}
Furthermore, $\oplus^n_{c_1,c_2}$ is monotone\?:
\begin{align*}
  t_1\subseteq t_1' \text{ and } t_2 \subseteq t_2'
  \quad\text{implies}\quad
  t_1 \oplus^n_{c_1,c_2} t_2 \subseteq t_1' \oplus^n_{c_1,c_2} t_2'\,.
\end{align*}
Finally, $t_1\oplus^n_{c_1,c_2}t_2$ is computable from $n$, $t_1$, and $t_2$.
\end{prop}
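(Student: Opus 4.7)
The plan is to prove this composition theorem by induction on the quantifier rank $n$, via a Feferman--Vaught style rewriting of each $\MSO^n$-formula over the glued tree into a finite positive Boolean combination of $\MSO^n$-formulas over the two halves, while preserving $X$-positivity. First I would absorb the extra edge created by the $+_{c_1,c_2}$ operation into the formula. Concretely, view $\fT_1 +_{c_1,c_2} \fT_2$ as the plain disjoint union of $\fT_1$ and $\fT_2$ (expanded by unary markers $U_1, U_2$ for the two sides) endowed with the modified edge relation $E^{+}(u,v) := E(u,v) \vee (u = c_1 \wedge v = c_2) \vee (u = c_2 \wedge v = c_1)$. Every $\MSO^n$-formula $\phi$ over the glued tree rewrites, by substituting $E^{+}$ for the atom $E$, into an $\MSO^n$-formula $\hat\phi$ of the same quantifier rank interpretable over the disjoint union, and this substitution preserves $X$-positivity.

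Next, by induction on formula structure, I would show that every $X$-positive $\phi \in \MSO^n$ over such a disjoint union is equivalent, on all such unions, to a positive Boolean combination of conjunctions $\psi_i^1(\bar x^1) \wedge \psi_i^2(\bar x^2)$, where each $\psi_i^j$ is an $X$-positive $\MSO^n$-formula over the vocabulary of $\fT_j$ and $\bar x^j$ is the restriction of $\bar x$ to the variables interpreted into side $j$. The three substantive inductive cases are: splitting $\exists y$ according to whether the witness lies in $U_1$ or $U_2$; splitting $\exists Y$ as $\exists Y_1 \exists Y_2$ with $Y_j \subseteq U_j$ and $Y = Y_1 \cup Y_2$; and handling atomic formulas, which in a tree vocabulary are unary predicates, constants, equalities, or the binary edge relation, each of which is local to one side in the disjoint union. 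The crucial invariant is that $X$ never appears inside the scope of a negation in $\phi$ and no rewriting step introduces such a negation, so all the $\psi_i^j$ may be chosen $X$-positive and the outer combination involves only $\wedge$ and $\vee$ over them.

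With this rewriting in hand I would define $t_1 \oplus^n_{c_1,c_2} t_2$ to consist of those $\phi \in \Lpos n$ whose rewritten form evaluates to true under the interpretation \emph{``$\psi_i^j$ holds iff $\psi_i^j \in t_j$''}. By construction this set coincides with $\mtype n{\fT_1 +_{c_1,c_2} \fT_2}$ for any $\fT_j$ realising $t_j$, so the operation is well defined and independent of the chosen representatives. Monotonicity is then immediate: a positive Boolean combination of membership assertions $\psi_i^j \in t_j$ is monotone in the sets $t_j$, so $t_1 \subseteq t_1'$ and $t_2 \subseteq t_2'$ imply $t_1 \oplus^n_{c_1,c_2} t_2 \subseteq t_1' \oplus^n_{c_1,c_2} t_2'$. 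Computability follows since $\Lpos n$ is finite up to logical equivalence and the rewriting procedure is entirely effective, so the value of $\oplus^n_{c_1,c_2}$ on any pair of types can be computed by evaluating finitely many positive Boolean combinations on explicit membership bits.

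I expect the main obstacle to be the careful bookkeeping in the inductive rewriting that simultaneously preserves $X$-positivity, the quantifier-rank bound $n$ (particularly when splitting second-order existentials into two relativised ones and when pushing quantifiers across the Boolean decomposition produced by the inductive hypothesis), and the correct partition of free variables between the two sides. Every other ingredient of the proof is routine once this invariant has been established.
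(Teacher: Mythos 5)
Your overall strategy is sound and is, at bottom, the same Feferman--Vaught induction that the paper uses, just packaged syntactically: where you decompose each $X$-positive formula over the glued tree into a positive Boolean combination of one-sided $X$-positive formulas of the same quantifier rank and then evaluate that combination against the two types, the paper composes the types semantically, quantifier by quantifier, via sets of types realised by expansions. Both routes hinge on the same monotonicity invariant, and your observation that a \emph{positive} Boolean combination of membership bits is automatically monotone in $t_1,t_2$ is a clean way to obtain the monotonicity claim, as is the remark that well-definedness is automatic because each one-sided constituent is itself an $X$-positive formula of rank at most $n$ and hence recorded in the type.

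The gap is in your list of ``substantive inductive cases.'' You name $\exists y$, $\exists Y$ and atoms, and these are the easy ones: existential quantifiers distribute over the disjunction produced by the inductive hypothesis and commute with the conjuncts not mentioning the quantified variable. The cases that actually threaten the construction are $\forall y$ and $\forall Y$, which you never address. The naive way to pull the side-$2$ conjuncts $B_i$ out of $\forall y\bigvee_i\bigl(A_i(y)\wedge B_i\bigr)$ is to case-split on which $B_i$ hold, and that split requires the formulas $\neg B_i$ --- destroying $X$-positivity, the one invariant your construction rests on. The repair is the monotone form of the split: $\forall y\bigvee_i\bigl(A_i(y)\wedge B_i\bigr)$ is equivalent to $\bigvee_{S}\bigl[\forall y\bigvee_{i\in S}A_i(y)\wedge\bigwedge_{i\in S}B_i\bigr]$, with $S$ ranging over subsets of the index set and no negated $B_i$ needed (for the forward direction take $S$ to be the set of indices $i$ for which $B_i$ holds), together with an analogous disjunction over \emph{families} of index sets for $\forall Y$. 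This is precisely the content of the paper's superset $S_1'$ of realised types and its ``good pair'' device for $\forall Y\varphi$, and it is why the paper writes out a universal case explicitly rather than an existential one. Without this step your decomposition lemma is unproved exactly where positivity matters.
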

\begin{proof}
Computability of the operation will be evident,
once we show how to compute with types in an effective way.
For this sake,
note that we can represent an $n$-type by a finite
set of formulae where all maximal boolean combinations
are in disjunctive normal form without repetition of clauses
or of literals in clauses.

We proceed by induction on~$n$.
Assume that we already know how to compute~$\oplus^m_{c_1,c_2}$ for all $m<n$
and all vocabularies.
For convenience, we set
\begin{align*}
  \fT:=\fT_1 +_{c_1,c_2} \fT_2\,,\quad
  t_1:=\mtype n{\fT_1}\,,\quad
  t_2:=\mtype n{\fT_2}\,,
  \quad\text{and}\quad
  t:=\mtype n{\fT}\,.
\end{align*}
We will describe~$t$ solely in terms of $n$,~$t_1$, $t_2$,
and the operations~$\oplus^m_{c_1,c_2}$ with $m < n$.
Each formula in an $X$-positive $n$-type
is a positive boolean combination of atoms, negated atoms,
and formulae of the form $\exists y\varphi$, $\forall y\varphi$,
$\exists Y\varphi$, and $\forall Y\varphi$,
where $y$~is a first-order variable and $Y$~is a set variable.
Whether the full formula belongs to~$t$ is clearly determined
by whether the individual formulae in the positive boolean combination do.
Also, as the boolean combinations are positive,
monotonicity is preserved.
Hence it suffices to consider subformulae of the above form.

In the following we explicitly treat the cases of
atomic and negated atomic formulae and of
$\exists y\varphi$ and $\forall Y\varphi$.
The remaining cases $\exists Y\varphi$ and $\forall y\varphi$
can be handled using combinations of the techniques
used in these cases.

\smallskip
First, we consider atoms and negated atoms.
Each (negated) atom that only uses constants from~$\fT_i$
occurs in~$t$ iff it occurs in~$t_i$.
It remains to consider (negated) atoms
involving constants from both $\fT_1$ and $\fT_2$.
As $E$~is the only relation symbol of arity more than~$1$,
such an atom must be of the form $c\seq d$ or $Ecd$ where,
without loss of generality, $c$~is from the vocabulary of $\fT_1$
and $d$~from the vocabulary of $\fT_2$.
In this case, we always have $c\seq d \notin t$
and, hence, $\neg(c\seq d) \in t$\?;
so
\begin{align*}
  Ecd\in t \quad\text{iff}\quad \neg Ecd \notin t
  \quad\text{iff}\quad
  c\seq c_1\in t_1 \text{ and } d\seq c_2\in t_2\,.
\end{align*}

\smallskip
Next, let us consider a formula of the form $\exists y\varphi$
with $m := \qr(\varphi) < n$.
We make use of~$\oplus^m_{c_1,c_2}$.
Let $t_1'$ and $t_2'$ be the $X$-positive $m$-types of $\fT_1$ and $\fT_2$,
that is, $t_1' = t_1\cap\MSO^m_X$ and $t_2' = t_2\cap\MSO^m_X$.
Further, let $S_1$ be the set of $X$-positive $m$-types
of expansions of $\fT_1$ by some $a\in T_1$ interpreted for~$y$,
and let $S_2$ be the respective set of types of expansions of $\fT_2$.
Clearly, $\exists y\varphi\in t$ iff $\varphi\in\mtype m{\fT,a}$ for some $a\in T$.
For $a\in T_1$ and $t''_1:=\mtype m{\fT_1,a}$,
the inductive hypothesis implies that
\begin{align*}
  \mtype m{\fT,a}
  &= \mtype m{\fT_1+_{c_1,c_2}\fT_2,\ a} \\
  &= \mtype m{\parlr{\fT_1,a}+_{c_1,c_2}\fT_2} \\
  &= \mtype m{\fT_1,a}\oplus^m_{c_1,c_2}\mtype m{\fT_2}
   = t''_1 \oplus^m_{c_1,c_2} t'_2\,.
\end{align*}
Note that $t''_1\in S_1$.
The case where $a\in T_2$ is similar.
It follows that
\begin{align*}
  \exists y\varphi \in t
  \quad\text{iff}\quad
           & \varphi\in t''_1 \oplus^m_{c_1,c_2} t'_2\,, \text{ for some } t''_1\in S_1\,, \\
\text{or } & \varphi\in t'_1 \oplus^m_{c_1,c_2} t''_2\,, \text{ for some } t''_2\in S_2\,.
\end{align*}
As an artifact of positivity in~$X$,
the set~$S_1$ is not determined by~$t_1$.
The point is that, for instance, if
$\exists x ( Xx \wedge \chi(x)) \in t_1$,
then $S_1$~may or may not contain a type~$t'$
such that $\chi\in t'$ but $Xx\not\in t'$,
because we do not know
about the status of $\exists x(\neg Xx \wedge \chi(x))$.

Unlike $S_1$, the following superset of~$S_1$ is determined by~$t_1$\?:
\begin{align*}
  S_1' := \bigset{ t''_1 \in \mType{m} }{ \textstyle\exists y\bigwedge t''_1 \in t_1 } \supseteq S_1\,.
\end{align*}
(Recall that representations of types are finite,
so $\bigwedge t''_1$ is in fact a formula.)
Hence it suffices to show that
\begin{align*}
  \varphi\in t''_1 \oplus^m_{c_1,c_2} t'_2\,, \text{ for some } t''_1\in S_1
  \quad\text{iff}\quad
  \varphi\in t''_1 \oplus^m_{c_1,c_2} t'_2\,, \text{ for some } t''_1\in S_1'\,.
\end{align*}
(The corresponding statement for~$\fT_2$ then follows by symmetry.)

$(\Rightarrow)$ is trivial since $S_1\subseteq S_1'$.
For $(\Leftarrow)$,
assume that $t''_1$~is a type such that
$\exists y\bigwedge t''_1\in t_1$ and $\varphi\in t''_1\oplus^m_{c_1,c_2}t'_2$.
Let $a\in T_1$ be an element with $(\fT_1,a) \models \bigwedge t''_1$,
and set $t'''_1 := \mtype m{\fT_1,a}$.
Clearly, $t''_1 \subseteq t'''_1$.
Hence, monotonicity of~$\oplus^m_{c_1,c_2}$
implies that $\varphi \in t'''_1\oplus^m_{c_1,c_2}t'_2$, as desired.

It remains to show monotonicity of~$\oplus^n_{c_1,c_2}$
(as far as the formula $\exists y\varphi$ is concerned).
We need to establish that, if $\exists y\varphi\in t_1\oplus^n_{c_1,c_2}t_2$,
then this still holds after increasing $t_1$~or~$t_2$.
This follows from the fact
that the sets $S_1'$~and~$S_2'$ (defined analogously to~$S_1'$)
are monotone in $t_1$ and~$t_2$.

\smallskip
Finally, let us consider a formula of the form $\forall Y\varphi$
with $m := \qr(\varphi) < n$.
This time let $S_1$~be the set of $X$-positive $m$-types
of expansions of~$\fT_1$ by some unary predicate $P\subseteq T_1$ interpreted for~$Y$,
and let $S_2$~be the respective set for~$\fT_2$.
Using the equality
\begin{align*}
  (\fT,P) = (\fT_1,P\cap T_1) +_{c_1,c_2} (\fT_2,P\cap T_2)
\end{align*}
we obtain, similarly to the case above, that
\begin{align*}
  \forall Y\varphi\in t
  \quad\text{iff}\quad
  \varphi\in t''_1\oplus^m_{c_1,c_2}t''_2
  \quad\text{for all } t''_1\in S_1 \text{ and } t''_2\in S_2\,.
\end{align*}

Let us call a pair $(S'_1,S'_2)$
\emph{good for $t_1,t_2$,} if the following conditions hold\?:
\begin{itemize}
\item $S_1'$~is a set of $X$-positive $m$-types
  of the vocabulary used for expansions of $\fT_1$ by~$Y$ and
  $S_2'$~is a corresponding set for of $\fT_2$.
\item $\forall Y\bigvee_{s_1\in S'_1}\bigwedge s_1 \in t_1$
  and $\forall Y\bigvee_{s_2\in S'_2}\bigwedge s_2 \in t_2$.
\item For all $s_1\in S'_1$ and $s_2\in S'_2$
  we have $\varphi\in s_1\oplus^m_{c_1,c_2}s_2$.
\end{itemize}
If $\forall Y\varphi\in t$, then $(S_1,S_2)$ is good, whence a good pair exists.
We claim that the converse also holds, i.e., that
the existence of a good pair implies $\forall Y\varphi \in t$.
Thus, we obtain a characterisation of whether $\forall Y\varphi\in t$
solely in terms of $t_1$,~$t_2$, and~$\oplus^m_{c_1,c_2}$.
Furthermore, being good for $t_1,t_2$
is clearly monotone in $t_1$~and~$t_2$.

To prove the claim, suppose that $\parlr{S'_1,S'_2}$ is a good pair
and let $t''_1\in S_1$ and $t''_2\in S_2$ be arbitrary.
We need to show that $\varphi\in t''_1\oplus^m_{c_1,c_2}t''_2$.
Fix a predicate~$P_1$ such that $t''_1=\mtype m{\fT_1,P_1}$.
By the second condition on good pairs, we have
$\parlr{\fT_1,P_1}\models\bigvee_{s_1\in S'_1}\bigwedge s_1$.
Hence, there is some $s_1\in S'_1$ such that $\parlr{\fT_1,P_1}\models\bigwedge s_1$.
This implies that $s_1 \subseteq t''_1$.
Analogously, we obtain some $s_2\in S'_2$ such that $s_2\subseteq t''_2$.
By the third condition on good pairs, we have $\varphi\in s_1\oplus^m_{c_1,c_2}s_2$.
Therefore, monotonicity of $\oplus^m_{c_1,c_2}$ implies that $\varphi\in t''_1\oplus^m_{c_1,c_2}t''_2$.
\end{proof}

The previous proof needed to consider different vocabularies.
From now on, a single vocabulary nearly suffices.
Let $\tau$ be a fixed tree vocabulary without any constant symbols.
Let $X$ be a unary relation symbol and $x$ a constant symbol
such that $x,X \notin \tau$.
We will consider fixed points with respect to $X$~and~$x$.
The fixed points are evaluated in trees of vocabulary~$\tau$.
Stages of the fixed-point induction are evaluated
in trees of vocabulary $\tauX$.
In order to determine whether a single tree node
belongs to some iteration for the fixed point,
we consider trees of vocabulary $\tauXx$.
If $x$ is present in the vocabulary,
its interpretation can be thought of as the root of the tree.

Let $\varphi$ be a $\tauXx$-formula positive in~$X$
and let $n$ be the quantifier rank of~$\varphi$.

\begin{cor}
Let $y \notin \tauXx$ be a new constant symbol.
We define a binary operation~$\add^n$
on $X$-positive $n$-types of $\tauXx$-structures by
\begin{align*}\textstyle
  s \add^n t :=
  \bigl(s \oplus_{x,y}^n \subst{t}{y}{x}\bigr) \cap \MSO^n_X[\tauXx]\,.
\end{align*}
The operation~$\add^n$ is monotone and satisfies
\begin{align*}
  \mtype n{\fS \add_x \fT} = \mtype n{\fS} \add^n \mtype n{\fT}\,,
\end{align*}
for all non-empty tree structures $\fS$~and~$\fT$ of vocabulary~$\tauXx$.
\qed\end{cor}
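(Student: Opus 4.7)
The plan is to reduce the statement directly to Proposition~\ref{prop:Feferman-Vaught} by unfolding the definition of the tree operation $\add_x$. Recall that $\fS \add_x \fT$ was introduced as the $\tauXx$-reduct of $\fS +_{x,y} \subst{\fT}{y}{x}$, where $y$ is a fresh constant symbol outside~$\tauXx$.

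First I would establish the equation $\mtype n{\fS \add_x \fT} = \mtype n{\fS} \add^n \mtype n{\fT}$ in three small steps. The reduct step: since $X$-positive $n$-formulae over $\tauXx$ do not mention $y$, their truth is insensitive to the interpretation of $y$, hence
\begin{align*}
\mtype n{\fS \add_x \fT} = \mtype n{\fS +_{x,y} \subst{\fT}{y}{x}} \cap \MSO^n_X[\tauXx]\,.
\end{align*}
The Feferman--Vaught step: Proposition~\ref{prop:Feferman-Vaught} applied to $\fS$ and $\subst{\fT}{y}{x}$ (which do not share any constant symbol) yields
\begin{align*}
\mtype n{\fS +_{x,y} \subst{\fT}{y}{x}} = \mtype n{\fS} \oplus^n_{x,y} \mtype n{\subst{\fT}{y}{x}}\,.
\end{align*}
The renaming step: since relabelling $x$ to $y$ throughout the structure commutes with evaluation of formulae in which the same relabelling is carried out, we have $\mtype n{\subst{\fT}{y}{x}} = \subst{\mtype n{\fT}}{y}{x}$. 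Chaining the three equalities gives exactly the defining expression for $s \add^n t$ instantiated at $s = \mtype n{\fS}$ and $t = \mtype n{\fT}$.

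For monotonicity, I would simply observe that each of the three ingredients is monotone in its inputs: renaming the constant~$x$ to~$y$ is syntactic and preserves set inclusion on type representations, $\oplus^n_{x,y}$ is monotone by Proposition~\ref{prop:Feferman-Vaught}, and intersecting with the fixed set $\MSO^n_X[\tauXx]$ preserves inclusions. Computability of $\add^n$ is inherited from the computability of $\oplus^n_{x,y}$ together with the syntactic nature of substitution and intersection.

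There is no substantial obstacle in the argument; the corollary is essentially a routine packaging of Proposition~\ref{prop:Feferman-Vaught} into a form suited for working with rooted trees and the $\add_x$ operation. The only point requiring a moment of care is the reduct step, where one must note that $X$-positivity is a syntactic condition on occurrences of the predicate~$X$ that is entirely unaffected by dropping the unrelated constant~$y$, so restricting to the smaller vocabulary indeed corresponds exactly to intersection with $\MSO^n_X[\tauXx]$.
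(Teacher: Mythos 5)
Your proof is correct and is precisely the routine unfolding that the paper intends by stating this corollary with no written proof: reduce to Proposition~\ref{prop:Feferman-Vaught} via the definition of $\add_x$ as a reduct of $+_{x,y}$ applied to a renamed copy, with monotonicity and computability inherited componentwise. The three steps (reduct as intersection with $\MSO^n_X[\tauXx]$, Feferman--Vaught composition, and commutation of constant renaming with type formation) are exactly the right ones, and no gap remains.
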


We extend $\add^n$ by adjoining the $X$-positive $n$-type~$\triangle$ of the empty tree
as a right-neutral element.
This does not hurt monotonicity\?:
without loss of generality, assume that $n\geq 1$.
Then only~$\triangle$ contains $\forall y\bot$
and only this type does not contain $\exists y\top$,
so it is incomparable to any other type.

In the first part, which contains the technical heart of the article,
we will only consider ternary trees, that is,
undirected trees where each node has degree at most~$3$.
We assume that each such tree~$\fT$ is implicitly equipped
with an edge-colouring using $3$~colours $\{1,2,3\}$.
That means that, for every colour~$d$,
each vertex~$v$ of~$\fT$ has at most one neighbour
that is connected to~$v$ via an edge of colour~$d$.
We call this neighbour ``the neighbour of~$v$ in \emph{direction}~$d$''
and we denote it by~$v^d$.
If there is no such neighbour, we set $v^d := \triangle$.

To account for missing neighbours we extend the above
definition of~$\fT_{vw,x}$
by setting $\fT_{v\triangle,x} := (\fT,v)$
and letting $\fT_{\triangle w,x} := \triangle$.
Furthermore, let $\fT_{\{v\}} := (\fT \restriction \{v\},v)$.
With this notation we have
\begin{align*}
  (\fT,v) = \fT_{\{v\}}
  \add_x \fT_{v^1v,x} \add_x \fT_{v^2v,x} \add_x \fT_{v^3v,x}\,,
\end{align*}
where we assume that the operation~$\add_x$ is associative to the left.

We also need a variant of Proposition~\ref{prop:Feferman-Vaught}
that concerns a decomposition into a possibly infinite number of subtrees.
We omit the proof, which is similar to that of Proposition~\ref{prop:Feferman-Vaught}.
\begin{prop}\label{prop:Feferman-Vaught II}
Let $\fT$ be a $\tauXx$-tree and $\parlr{v_1,d_1},\parlr{v_2,d_2},\ldots$
a sequence of pairwise distinct pairs $\parlr{v_i,d_i}$,
such that $v_i\in T$ and $v_i^{d_i}=\triangle$.
Further, let $\fS_1,\fS_2,\ldots$ and $\fS'_1,\fS'_2,\ldots$
be sequences of $\tauXx$-trees
such that $\mtype n{\fS_i}=\mtype n{\fS'_i}$ for all~$i$.
Finally, let $\fU$~be the tree obtained from~$\fT$
by adding~$\fS_i$ as a child of~$v_i$
in direction~$d_i$ for all~$i$,
and define~$\fU'$ analogously using~$\fS'_i$ instead of~$\fS_i$.
Then, $\mtype n{\fU}=\mtype n{\fU'}$.
\qed\end{prop}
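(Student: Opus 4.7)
My plan is to proceed by induction on $n$, in close analogy with the proof of Proposition~\ref{prop:Feferman-Vaught}, exploiting the fact that although the index set of attachments may be infinite, there are only finitely many $X$-positive $m$-types for each $m$, so the contribution of the infinite family factors through a finite profile of types. The base case $n=0$ is straightforward: atomic and negated atomic formulae over $\tauXx$ only involve constants that live in $\fT$ (the root constants of the $\fS_i$ are absorbed by the attachment operation, just as with $\add_x$), so $\fU$ and $\fU'$ agree on all atoms.

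For the inductive step I would analyse subformulae by the four shapes treated in the proof of Proposition~\ref{prop:Feferman-Vaught}, with the remaining cases of $\exists Y$ and $\forall y$ handled by combining the two techniques, as before. For $\exists y\varphi$ with $m := \qr(\varphi) < n$, a witness $a$ lies either in $\fT$ or in some $\fS_i$. When $a \in T$, the pointed tree $(\fU,a)$ is obtained from $(\fT,a)$ by the same family of attachments, so the induction hypothesis yields $\mtype m{\fU,a} = \mtype m{\fU',a}$. When $a \in S_i$, view $\fU$ as attaching $(\fS_i, a)$ onto the tree obtained from $\fT$ by all the other attachments; Proposition~\ref{prop:Feferman-Vaught} then expresses the resulting $m$-type in terms of $\mtype m{\fS_i,a}$ and the $m$-type of this ``rest'' tree, the latter of which agrees on both sides by the induction hypothesis. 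Since $\mtype n{\fS_i} = \mtype n{\fS'_i}$, and since the set of realisable $m$-types of expansions by $a$ is captured, via the $S'_1$-trick from the proof of Proposition~\ref{prop:Feferman-Vaught}, by a set that depends only on the $n$-type, the set of $m$-types realised by $(\fU,a)$ over $a \in U$ coincides with the corresponding set for $\fU'$. The case $\forall Y\varphi$ is dual: an expansion $P \subseteq U$ decomposes as $P_0 := P \cap T$ together with $P_i := P \cap S_i$ for each $i$, and the $m$-type of the expanded $(\fU,P)$ factors through the profile assigning to each $i$ the $m$-type $\mtype m{(\fS_i,P_i)}$, together with $\mtype m{(\fT,P_0)}$ and the iterated Feferman--Vaught composition.

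The main obstacle is handling these profiles when the index set is infinite. The key observation is that $\mType m$ is finite up to logical equivalence, so each profile is a finite-valued function on a possibly infinite index set. Using an index-set-wise version of the ``good pair'' construction from the proof of Proposition~\ref{prop:Feferman-Vaught}, the collection of profiles realisable by expansions across the $\fS_i$'s is determined by the family $(\mtype n{\fS_i})_i$ alone (and similarly for $\fS'_i$). Since these families coincide pointwise by hypothesis, the realisable profiles on the $\fU$-side and the $\fU'$-side agree, and iterated application of Proposition~\ref{prop:Feferman-Vaught} at the level of single subtrees lifts this to equality $\mtype n{\fU} = \mtype n{\fU'}$. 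The delicate step, technically, is to verify that the positivity constraints in $X$ do not spoil the finiteness argument — but this is exactly what the superset $S'_1$ and the ``good pair'' conditions were designed to handle in the proof of Proposition~\ref{prop:Feferman-Vaught}, and the arguments carry over to index-set-wise profiles with only notational change.
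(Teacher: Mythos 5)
The paper does not actually spell out a proof of Proposition~\ref{prop:Feferman-Vaught II}; it only records that the argument is ``similar to that of Proposition~\ref{prop:Feferman-Vaught}''. Your plan --- induction on $n$ over all vocabularies, case split on the outermost quantifier, and taming the infinite index set via the finiteness of $\mType{m}$ together with the $S_1'$-superset and good-pair devices --- is exactly the intended elaboration of that remark, and the $\exists y$ case as you describe it goes through.

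There is, however, one step that does not ``carry over with only notational change'' and that your sketch leaves genuinely open: in the $\forall Y$ (and $\exists Y$, $\forall y$) cases you need the statement of Proposition~\ref{prop:Feferman-Vaught II} itself in a \emph{monotone} form. Concretely, given $P'\subseteq U'$ with pieces $P_0'\subseteq T$ and $P_i'\subseteq S_i'$, the formula $\forall Y\bigvee_{s}\bigwedge s$ (disjunction over the finitely many $m$-types realised by expansions of $\fS_i$) lies in $\mtype n{\fS_i}=\mtype n{\fS_i'}$ and only yields sets $P_i\subseteq S_i$ with $\mtype m{\fS_i,P_i}\subseteq\mtype m{\fS_i',P_i'}$ --- equality is precisely what positivity in $X$ forbids. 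To conclude $\varphi\in\mtype m{\fU',P'}$ from $\varphi\in\mtype m{\fU,P}$ you then need: pointwise containment $\mtype m{\fS_i,P_i}\subseteq\mtype m{\fS_i',P_i'}$ of the attached components implies $\mtype m{\fU,P}\subseteq\mtype m{\fU',P'}$. This is the monotonicity of the \emph{infinite} composition in its profile, and it is not delivered by ``iterated application of Proposition~\ref{prop:Feferman-Vaught}'', since only finitely many applications of the binary monotone $\oplus^m$ are available while the attachments may be infinite in number. The fix is the same move made inside the proof of Proposition~\ref{prop:Feferman-Vaught}, where monotonicity of $\oplus^n$ is proved alongside the composition identity: strengthen the induction hypothesis to ``$\mtype n{\fS_i}\subseteq\mtype n{\fS_i'}$ for all $i$ implies $\mtype n{\fU}\subseteq\mtype n{\fU'}$'' and verify this containment version in every case of the induction. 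With that strengthening in place, your argument is complete; without it, the quantifier cases do not close.
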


\section{Tilings}
\label{sect:tilings}

We are now in a position to provide a second, more precise proof outline.
Given a tree structure~$\fT$ of vocabulary~$\tau$,
we consider the fixed-point induction of~$\varphi$.
For every stage~$\alpha$ and every vertex~$v$ of~$\fT$
we consider the type $\tp^n_X(\fT,\varphi^\alpha(\fT),v)$.
We annotate~$\fT$ with all these types.
At each vertex~$v$ we write down the list of these types for all stages~$\alpha$.
These annotations can be used to determine the fixed-point rank of elements of~$\fT$.
A vertex~$v$ enters the fixed point at stage~$\alpha$ if the $\alpha$-th entry
of the list is the first one containing a type~$t$ with $Xx \in t$.

We can regard the annotation as consisting of several layers, one for each stage of the
induction. At a vertex~$v$ each change between two consecutive layers is caused
by some change at some other vertex in the previous step. In this way we can trace
back changes of the types through the various layers.

In order to determine whether the fixed-point inductions of the formula are bounded,
we construct a \emph{weighted automaton} (see Section~\ref{sect:automata} below)
that recognises (approximations of) such annotations and that computes
(an approximation of) the length of the longest path of changes in the annotation.

Actually, the annotations we use do not consist of single types but
of tuples of such types, called a \emph{tile.}
In this section we consider single layers of such tiles.
In the next section we will then introduce annotations
consisting of several such layers.

\begin{defi}
A \defn{letter} is a one-element $\taux$-tree.
\end{defi}
Observe that, for each letter~$\fL$,
there are exactly two $\tauXx$-expansions of~$\fL$\?:
one where the element belongs to~$X$ and one where it does not.
Let us denote their $X$-positive $n$-types
by $1_\fL$~and~$0_\fL$, respectively.
Note that $0_\fL \subseteq 1_\fL$ and that $Xx\in 1_\fL \smallsetminus 0_\fL$,
for every $\fL$. We omit the index~$\fL$ whenever it is irrelevant.

We can decompose a $\tauX$-tree~$\fT$ into its one-element
substructures~$\fT_{\{v\}}$, i.e., its letters.
Each of these letters~$\fT_{\{v\}}$ can be labelled with
its type and the types of the subtrees $\fT_{v^dv}$.
\begin{center}
\includegraphics{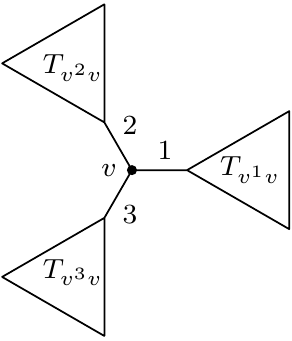}
%
%
%
%
%
%
\end{center}
For convenience, we will not only use the types $t_{\mi0}$ and $t_{\mi d}$
of $\fT_{\{v\}}$ and $\fT_{v^dv}$, $d = 1,2,3$, respectively,
but also the types~$t_{\mo d}$ of $\fT_{vv^d}$, $d = 1,2,3$
and the type~$t_{\mo4}$ of the whole tree $(\fT,v)$.
Our intuition regards the vertex~$v$ as a processing unit
that receives as its inputs the types $t_{\mi 0}$, $t_{\mi 1}$, $t_{\mi 2}$, $t_{\mi 3}$
and produces as output the types $t_{\mo 1}$, $t_{\mo 2}$, $t_{\mo 3}$, $t_{\mo 4}$.
The vertex~$v$ receives from its neighbours~$v^d$, $d = 1,2,3$,
the inputs~$t_{\mi d}$ and it passes back to~$v^d$ the outputs~$t_{\mo d}$.

\begin{defi}
\textup{(a)}
Let $\fL$ be a letter.
An \defn{$\fL$-tile} is an $8$-tuple
\begin{align*}
  \parlr{t_{\mi0},\dots,t_{\mi3},t_{\mo1},\dots,t_{\mo4}}
\end{align*}
of $X$-positive $n$-types over $\tauXx$ where
\begin{itemize}
\item $t_{\mi0}\in\setlr{0_{\fL},1_{\fL}}$,
\item $t_{\mo1} = t_{\mi0} \add^n t_{\mi2} \add^n t_{\mi3}$,
\item $t_{\mo2} = t_{\mi0} \add^n t_{\mi1} \add^n t_{\mi3}$,
\item $t_{\mo3} = t_{\mi0} \add^n t_{\mi1} \add^n t_{\mi2}$, and
\item $t_{\mo4} = t_{\mi0} \add^n t_{\mi1} \add^n t_{\mi2} \add^n t_{\mi3}$.
\end{itemize}
If we do not want to mention the letter,
we refer to an $\fL$-tile simply as a \defn{tile.}
When $\gamma$~is a tile,
we denote its components by $\gamma_{\mi0}$ through $\gamma_{\mo4}$.

\textup{(b)}
Let $\fT$ be a $\tau$-tree.
A \defn{$\fT$-tiling} is a mapping~$c$ that assigns to each
vertex $v \in T$ a $\fT_{\setlr v}$-tile $c(v)$.

\textup{(c)}
Let $\fT$~be a $\tauX$-tree.
The \emph{canonical tiling}~$t_\fT$ of~$\fT$ is the function
assigning to a vertex~$v$ the tile
\begin{alignat*}{-1}
    t_\fT(v)_{\mi 0} &:= \tp_X^n(\fT_{\{v\}})\,, \qquad
  & t_\fT(v)_{\mi d} &:= \tp_X^n(\fT_{v^dv})\,,
  &&\qquad\text{for } 1 \leq d \leq 3\,, \\
    t_\fT(v)_{\mo 4} &:= \tp_X^n(\fT,v)\,, \qquad
  & t_\fT(v)_{\mo d} &:= \tp_X^n(\fT_{vv^d})\,,
  &&\qquad\text{for } 1 \leq d \leq 3\,.
\end{alignat*}
\end{defi}

Intuitively the $\mi d$-component of a tile
contains information \emph{incoming} from direction~$d$,
whereas the $\mo d$-component contains the information passed on in that direction.
Similarly, the $\mo 4$-component contains information passed on to the next stage.
The $\mi 0$-component is special, since it contains local information about the current vertex.

Note that the canonical tiling is indeed a tiling.
\begin{lem}\label{lem:tile is such}
Let $\fT$ be a $\tauX$-tree and $\fT_0$~its $\tau$-reduct.
Then $t_\fT$~is a $\fT_0$-tiling.
\end{lem}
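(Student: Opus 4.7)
The plan is to verify each of the five defining conditions of an $\fL$-tile for $\fL = (\fT_0)_{\setlr v}$ pointwise at each vertex $v \in T$, in each case by invoking the corollary of Proposition~\ref{prop:Feferman-Vaught} on an explicit decomposition of the relevant subtree of $\fT$.

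First I would dispatch the $\mi0$-condition. Since $\fT$ is a $\tauX$-tree, $\fT_{\setlr v}$ is one of the two $\tauXx$-expansions of the letter $\fL = (\fT_0)_{\setlr v}$, namely the one determined by whether $v \in X^\fT$. Hence $t_\fT(v)_{\mi0} = \tp^n_X(\fT_{\setlr v}) \in \setlr{0_\fL, 1_\fL}$ by the very definition of $0_\fL$ and $1_\fL$.

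For the four remaining conditions, the idea is that each of the output types in $t_\fT(v)$ is the type of a tree that decomposes, via iterated $\add_x$ at the root $v$, into $\fT_{\setlr v}$ together with a subset of the subtrees $\fT_{v^dv,x}$. Concretely, the identity $(\fT,v) = \fT_{\setlr v} \add_x \fT_{v^1v,x} \add_x \fT_{v^2v,x} \add_x \fT_{v^3v,x}$ already stated in the excerpt immediately gives the $\mo4$-condition upon applying the corollary of Proposition~\ref{prop:Feferman-Vaught} repeatedly. For the $\mo d$-conditions with $d \in \setlr{1,2,3}$, I would observe that the connected component $\fT_{vv^d}$ of $v$ after deleting the edge $vv^d$ is obtained from $(\fT,v)$ by cutting off the $d$-th subtree, so that
\begin{align*}
  (\fT_{vv^d},v) = \fT_{\setlr v} \add_x \fT_{v^{d_1}v,x} \add_x \fT_{v^{d_2}v,x}\,,
\end{align*}
where $\setlr{d_1,d_2} = \setlr{1,2,3} \smallsetminus \setlr d$. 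Taking $X$-positive $n$-types on both sides and applying the corollary yields exactly the required equation $t_\fT(v)_{\mo d} = t_\fT(v)_{\mi0} \add^n t_\fT(v)_{\mi d_1} \add^n t_\fT(v)_{\mi d_2}$.

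The one minor point requiring care is the handling of missing neighbours, i.e.\ directions $d$ with $v^d = \triangle$: there the decomposition picks up the extended value $\fT_{\triangle v,x} = \triangle$, whose type is the neutral $\triangle$ already adjoined to $\add^n$. Since $\triangle$ is right-neutral and the paper stipulates that $\add_x$ is associative to the left, the decomposition identities above remain literally true in all arities, and the same $\add^n$-equalities follow uniformly. I do not expect any real obstacle; the whole argument is a bookkeeping exercise on top of the Feferman--Vaught composition established in the previous section.
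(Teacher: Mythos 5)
Your proof is correct and follows essentially the same route as the paper's: the $\mi0$-condition is immediate from the definition of $0_\fL$ and $1_\fL$, and each $\mo d$-condition is obtained by decomposing the relevant subtree via iterated $\add_x$ at $v$ and applying the corollary to Proposition~\ref{prop:Feferman-Vaught}. Your explicit remark about missing neighbours and the right-neutrality of $\triangle$ is a small point the paper leaves implicit, but it does not change the argument.
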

\begin{proof}
Let $v \in T$.
Since $\fT_{\setlr v}$ is an expansion of $\fL := (\fT_0)_{\{v\}}$,
its type $t_\fT(v)_{\mi0}$ must be one of $0_\fL$ and~$1_\fL$.

For the equalities concerning $t_\fT(v)_{\mo d}$ with $1\leq d\leq 3$,
we may by symmetry assume that $d = 3$. Then
\begin{align*}
  t_\fT(v)_{\mo3}
  &= \mtype n{\fT_{vv^3}} \\
  &= \mtype n{
     \fT_{\setlr v}
     \add_x \fT_{v^1v}
     \add_x \fT_{v^2v}} \\
  &= \mtype n{\fT_{\setlr v}}
     \add^n \mtype n{\fT_{v^1v}}
     \add^n \mtype n{\fT_{v^2v}} \\
  &= t_\fT(v)_{\mi0}
     \add^n t_\fT(v)_{\mi1}
     \add^n t_\fT(v)_{\mi2}\,,
\end{align*}
as desired.
The equality for~$\mo4$ is obtained similarly.
\end{proof}

Not every tiling stems from an actual tree.
In the next definition we collect some simple consistency properties
a tiling should satisfy.
Note that these properties can be checked by an automaton.

\begin{defi}
Let $\fT$ be a $\tau$-tree and $v \in T$ a vertex.

\textup{(a)}
The \defn{orientation of~$\fT$ towards $v$}
is the mapping $o_v: T \to \setlr{1,\ldots,4}$
such that $\app{o_v}v=4$ and, for vertices $w\in T\smallsetminus\setlr v$,
we define $1\leq \app{o_v}w\leq 3$ such that the neighbour $w^{o_v(w)}$ is closer to~$v$ than~$w$.
\begin{center}
\includegraphics{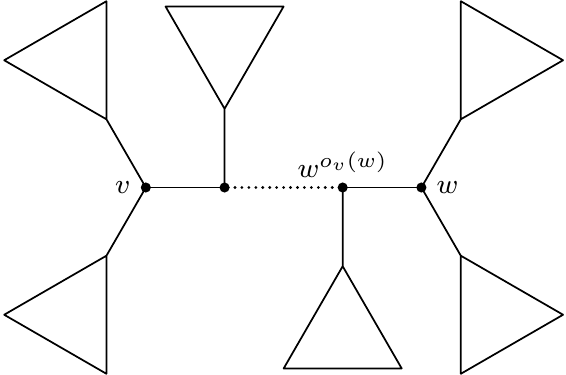}
%
%
%
%
%
%
%
%
%
\end{center}

\textup{(b)}
A $\fT$-tiling~$c$ is \defn{locally consistent towards $v$}
if, for all $w\in T$ and all directions $1\leq d\leq 3$ with $d \neq o_v(w)$,
we have
\begin{align*}
  c(w)_{\mi d} = \begin{cases}
                   c(w^d)_{\mo d} &\text{if } w^d \neq \triangle\,, \\
                   \triangle      &\text{otherwise}\,.
                 \end{cases}
\end{align*}

\textup{(c)}
A $\fT$-tiling~$c$ is \defn{globally consistent towards $v$}
if, for all vertices $w\in T$ and all directions $1\leq d\leq 3$ with $d \neq o_v(w)$,
we have
\begin{align*}
  c(w)_{\mi d} = \mtype n{(\fT,P)_{w^dw}}\,,
\end{align*}
where $(\fT,P)$ is the expansion of~$\fT$
by the set $P := \set{v\in T}{c(v)_{\mi0}=1}$ interpreted for~$X$.
\end{defi}

Of course, canonical tilings are globally consistent.
\begin{lem}\label{lem:tiling is such}
Let $\fT$~be a $\tau$-tree and $P \subseteq T$.
The $\fT$-tiling $t_{(\fT,P)}$ is
globally consistent towards each vertex $v\in T$.
\end{lem}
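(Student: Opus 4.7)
The plan is to reduce the required global consistency equations directly to the defining clauses of the canonical tiling, after first checking that the set~$P$ used to build the expansion can be faithfully recovered from the local $\mi0$-components of the tiling. Write $c := t_{(\fT,P)}$ and let $P' := \set{u \in T}{c(u)_{\mi0} = 1_\fL}$ be the set that appears in the definition of global consistency. The goal becomes
\[
  c(w)_{\mi d} = \mtype{n}{(\fT,P')_{w^d w}}
\]
for every $w \in T$ and every direction $d \neq o_v(w)$.

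First I would verify that $P' = P$. By definition of the canonical tiling, $c(u)_{\mi0} = \tp^n_X\bigl((\fT,P)_{\{u\}}\bigr)$. The one-element $\tauXx$-structure $(\fT,P)_{\{u\}}$ has its unique element in the interpretation of~$X$ exactly when $u \in P$, so its $X$-positive $n$-type is $1_\fL$ if $u \in P$ and $0_\fL$ otherwise (where $\fL$~is the underlying letter $\fT_{\{u\}}$). Hence $u \in P'$ iff $u \in P$, and the required equation simplifies to $c(w)_{\mi d} = \tp^n_X\bigl((\fT,P)_{w^d w}\bigr)$.

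At this point the proof is essentially complete: the right-hand side is precisely the clause defining $c(w)_{\mi d}$ in the canonical tiling of $(\fT,P)$ whenever $w^d \neq \triangle$, so equality holds on the nose. The only boundary case to handle is $w^d = \triangle$, where both sides collapse to the designated type $\triangle$ of the empty tree, by the convention $\fT_{\triangle w} := \triangle$ extended to types. There is no serious obstacle in this lemma\?: it is purely a bookkeeping check that the two pieces of information encoded by the canonical tiling — namely, the monadic predicate~$P$ read off from the $\mi0$-components and the subtree types stored in the $\mi d$-components — are consistent with one another, and the restriction to directions $d \neq o_v(w)$ is in fact not needed (the canonical tiling is globally consistent in every direction).
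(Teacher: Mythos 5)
Your proof is correct and follows essentially the same route as the paper's: both arguments reduce to checking that the set read off from the $\mi0$-components coincides with~$P$, after which the consistency equation for the $\mi d$-components is the defining clause of the canonical tiling. The extra remarks (the $w^d = \triangle$ boundary case and the observation that the restriction $d \neq o_v(w)$ plays no role here) are accurate but not needed.
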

\begin{proof}
We have already seen in Lemma~\ref{lem:tile is such} that $t_{(\fT,P)}$ is a $\fT$-tiling.
Let $v \in T$. For global consistency, note that
\begin{align*}
  P &= \bigset{v\in T}{ (\fT,P,v) \models Xx} \\
    &= \bigset{v\in T}{\mtype n{(\fT,P)_{\setlr v}}=1} \\
    &= \bigset{v\in T}{t_{\fT,P}(v)_{\mi0}=1}\,,
\end{align*}
as desired.
\end{proof}

Finally, let us show that global consistency implies
local consistency.
\begin{lem}\label{lem:global implies local for tiling}
Let $\fT$ be a $\tau$-tree and $v\in T$.
Every $\fT$-tiling that is globally consistent towards~$v$
is locally consistent towards~$v$.
\end{lem}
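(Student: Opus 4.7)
The plan is to show directly that for each vertex $w \in T$ and each direction $d \neq o_v(w)$, the value $c(w)_{\mi d}$ prescribed by global consistency coincides with the right-hand side of the local consistency equation. The case $w^d = \triangle$ is immediate: by the extension convention $\fT_{\triangle w,x} := \triangle$, global consistency already forces $c(w)_{\mi d} = \mtype n{(\fT,P)_{w^d w}} = \triangle$.

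For the nontrivial case, write $u := w^d$. The key geometric observation is that since the edge-colouring assigns a single colour to each edge, $u = w^d$ implies $w = u^d$; and since $d \neq o_v(w)$, the neighbour $u$ is further from $v$ than $w$, so the direction from $u$ to $v$ goes through $w$, giving $o_v(u) = d$. Thus for $u$, both of the remaining directions $e_1, e_2 \in \{1,2,3\} \setminus \{d\}$ satisfy the hypothesis of global consistency. I then decompose the subtree below $u$ (relative to removing edge $uw$) as
\begin{align*}
  (\fT,P)_{uw} = (\fT,P)_{\{u\}} \add_x (\fT,P)_{u^{e_1}u,x} \add_x (\fT,P)_{u^{e_2}u,x}\,,
\end{align*}
using the convention on $\triangle$ when some $u^{e_i}$ is missing. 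Applying the Feferman--Vaught operation $\add^n$ (and its extension to $\triangle$ as the right-neutral element) yields
\begin{align*}
  \mtype n{(\fT,P)_{uw}}
    = \mtype n{(\fT,P)_{\{u\}}}
      \add^n \mtype n{(\fT,P)_{u^{e_1}u}}
      \add^n \mtype n{(\fT,P)_{u^{e_2}u}}\,.
\end{align*}

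Now each factor on the right is exactly what we want it to be. Global consistency applied at $u$ in directions $e_1$ and $e_2$ gives $c(u)_{\mi e_i} = \mtype n{(\fT,P)_{u^{e_i}u}}$; and $c(u)_{\mi 0}$ equals the one-element type $\mtype n{(\fT,P)_{\{u\}}}$ by the very definition $P = \set{v \in T}{c(v)_{\mi 0} = 1}$ together with the fact that the local type of $\fT_{\{u\}}$ is completely determined by whether $u \in P$. Substituting these into the tile equation for $c(u)_{\mo d}$ (from the definition of an $\fL$-tile, the $\mo d$-component is exactly the $\add^n$-composition of $\mi 0$ and the two $\mi e_i$ components), I obtain $c(u)_{\mo d} = \mtype n{(\fT,P)_{uw}}$. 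Comparing this with global consistency at $w$, which states $c(w)_{\mi d} = \mtype n{(\fT,P)_{w^d w}} = \mtype n{(\fT,P)_{uw}}$, yields $c(w)_{\mi d} = c(w^d)_{\mo d}$, as required.

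The only delicate point I anticipate is the bookkeeping around orientations and colour-directions, specifically the symmetry $w^d = u \Leftrightarrow u^d = w$ and the deduction $o_v(u) = d$ from $d \neq o_v(w)$; once these are in place, and provided the $\triangle$-extension of $\add^n$ is handled uniformly with the non-trivial subtrees, the proof reduces to a single application of the Feferman--Vaught identity together with the two defining clauses of an $\fL$-tile. No new logical machinery beyond Proposition~\ref{prop:Feferman-Vaught} and Lemma~\ref{lem:tile is such} is needed.
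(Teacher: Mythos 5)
Your proof is correct and follows essentially the same route as the paper's: apply global consistency at $w$ to identify $c(w)_{\mi d}$ with the type of the subtree rooted at $u:=w^d$, decompose that subtree via the Feferman--Vaught operation $\add^n$, and recognise the factors as $c(u)_{\mi 0}$, $c(u)_{\mi e_1}$, $c(u)_{\mi e_2}$ so that the tile equation yields $c(u)_{\mo d}$. The only difference is cosmetic: you make the orientation bookkeeping ($o_v(u)=d$) explicit where the paper leaves it implicit.
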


\begin{proof}
Let $c$ be a $\fT$-tiling globally consistent towards~$v$
and let~$\fT'$ be the $\tauX$-expansion of~$\fT$ by the set
$P := \set{v\in T}{c(v)_{\mi0}=1}$.
Let $w\in T$ and $d \neq o_v(w)$ be given.
Without loss of generality, we may assume that $d=3$.
If $w^3 = \triangle$, then $c(w)_{\mi 3}$ is the type of
$\fT'_{\triangle v} = \triangle$.
Otherwise, let $u := w^3 \neq \triangle$.
Since $c$~is a $\fT$-tiling, $c(u)_{\mi 0}$~is either $0_{\fT_{\{u\}}}$
or~$1_{\fT_{\{u\}}}$. By definition of~$\fT'$ it follows that
$c(u)_{\mi 0} = \mtype{n}{\fT'_{\{u\}}}$. Consequently,
\begin{align*}
  c(w)_{\mi 3}
   = \mtype n{\fT'_{uw}}
  &= \mtype n{\fT'_{\setlr u}}
     \add^n \mtype n{\fT'_{u^1u}}
     \add^n \mtype n{\fT'_{u^2u}} \\
  &= c(u)_{\mi0} \add^n c(u)_{\mi1} \add^n c(u)_{\mi2} \\
  &= c(u)_{\mo 3}\,.
\end{align*}
\end{proof}

\section{Annotations}
\label{sect:annots}

Ideally we would like to annotate a given tree with one tiling for each stage
of the fixed-point induction. Since this is an infinite amount of data we have
to opt for something less\?: at each vertex of the tree we do not store the
full sequence of tiles for each stage, but only a shortened sequence
obtained by removing all duplicates. This is a finite amount of information
we can label the tree with.
The drawback of this method is that, by removing duplicates,
we lose synchronisation between the sequences from adjacent vertices.
Here are the formal definitions.

For a $\tau$-tree~$\fT$ and an ordinal~$\alpha$,
let $\fT^{\alpha} := \parlr{\fT,\app{\varphi^{\alpha}}{\fT}}$
be the $\tauX$-expansion of~$\fT$
by the $\alpha$th stage of the fixed-point induction.
Similarly, we set $\fT^\alpha_{vw,x} := (\fT^\alpha)_{vw,x}$
and $\fT^\alpha_{\{v\}} := (\fT^\alpha)_{\{v\}}$.

We extend the order $\subseteq$ on $X$-positive $n$-types
to tiles by requiring that $\subseteq$ holds component-wise.

\begin{defi}
\textup{(a)}
Let $\fL$ be a letter.
An \defn{$\fL$-history} is a strictly increasing sequence
$h=\parlr{h^0 \subsetneq \ldots \subsetneq h^m}$
of $\fL$-tiles such that
\begin{enumerate}
\item $h^0_{\mi0} = 0_{\fL}$ and
\item $h^{i+1}_{\mi0}=1_{\fL}$ iff $\varphi\in h^i_{\mo4}$, for $0\leq i<m$.
\end{enumerate}
The number~$m$ is the \defn{length} of the history, denoted $\len h$.

\textup{(b)}
Let $\fT$ be a $\tau$-tree and $v \in T$ a vertex.
\defn{The history} of~$\fT$ at~$v$
is the sequence~$h_\fT(v)$ of tiles $t_{\fT^{\alpha}}(v)$,
for all ordinals~$\alpha$, with duplicates removed.
\end{defi}

\begin{exa}\label{ex:paths}
For simplicity, we give an example of a fixed-point induction on
a path, instead of a tree, i.e., a tree where no vertex has a neighbour
in direction~$3$.
We consider the fixed-point of the formula $\varphi(X,x)$
stating that
\begin{align*}
  x^1 = \triangle \quad\text{or}\quad
  x^2 = \triangle \quad\text{or}\quad
  T_{x^1x} \subseteq X \quad\text{or}\quad
  T_{x^2x} \subseteq X\,.
\end{align*}
Figure~\ref{fig:annotated word} shows the histories of the
first $4$~elements of a finite path of length at least~$9$.
All further elements, except for the last two,
have the same history as the third and fourth elements.
Here, we assume that the edges are alternatingly labelled by $1$~and~$2$
and the tiles are drawn in the format
\begin{center}
\begin{tabular}{|c|c|c|c|}
\hline
$\mo4$&$\mo1$&$\mo2$&$\mo3$\\
\hline
$\mi0$&$\mi1$&$\mi2$&$\mi3$\\
\hline
\end{tabular}
\end{center}
where
\begin{itemize}\parskip=0pt\itemsep=0pt%
\item $\triangle$~denotes the type of the empty tree,
\item $\phi$~denotes any type containing $\phi$,
\item $\times$~denotes any type not containing $\phi$,
\item $\forall$~denotes any type containing the formula $\forall yXy$,
\item $\exists$~denotes any type containing $\exists yXy$,
  but not $\forall yXy$, and
\item $-$ denotes any type not containing the formula $\exists yXy$. 
\end{itemize}
\begin{figure}\centering
\includegraphics{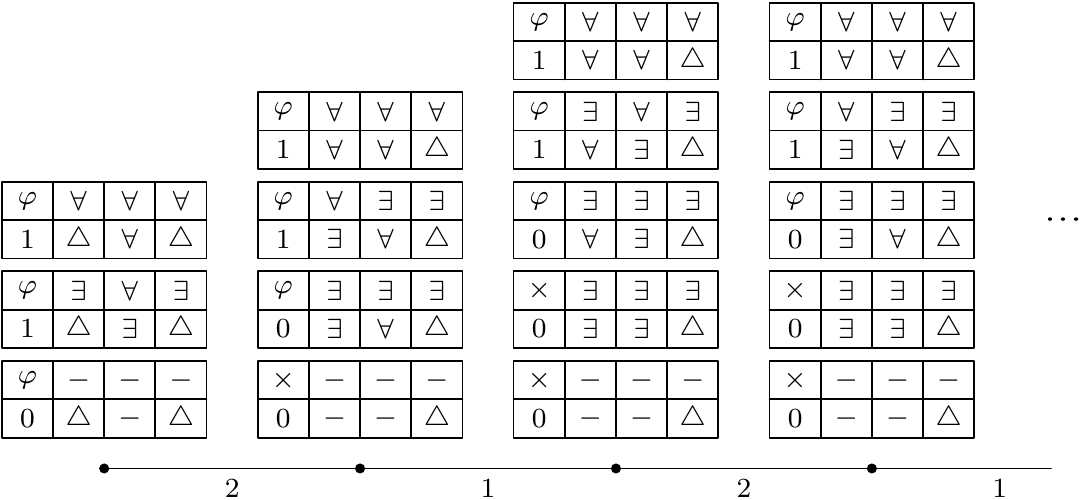}
\caption{Annotation for $\varphi(X,x)$\label{fig:annotated word}}
\end{figure}
\end{exa}

Of course, the history of~$\fT$ at~$v$ is indeed a history.
\begin{lem}\label{lem:history is such}
Let $\fT$~be a $\tau$-tree and $v \in T$ a vertex.
Then $\app{h_{\fT}}v$ is a $\fT_{\setlr v}$-history.
\end{lem}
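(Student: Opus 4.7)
The plan is to verify that $h_\fT(v)$, the sequence of canonical tiles $t_{\fT^\alpha}(v)$ with duplicates removed, satisfies the three defining properties of an $\fT_{\setlr v}$-history. Write $\fL := \fT_{\setlr v}$ for brevity. First, by Lemma~\ref{lem:tile is such}, each $t_{\fT^\alpha}(v)$ is an $\fL$-tile. For the sequence to be monotone in $\alpha$: the stages $\varphi^\alpha(\fT)$ are non-decreasing, so the interpretation of $X$ in $\fT^\alpha$ only grows; since each component of the tile is an $X$-positive $n$-type, and such types grow monotonically with the interpretation of $X$, the tile grows component-wise with $\alpha$. Because there are only finitely many $X$-positive $n$-types, removing duplicates yields a finite, strictly increasing chain $h^0 \subsetneq \cdots \subsetneq h^m$.

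For condition~(1), $h^0 = t_{\fT^0}(v)$, and since $\varphi^0(\fT) = \emptyset$, the vertex $v$ does not belong to $X^{\fT^0}$, so $h^0_{\mi 0} = 0_\fL$.

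For condition~(2), the pivotal observation is that $t_{\fT^{\alpha+1}}(v)_{\mi 0} = 1_\fL$ iff $v \in \varphi^{\alpha+1}(\fT)$ iff $(\fT^\alpha,v) \models \varphi$ iff $\varphi \in t_{\fT^\alpha}(v)_{\mo 4}$. Fix $i < m$ and let $\alpha^*$ be the least ordinal with $t_{\fT^{\alpha^*}}(v) = h^{i+1}$. If $\alpha^* = \beta + 1$ is a successor, then minimality of $\alpha^*$ combined with the monotone chain structure forces $t_{\fT^\beta}(v) = h^i$, and the observation applied at $\beta$ yields the desired equivalence directly. If $\alpha^*$ is a limit, then $t_{\fT^\beta}(v) = h^i$ for all sufficiently large $\beta < \alpha^*$, so applying the observation to any such $\beta$ (with $\beta + 1 < \alpha^*$) gives $h^i_{\mi 0} = 1_\fL$ iff $\varphi \in h^i_{\mo 4}$. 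It then remains to show $h^{i+1}_{\mi 0} = h^i_{\mi 0}$: since $v \in \varphi^{\alpha^*}(\fT) = \bigcup_{\beta < \alpha^*}\varphi^\beta(\fT)$ iff $v \in \varphi^\beta(\fT)$ for some $\beta < \alpha^*$, the $\mi 0$-component at $\alpha^*$ coincides with its (eventually constant) value just below $\alpha^*$, namely $h^i_{\mi 0}$.

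The main subtlety I expect is the limit-ordinal case, where the tile can jump at $\alpha^*$ even though the fixed-point membership of $v$ itself does not change there; this is resolved by exploiting the fact that the $\mi 0$-component is binary, so any transition in it must already manifest at an earlier successor stage where the ``predecessor observation'' applies directly.
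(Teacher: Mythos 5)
Your proof is correct and follows essentially the same route as the paper: both rest on Lemma~\ref{lem:tile is such}, on monotonicity of $X$-positive types for the increasing-chain property, and on the key equivalence $v \in \varphi^{\beta+1}(\fT) \Leftrightarrow \varphi \in t_{\fT^\beta}(v)_{\mo4}$ for condition~(2). The only difference is presentational: you split the minimal witnessing ordinal $\alpha^*$ into successor and limit cases, whereas the paper handles both at once by noting that elements enter the fixed point only at successor stages and then invoking $\mtype n{\fT^\beta,v} \subseteq h^i_{\mo4}$ for all $\beta < \alpha^*$.
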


\proof
Let $h := \app{h_{\fT}}v$.
We have already seen in Lemma~\ref{lem:tile is such}
that each $h^i$ is a $\fT_{\setlr v}$-tile.
The sequence is increasing,
because we are considering positive (hence monotone) types
and the sequence $\app{\varphi^{\alpha}}{\fT}$ is increasing.
It is strictly increasing because we have removed duplicates.
For ordinals~$\alpha$, let $\app k{\alpha}$ be the index
at which the $\alpha$th stage appears in~$h$, i.e.,
$h^{\app k{\alpha}} = t_{\fT^{\alpha}}(v)$.
As the sequence is increasing, so is~$k$.

Since $\varphi^0(\fT) = \emptyset$, we have
\begin{align*}
  h^0_{\mi0}
  = h^{k(0)}_{\mi0}
  = t_{\fT^0}(v)_{\mi0}
  = \mtype n{\fT^0_{\setlr v}}
  = 0_{\fL}\,.
\end{align*}
For $0\leq i<\len h$, let $\alpha$ be the minimal ordinal with $k(\alpha) = i+1$.
Then
\begin{align*}
  h^{i+1}_{\mi0}=1
  \quad\text{iff}\quad \mtype n{\fT^\alpha_{\setlr v}}=1
  \quad\text{iff}\quad v\in \varphi^\alpha(\fT)\,.
\end{align*}
Since elements enter the fixed point only at successor stages, we have
\begin{align*}
  v\in \varphi^\alpha(\fT)
  &\quad\text{iff}\quad v \in \varphi^{\beta+1}(\fT) \quad\text{for some } \beta < \alpha\,, \\
  &\quad\text{iff}\quad (\fT^\beta,v) \models \varphi \\
  &\quad\text{iff}\quad \varphi\in\mtype n{\fT^\beta,v} \subseteq
  h^i_{\mo4}\,.\rlap{\hbox to 144 pt{\hfill\qEd}}
\end{align*}

\noindent We would like to annotate each vertex~$v$ of a tree~$\fT$ by the sequence
$(t_{\fT^\alpha}(v))_\alpha$. To obtain a finite object, we have to remove
duplicates and, therefore, we work with the history $h_\fT(v)$ instead.
For each~$\alpha$, we would like to have an automaton that can recover the
tiling~$t_{\fT^\alpha}$ from~$h_\fT$.
In general, this is not possible.

For instance, in Example~\ref{ex:paths} the `real' tilings $t_{\fT^\alpha}(v)$
for a path~$\fT$ of even length are words of the form $ux^ny^nv$
where $y^nv$ is the `mirror image' of $ux^n$.
This language is not regular.

Hence, we use an approximation.
For each vertex~$v$, each index~$i$ of $h_\fT(v)$, and each direction~$d$,
we record the index~$j$ of $h_\fT(v^d)$ such that
$h_\fT(v^d)^j$ and $h_\fT(v)^i$ belong to the same ordinal~$\alpha$.
Of course, given~$i$, there are several choices of~$\alpha$
and, hence, of~$j$, so we lose information. It will turn out that
these two pieces of data,
the function~$h$ and the function $(v,i,d) \mapsto j$,
are sufficient for our purposes.

\begin{defi}
\textup{(a)}
An \defn{annotated tree} is a tuple $\parlr{\fT,h,s}$, where
\begin{enumerate}
\item $\fT$ is a $\tau$-tree,
\item $h$ is a mapping that assigns to each vertex $v\in T$ a $\fT_{\setlr v}$-history $h(v)$, and
\item $s$ is a mapping assigning a natural number $s(v,i,d)$
  to each vertex $v\in T$, each index $0\leq i\leq\len{\app hv}$, and every direction $1\leq d\leq 3$
  with $v^d \neq \triangle$.
\end{enumerate}
We call $h$~the \defn{history map} and $s$~the \defn{synchronisation} of the annotated tree.

\textup{(b)}
Let $(\fT,h,s)$ be an annotated tree.
For $v\in T$ and $0\leq i\leq\len{\app hv}$,
the \defn{section} at $v,i$
is the tiling~$c$ defined inductively as follows\?:
\begin{enumerate}
\item $\app cv:=\app hv^i$.
\item For $w\in T\smallsetminus\setlr v$, let $u:=w^{\app{o_v}w}$.
  We assume by induction that $\app cu$ is already defined.
  Let $j$ be the index such that $\app cu=\app hu^j$.
  Then we set $\app cw:=\app hw^{\app s{u,j,\app{o_v}w}}$.
\end{enumerate}
\end{defi}

\noindent Of course, not every annotated tree $(\fT,h,s)$
encodes the `real' fixed-point induction.
The next definition collects some necessary conditions.
\begin{defi}
Let $\cA = (\fT,h,s)$ be an annotated tree.

\textup{(a)}
$\cA$~is \defn{locally consistent}
if, for all vertices $v\in T$, indices $0\leq i\leq\len{\app hv}$, and directions $1\leq d\leq 3$
the following conditions are satisfied\?:
\begin{enumerate}
\item If $v^d=\triangle$,
  then $\app hv^i_{\mi d} = \triangle$.
\item Otherwise, $\app s{v,i,d}\leq\len{\app h{v^d}}$
  and $\app hv^i_{\mi d}=\app h{v^d}^{\app s{v,i,d}}_{\mo d}$.
\end{enumerate}

\textup{(b)}
$\cA$~is \defn{globally consistent}
if it is locally consistent
and if, for all $v,i$ as above, the section at $v,i$
is globally consistent towards~$v$.
\end{defi}

\begin{lem}\label{lem: sections are locally consistent}
Let $(\fT,h,s)$ be a locally consistent annotated tree.
Every section~$c$ at some $v,i$ is locally consistent towards~$v$.
\end{lem}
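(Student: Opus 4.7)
The plan is to fix an arbitrary vertex $w\in T$ and a direction $d\neq o_v(w)$, and to verify the two clauses in the definition of local consistency of the tiling~$c$ at that choice of $w$ and~$d$. The key observation powering the whole argument is that every value of~$c$ is literally an entry of some history: by the recursive definition of the section, $c(w)=h(w)^k$ for some index~$k$, with $k=i$ in the base case $w=v$, and $k=s(u,j,o_v(w))$ otherwise, where $u=w^{o_v(w)}$ and $j$ is the index for which $c(u)=h(u)^j$. That is the only structural information about~$c(w)$ I shall need. If $w^d=\triangle$, then clause~(1) in the definition of local consistency of $(\fT,h,s)$ immediately yields $h(w)^k_{\mi d}=\triangle$, hence $c(w)_{\mi d}=\triangle$, as demanded.

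For the remaining case $w^d\neq\triangle$, I would set $w':=w^d$ and argue as follows. Because $d\neq o_v(w)$, the neighbour $w'$ lies strictly farther from~$v$ than~$w$, so $w$ is its unique $v$-ward neighbour; since the edge $ww'$ carries colour~$d$, we have $o_v(w')=d$ and ${w'}^{o_v(w')}=w$. Applying the inductive clause of the section definition at $w'$ therefore gives $c(w')=h(w')^{s(w,k,d)}$, with the \emph{same} index~$k$ that already presents $c(w)$ inside $h(w)$. Now clause~(2) of the local consistency of $(\fT,h,s)$ applied at $(w,k,d)$ reads
\begin{align*}
  h(w)^k_{\mi d}=h(w^d)^{s(w,k,d)}_{\mo d},
\end{align*}
which is exactly $c(w)_{\mi d}=c(w')_{\mo d}$, the required equality.

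The only subtlety, and the step I would flag as the main bookkeeping task, is the alignment of indices: when $d\neq o_v(w)$, the section value at the child $w'=w^d$ must be computed through~$w$ as its $v$-ward parent, so the index that appears in the recursive definition of $c(w')$ is precisely the~$k$ for which $c(w)=h(w)^k$. Once that match is noted, the lemma is a direct rewriting of the two clauses in the local-consistency condition imposed on $(\fT,h,s)$.
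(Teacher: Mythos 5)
Your proof is correct and takes essentially the same route as the paper's: identify the index $k$ with $c(w)=h(w)^k$, split on whether $w^d=\triangle$, and in the non-trivial case observe that $c(w^d)=h(w^d)^{s(w,k,d)}$ by the recursive definition of the section, so that clause~(2) of local consistency of $(\fT,h,s)$ rewrites directly to $c(w)_{\mi d}=c(w^d)_{\mo d}$. The only difference is that you spell out the index alignment (that $o_v(w^d)=d$ and the parent index used for $c(w^d)$ is the same~$k$), which the paper leaves implicit.
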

\begin{proof}
Let $v,w \in T$ be distinct vertices, $d \neq o_v(w)$, and let $i$~be the index
such that $c(w) = h(w)^i$.
Then we have $w^d=\triangle$ and $c(w)_{\mi d}=h(w)^i_{\mi d}=\triangle$, or
\begin{align*}
  c(w)_{\mi d} = h(w)^i_{\mi d} = h(w^d)^{s(w,i,d)}_{\mo d} = c(w^d)_{\mo d}\,.
\end{align*}
\end{proof}

We have not yet defined the `real annotation' of a tree.
In fact, due to the choices involved in defining
the synchronisation there are several possible `real' annotations.
We obtain them by fixing an ordinal~$\beta$ and selecting
that synchronisation that
selects from among all possible choices
the stage that is closest to~$\beta$.

\begin{defi}\label{def: betasyndef}
Let $\fT$ be a $\tau$-tree and $\beta<\omega$.
We denote by $\cA_\beta(\fT)$
the annotated tree $\parlr{\fT,h_{\fT},s}$
where the synchronisation~$s$ is defined as follows.
For $v\in T$, $0\leq i\leq\len{\app{h_{\fT}}v}$,
and $1\leq d\leq 3$ with $w:=v^d \neq \triangle$,
we define $\app s{v,i,d}$ such that
\begin{align*}
  \app{h_{\fT}}w^{\app s{v,i,d}}=t_{\fT^{\alpha}}(w)\,,
\end{align*}
where the ordinal~$\alpha$ is chosen as follows\?:
\begin{enumerate}
\item if $\app{h_{\fT}}v^i = t_{\fT^{\beta}}(v)$, then $\alpha=\beta$,
\item if $\app{h_{\fT}}v^i\subsetneq t_{\fT^{\beta}}(v)$,
  then $\alpha\leq\beta$ is maximal
  such that $t_{\fT^{\alpha}}(w)_{\mo d}=\app{h_{\fT}}v^i_{\mi d}$, and
\item if $\app{h_{\fT}}v^i\supsetneq t_{\fT^{\beta}}(v)$,
  then $\alpha\geq\beta$ is minimal
  such that $t_{\fT^{\alpha}}(w)_{\mo d}=\app{h_{\fT}}v^i_{\mi d}$.
\end{enumerate}
\end{defi}

We start with a technical lemma containing a monotonicity property
for the sections of an annotation.
\begin{lem}\label{lem:sections monotone}
Let $\fT$~be a tree with vertices $v,w \in T$,
let $c$~be the section of $\cA_\beta(\fT)$ at $v,i$,
and set $d := o_v(w)$.
\begin{enumerate}
\item[\normalfont(a)] If $c(w) = t_{\fT^\beta}(w)$,
  then $c(u) = t_{\fT^\beta}(u)$, for all $u \in T_{ww^d}$.
\item[\normalfont(b)] Let $\alpha < \beta$. If $t_{\fT^\alpha}(w) \subseteq c(w)$, then
  $t_{\fT^\alpha}(u) \subseteq c(u)$, for all $u \in T_{ww^d}$.
\end{enumerate}
(Here we set $T_{ww^4} := T$.)
\end{lem}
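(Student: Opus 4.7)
I would prove both parts by a single simultaneous induction on the distance from $w$ to $u$ inside the subtree $T_{ww^d}$, where $d = o_v(w)$; by the stated convention $T_{ww^4} = T$ when $w = v$, and then the base case reduces to the top-level definition $c(v) = h(v)^i$. The base case $u = w$ holds by the hypothesis of each part. For the inductive step, let $u' \in T_{ww^d}$ with $u' \neq w$, set $e := o_v(u')$ and $u := u'^e$. Then $u$ is closer to $v$ (and still in $T_{ww^d}$), so the inductive hypothesis applies to it. If $j$ is the index with $c(u) = h(u)^j$, then the section definition yields $c(u') = h(u')^{s(u,j,e)}$, and the whole task reduces to reading off which stage is selected by the three-case rule in Definition~\ref{def: betasyndef}.

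For part~(a), the inductive hypothesis $c(u) = t_{\fT^\beta}(u)$ means $h(u)^j = t_{\fT^\beta}(u)$, placing us in case~(1) of Definition~\ref{def: betasyndef}. The synchronisation then selects stage~$\beta$ and delivers $c(u') = t_{\fT^\beta}(u')$ directly.

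For part~(b), the inductive hypothesis is $t_{\fT^\alpha}(u) \subseteq h(u)^j$. In case~(1) we have $c(u') = t_{\fT^\beta}(u')$, and since $\alpha < \beta$ the monotonicity of the canonical tiling in its stage (which in turn follows from positivity of types in~$X$) yields $t_{\fT^\alpha}(u') \subseteq c(u')$. In case~(3), where $h(u)^j \supsetneq t_{\fT^\beta}(u)$, the selected stage $\gamma$ satisfies $\gamma \geq \beta > \alpha$, and monotonicity again gives the conclusion.

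The real work is case~(2), where $h(u)^j \subsetneq t_{\fT^\beta}(u)$ and the synchronisation picks the maximal $\gamma \leq \beta$ with $t_{\fT^\gamma}(u')_{\mo e} = h(u)^j_{\mi e}$; here we must produce an explicit such $\gamma$ that dominates $\alpha$. Since case~(2) forces $j < \len{h(u)}$, the tile $h(u)^{j+1}$ exists and is strictly larger than $h(u)^j$. Let $\eta_j$ be the largest ordinal with $t_{\fT^{\eta_j}}(u) = h(u)^j$; this is finite and $\eta_j < \beta$. Combining $t_{\fT^\alpha}(u) \subseteq h(u)^j = t_{\fT^{\eta_j}}(u)$ with $t_{\fT^{\eta_j+1}}(u) \supsetneq h(u)^j$ along the monotone sequence of canonical tilings forces $\alpha \leq \eta_j$. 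Moreover, because canonical tilings are globally, hence locally, consistent (Lemmas~\ref{lem:tiling is such} and~\ref{lem:global implies local for tiling}), we have $t_{\fT^{\eta_j}}(u')_{\mo e} = t_{\fT^{\eta_j}}(u)_{\mi e} = h(u)^j_{\mi e}$, so $\eta_j$ witnesses the condition defining $\gamma$. Hence $\gamma \geq \eta_j \geq \alpha$ and monotonicity gives $t_{\fT^\alpha}(u') \subseteq t_{\fT^\gamma}(u') = c(u')$, completing the induction.
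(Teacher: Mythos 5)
Your proof is correct and follows essentially the same route as the paper's: induction on the distance from $w$ (equivalently from $v$), reading off at each step which stage the synchronisation of Definition~\ref{def: betasyndef} selects. The only difference is that you spell out the case analysis behind the paper's one-line claim that $c(u) \supseteq t_{\fT^\alpha}(u)$ forces the selected stage to be at least $\alpha$ — in particular the $\eta_j$ argument for case~(2) — which the paper leaves implicit.
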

\begin{proof}
We only prove the claims for $w \neq v$. The argument for $w = v$ is similar.
We prove both claims by induction on the distance between $u \in T_{ww^d}$
and~$v$. The claims are immediate for $u=w$.

For the inductive step assume that the claims hold for~$u$
and let $u'$ be a neighbour of~$u$ which is further away from~$v$ than~$u$,
so that $u = (u')^{o_v(u')}$.
It follows that $u'=u^{d'}$ for some $d' \neq o_v(u)$.
Let $i$~be the index such that $c(u) = h_\fT(u)^i$.
By definition of $c$~and~$s$, respectively, we have
\begin{align*}
  c(u') = h_{\fT}(u')^{s(u,i,d')} = t_{\fT^{\alpha'}}(u')\,,
\end{align*}
for some ordinal~$\alpha'$.

For (a), using the inductive hypothesis,
we have $h_\fT(u)^i = c(u) = t_{\fT^\beta}(u)$,
which implies that $\alpha' = \beta$.
Hence, $c(u') = t_{\fT^\beta}(u')$.

Similarly, for (b), we have $h_\fT(u)^i = c(u) \supseteq t_{\fT^\alpha}(u)$,
which implies that $\alpha' \geq \alpha$.
Hence, $c(u') \supseteq t_{\fT^\alpha}(u')$.
\end{proof}

Let us also show that $\cA_\beta(\fT)$ is always globally consistent.
\begin{lem}\label{lem:annotated tree is such}
For all $\tau$-trees $\fT$ and all $\beta<\omega$,
$\cA_\beta(\fT)$ is a globally consistent annotated tree.
\end{lem}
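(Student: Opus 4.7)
The proof naturally splits into two parts: first, verify that $\cA_\beta(\fT)$ is a well-defined annotated tree that is locally consistent; second, show that every section of $\cA_\beta(\fT)$ is globally consistent towards its designated root. The first part is essentially bookkeeping, whereas the second is where the real work lies.

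For the first part, Lemma~\ref{lem:history is such} already gives that $h_\fT$ is a history map, so only the synchronisation~$s$ needs checking. I would run through the three cases of Definition~\ref{def: betasyndef}. In Case~1 the index $j$ with $h_\fT(w)^j = t_{\fT^\beta}(w)$ exists because $h_\fT(w)$ enumerates all distinct tiles $t_{\fT^\alpha}(w)$. In Case~2 (resp.~Case~3), writing $h_\fT(v)^i = t_{\fT^{\alpha'}}(v)$ for some $\alpha' < \beta$ (resp.~$\alpha' > \beta$), the ordinal $\alpha'$ itself witnesses the existence of the required maximum (resp.~minimum), since $t_{\fT^{\alpha'}}(w)_{\mo d} = t_{\fT^{\alpha'}}(v)_{\mi d} = h_\fT(v)^i_{\mi d}$ by local consistency of canonical tilings (Lemmas~\ref{lem:tiling is such} and~\ref{lem:global implies local for tiling}). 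Once $s$ is defined, local consistency of $\cA_\beta(\fT)$ is immediate from the choice rule: writing $s(v,i,d) = j$ with $h_\fT(w)^j = t_{\fT^\alpha}(w)$, we get $h_\fT(w)^j_{\mo d} = t_{\fT^\alpha}(w)_{\mo d} = h_\fT(v)^i_{\mi d}$ directly.

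For the second part, fix a section $c$ at $v,i$. By construction of sections and by the observation above that every value of~$s$ selects an index $j$ with $h_\fT(u)^j = t_{\fT^{\alpha_u}}(u)$ for some ordinal~$\alpha_u$, there exist ordinals $(\alpha_u)_{u \in T}$ with $c(u) = t_{\fT^{\alpha_u}}(u)$. Set $P := \{u \in T : c(u)_{\mi 0} = 1\}$; equivalently $u \in P$ iff $u \in \varphi^{\alpha_u}(\fT)$. The goal is to verify, for every $w \in T$ and every direction $d \neq o_v(w)$ with $w^d \neq \triangle$, the identity
\begin{align*}
  c(w)_{\mi d} \;=\; \mtype{n}{(\fT,P)_{w^d w}}.
\end{align*}
Lemma~\ref{lem: sections are locally consistent} together with $c(w^d)_{\mo d} = \mtype{n}{\fT^{\alpha_{w^d}}_{w^d w}}$ reduces this to
\begin{align*}
  \mtype{n}{\fT^{\alpha_{w^d}}_{w^d w}} \;=\; \mtype{n}{(\fT,P)_{w^d w}}.
\end{align*}
Both trees share the underlying tree $\fT_{w^d w}$ but interpret~$X$ differently: one uses the single stage $\varphi^{\alpha_{w^d}}(\fT)$, the other uses the per-vertex set $P$ in which each vertex $u$ appears according to its own $\alpha_u$.

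To establish the type identity, I would decompose both sides at $w^d$ by Proposition~\ref{prop:Feferman-Vaught} (or Proposition~\ref{prop:Feferman-Vaught II}), observing that the letter-type $c(w^d)_{\mi 0}$ agrees in the two decompositions by the very definition of~$P$. For each direction $d' \neq d$, local consistency combined with $c(w^d) = t_{\fT^{\alpha_{w^d}}}(w^d)$ yields $\mtype{n}{\fT^{\alpha_{w^d}}_{w^{d'} w^d}} = c(w^d)_{\mi d'} = c(w^{d'})_{\mo d'} = \mtype{n}{\fT^{\alpha_{w^{d'}}}_{w^{d'} w^d}}$, reducing the matching of the $d'$-subtree to the same identity at the neighbour $w^{d'}$. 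Simultaneously, Lemma~\ref{lem:sections monotone} ensures that the stages $\alpha_u$ shift in a controlled, monotone fashion across the tree, so the propagation is coherent.

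The step I expect to be the main obstacle is closing this recursion for \emph{infinite} trees. For finite trees one obtains the identity by a straightforward induction from the leaves of $\fT_{w^d w}$ via the finite Feferman--Vaught lemma. For infinite trees such induction is unavailable, and I would appeal to the genuinely infinitary Proposition~\ref{prop:Feferman-Vaught II} to perform, in a single step, the simultaneous substitution of possibly infinitely many subtrees at matching dangling positions, thereby reading off the required type identity globally rather than by traversal. The careful interplay between the monotonicity of Lemma~\ref{lem:sections monotone} and the substitution principle of Proposition~\ref{prop:Feferman-Vaught II} is precisely what makes the construction $\cA_\beta(\fT)$ work.
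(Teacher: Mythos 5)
Your first part (well-definedness of the synchronisation and local consistency of $\cA_\beta(\fT)$) is correct and follows the same lines as the paper. The gap is in the second part. You correctly reduce global consistency to the type identity $\mtype n{\fT^{\alpha(w)}_{ww^d}} = \mtype n{\fT'_{ww^d}}$ (where $\fT'$ is the expansion of $\fT$ by~$P$ and $d = o_v(w)$), and you correctly observe that the naive vertex-by-vertex recursion does not terminate on infinite trees. But your proposed fix --- a single, global application of Proposition~\ref{prop:Feferman-Vaught II} --- cannot work as stated: that proposition only lets you exchange hanging subtrees whose types are \emph{already known} to coincide, and the equality of those boundary types is precisely the statement you are trying to prove. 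You never exhibit a well-founded quantity that closes the recursion; the appeal to Lemma~\ref{lem:sections monotone} (``the stages shift in a controlled, monotone fashion'') does not supply one, since that lemma only propagates equalities and inclusions \emph{within} a region where the stage is fixed, not across the edges where $\alpha(\cdot)$ changes.

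The missing idea is an induction on the ordinal distance between $\alpha(w)$ and~$\beta$. The minimality/maximality clauses of Definition~\ref{def: betasyndef} guarantee that whenever the stage assignment changes along an edge pointing away from~$v$, say from $x$ to $y$ with $\alpha(y) \neq \alpha(x)$, the new value $\alpha(y)$ lies strictly between $\beta$ and $\alpha(x)$, so the distance to~$\beta$ strictly decreases. The paper therefore takes $\fS$ to be the maximal subtree of $\fT'_{ww^d}$ containing~$w$ on which $\alpha$ is constant, applies the inductive hypothesis (on this distance) to each subtree hanging off the boundary of~$\fS$ to obtain the required type equalities there, and only then invokes Proposition~\ref{prop:Feferman-Vaught II} once for this region, with $\fS$ as the core. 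The base case $\alpha(w) = \beta$ is where Lemma~\ref{lem:sections monotone}\,(a) is actually used. Without this well-founded measure your recursion has no base case on infinite trees, so the argument as proposed is incomplete.
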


\proof
We have seen in Lemma~\ref{lem:history is such} that $h_\fT(v)$ is a
$\fT_{\{v\}}$-history.
Hence, $\cA_\beta(\fT)$ is an annotated tree.
For local consistency, fix $v,i,d$ and let $\alpha$~be the ordinal
from the definition of~$s$
at~$v$ in $\cA_\beta(\fT)$ (cf.~Definition~\ref{def: betasyndef}).
Then
\begin{align*}
  h_\fT(v^d)^{s(v,i,d)}_{\mo d} = t_{\fT^\alpha}(v^d)_{\mo d} = h_\fT(v)^i_{\mi d}\,,
\end{align*}
as desired.

It remains to prove global consistency.
Fix a vertex $v\in T$ and an index $0\leq i\leq\len{\app{h_{\fT}}v}$,
and let $c$~be the section at~$v,i$.
For $w\in T$, let $\app{\alpha}w$ be the ordinal closest to~$\beta$
such that $\app cw = t_{\fT^{\app{\alpha}w}}(w)$.
Let $\fT'$ be the expansion of $\fT$
by the set $P := \set{w\in T}{\app cw_{\mi0}=1}$.
We need to show that
$\app cw_{\mi d}=\mtype n{\fT'_{w^dw}}$,
for all $w\in T$ and $d \neq \app{o_v}w$.
(Here, $\fT'_{v^4v}:=\fT'$.)
By local consistency of~$c$
(which holds by Lemma~\ref{lem: sections are locally consistent}),
it is sufficient to show that
$\app cw_{\mo d}=\mtype n{\fT'_{ww^d}}$,
for all $w\in T$ and $d:=\app{o_v}w$.
We do this by induction on the distance between $\alpha(w)$ and $\beta$.

First, suppose that $\alpha(w) = \beta$.
Then Lemma~\ref{lem:sections monotone}\,(a) implies that
$c(u) = t_{\fT^\beta}(u)$, for all $u \in T_{ww^d}$.
Consequently $\fT'_{ww^d} = \fT^\beta_{ww^d}$,
and hence $\mtype n{\fT'_{ww^d}} = t_{\fT^\beta}(w)_{\mo d} = c(w)_{\mo d}$, as desired.

It remains to consider the case that $\alpha(w) \neq \beta$.
By symmetry, we may assume that $\alpha(w) > \beta$.
Let $\fS$~be the maximal subtree of $\fT'_{ww^d}$
that contains the vertex~$w$ and
such that $\alpha(u) = \alpha(w)$ for all $u\in S$.
Let $\parlr{x_1,d_1},\parlr{x_2,d_2},\dots$ be the finite or infinite list
of all pairs $(x,d)$ such that $x \in S$ and $x^d \in T \smallsetminus S$.
Let $y_k := x_k^{d_k}$ be the missing neighbour.
Note that, by definition of~$s$, $\alpha(y_k)$ is the minimal ordinal~$\alpha$
such that
$\mtype n{\fT^\alpha_{y_kx_k}} = \mtype n{\fT^{\alpha(x_k)}_{y_kx_k}}$.
Hence, $\alpha(y_k) \leq \alpha(x_k)$ and it follows that
$\alpha(x_k) = \alpha(w)$ and $\alpha(y_k) < \alpha(w)$.
By local consistency and the inductive hypothesis, we have
\begin{align*}
  c(x_k)_{\mi d_k} = c(y_k)_{\mo d_k} = \mtype n{\fT'_{y_kx_k}}\,,
\end{align*}
while, by definition of~$S$, we have
\begin{align*}
  c(x_k)_{\mi d_k} = t_{\fT^{\alpha(w)}}(x_k)_{\mi d_k} = \mtype n{\fT^{\alpha(w)}_{y_kx_k}}\,.
\end{align*}
It follows that
$\mtype n{\fT'_{y_kx_k}}=\mtype n{\fT^{\app{\alpha}w}_{y_kx_k}}$
for each index~$k$.
As the subtrees of
$\fT'_{ww^d}$ and $\fT^{\app{\alpha}w}_{ww^d}$ induced by~$S$ agree,
we can use Proposition~\ref{prop:Feferman-Vaught II}
to deduce that
\[
  c(w)_{\mo d} = \mtype n{\fT^{\alpha(w)}_{ww^d}} = \mtype n{\fT'_{ww^d}}\,.\eqno{\qEd}
\]

\section{Ranks}
\label{sect:ranks}

It remains to compute the length of the fixed-point iteration from
a given annotated tree.
The goal essentially is to obtain an estimate for the stage of a designated
element of the fixed point\?; this estimate is extracted from an annotation
in terms of the weight of an accepting run of a weighted
automaton which checks consistency of the annotation.
The appropriate kind of weighted automata for this purpose will be
presented in the next section.

\begin{defi}
\textup{(a)}
Let $(\fT,h,s)$ be an annotated tree and $v \in T$ a node.
We say that there is a \defn{jump} at~$v$
if there is some index~$i$ such that
$h(v)^i_{\mi0} = 0$ and $h(v)^{i+1}_{\mi0} = 1$.
Observe that this value of $i$~is uniquely determined.
We call the jump a \defn{base jump} if $i=0$.

\textup{(b)}
Suppose that there is a jump at~$v$ that is not a base jump.
We say that this jump \defn{depends} on another jump at a node~$w$
if $c(w)_{\mi0}=0$ where $c$~is the section at $v$ and $i-1$.
The \defn{rank} of a jump is the minimal number of jumps
on any dependency chain from this jump to some base jump.

\textup{(c)}
An annotated tree $(\fT,h,s)$ is \defn{jump-consistent,}
if the set of vertices with a jump equals $\app{\varphi^{\infty}}{\fT}$.
\end{defi}

The notion of \emph{dependency} in~(b) may warrant some comment, because
the terminology could easily be misunderstood. What the criterion is meant to
capture is not that there must be a (causal or temporal) dependence of the
appearance of~$v$ in the fixed point on the (prior) appearance of~$w$\?;
rather, it says that such a dependence cannot be ruled out.
At least any~$w$ that $v$~does \emph{not} depend on in the sense of the definition can
have had no influence on the appearance of~$v$.
In this sense our dependency relation provides a generous upper bound
on any intuitive `real' dependency\?:
it may be useful to think of $w$~as a \emph{potential} trigger for~$v$.

Let us compare the rank of a jump with the stage of the corresponding vertex
in the fixed point (the stage at which the vertex enters the fixed point).
First, we show that in every annotation the latter bounds the former.
\begin{lem}\label{lem:stage bounds min rank}
Let $\cA=\parlr{\fT,h,s}$
be a globally consistent and jump-consistent annotated tree,
let $v$~be a vertex with a jump in~$\cA$, and $\alpha < \omega$.
If $v\in\app{\varphi^{\alpha}}{\fT}$,
then the rank of~$v$ is at most~$\alpha$.
\end{lem}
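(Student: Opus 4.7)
The plan is to induct on $\alpha$. The key observation, a direct consequence of global consistency together with the recursive tile equations, is the following: for any section $c$ at $(v,j)$, setting $P := \set{u\in T}{c(u)_{\mi0}=1}$, the component $c(v)_{\mo4}$ equals the $X$-positive $n$-type $\mtype n{(\fT,P,v)}$. Combined with the history condition $h(v)^{j+1}_{\mi0}=1 \Leftrightarrow \varphi\in h(v)^j_{\mo4}$, this translates into: the $\mi 0$-component at $v$ flips from $0$ to $1$ between indices $j$ and $j+1$ exactly when $v\in\varphi(\fT,P)$.

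For the base case $\alpha=1$, jump-consistency provides a (unique) jump at $v$, say at index $i$. Take the section at $(v,0)$ with associated set $P_0$. Since $v\in\varphi^1(\fT)=\varphi(\fT,\emptyset)$, monotonicity of $\varphi$ in $X$ yields $v\in\varphi(\fT,P_0)$, hence $\varphi\in h(v)^0_{\mo4}$, hence $h(v)^1_{\mi0}=1$ and so $i=0$; thus $v$ carries a base jump, of rank $1$. For the inductive step, suppose $v\in\varphi^{\alpha+1}(\fT)$. If $v\in\varphi^\alpha(\fT)$ the induction hypothesis already gives $\text{rank}(v)\leq\alpha$; otherwise $v\in\varphi(\fT,\varphi^\alpha(\fT))\setminus\varphi^\alpha(\fT)$. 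Let $i$ be the jump index of $v$. If $i=0$ we conclude as before. Assume $i\geq 1$ and consider the section $c$ at $(v,i-1)$ with set $P$. Since the history is increasing and $h(v)^i_{\mi0}=0$, we get $h(v)^{i-1}_{\mi0}=0$, and the history condition at $i-1$ gives $\varphi\notin h(v)^{i-1}_{\mo4}=c(v)_{\mo4}$, that is, $v\notin\varphi(\fT,P)$. Because $v\in\varphi(\fT,\varphi^\alpha(\fT))$, monotonicity of $\varphi$ in $X$ rules out the inclusion $\varphi^\alpha(\fT)\subseteq P$. Any $w\in\varphi^\alpha(\fT)\setminus P$ then has $c(w)_{\mi0}=0$, so by definition $v$'s jump depends on $w$'s jump; $w$ has a jump by jump-consistency, of rank $\leq\alpha$ by the induction hypothesis, so $\text{rank}(v)\leq\alpha+1$.

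The step requiring genuine thought is the choice of section, at $(v,i-1)$ rather than at $(v,i)$. The history condition at index $i-1$ --- namely the fact that $h(v)^i_{\mi0}=0$ --- translates, via global consistency, into $v\notin\varphi(\fT,P)$, and monotonicity of $\varphi$ then produces a witness $w\in\varphi^\alpha(\fT)\setminus P$ of precisely the sort demanded by the definition of dependency. Everything else is bookkeeping around these two facts.
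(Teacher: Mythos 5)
Your proof is correct and follows essentially the same route as the paper's: induction on $\alpha$, passing to the section at $(v,i-1)$, using global consistency to identify $h(v)^{i-1}_{\mo4}$ with $\mtype{n}{\fT,P,v}$, and then using monotonicity of $\varphi$ to extract a witness $w\in\varphi^{\alpha}(\fT)\smallsetminus P$ on which the jump at $v$ depends. The only cosmetic differences are that you handle the base case by showing the jump must be a base jump (rather than treating $\alpha=0$ vacuously) and you shift the induction index by one; the substance is identical.
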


\proof
We proceed by induction on~$\alpha$.
For $\alpha = 0$ there is nothing to do since
$\varphi^0(\fT) = \emptyset$.
Hence, we may assume that $\alpha > 0$ and that the claim already holds for smaller
ranks.
Let $i$ be the index such that $\app hv^i_{\mi0}=0$ and $\app hv^{i+1}_{\mi0}=1$.

If $i=0$, then there is a base jump at~$v$ and its rank is $1 \leq \alpha$.

For $i>0$, let $c$ be the section at $v,i-1$
and let $P:=\set{w\in T}{\app cw_{\mi0}=1}$.
From $\app hv^i_{\mi0}=0$ we conclude that $\varphi \notin \app cv_{\mo4}$.
As $c$ is globally consistent, it follows that $\varphi \notin \mtype n{\fT,P,v}$.
On the other hand, $\varphi\in\mtype n{\fT,\app{\varphi^{\alpha-1}}{\fT},v}$.
By monotonicity,
there must be some vertex $w \in \varphi^{\alpha-1}(\fT) \smallsetminus P$,
which, by jump-consis\-tency, has a jump.
As $w \notin P$, we have $\app cw_{\mi0}=0$. Consequently, $v$~depends on~$w$.
By inductive hypothesis, $w$~has rank at most $\alpha-1$.
Therefore, $v$~has rank at most~$\alpha$.
\qed

Some form of converse is true for annotations of the form $\cA_\beta(\fT)$.
\begin{lem}\label{lem:min rank bounds stage}
Let $\fT$ be a $\tau$-tree, $v \in \varphi^\infty(\fT)$, and $\alpha<\beta<\omega$.
If the rank of $v$ in $\cA_\beta(\fT)$ is at most $\alpha$,
then $v\in\app{\varphi^{\alpha}}{\fT}$.
\end{lem}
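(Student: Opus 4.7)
I will prove the claim by reducing it to the single-step implication\?: \emph{if $v_j$ depends on $v_{j+1}$ in $\cA_\beta(\fT)$ and $v_{j+1} \in \app{\varphi^{\gamma}}{\fT}$ for some $\gamma < \beta$, then $v_j \in \app{\varphi^{\gamma + 1}}{\fT}$.} Once this is established, the claim follows by walking along a minimum-length witnessing dependency chain $v = v_0, v_1, \dots, v_{\alpha - 1}$ with $v_{\alpha - 1}$ a base jump. At the base jump, property~(2) of histories together with $\app{h_\fT}{v_{\alpha-1}}^0 = t_{\fT^0}(v_{\alpha-1})$ forces $\varphi \in \app{h_\fT}{v_{\alpha - 1}}^0_{\mo 4} = \mtype n{(\fT,\emptyset),v_{\alpha-1}}$, so $v_{\alpha - 1} \in \varphi(\fT,\emptyset) = \app{\varphi^{1}}{\fT}$. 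Iterating the single-step implication back along the chain then delivers $v_j \in \app{\varphi^{\alpha - j}}{\fT}$ for each~$j$, and $j = 0$ is the desired conclusion.

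To verify the single-step implication, let $c_j$ be the section of $\cA_\beta(\fT)$ at $(v_j, i_j - 1)$ (with $i_j$ the pre-jump index of~$v_j$) and put $P_j := \set{u \in T}{\app{c_j}u_{\mi 0} = 1}$; dependency says $v_{j+1} \notin P_j$. Were the subsumption $t_{\fT^{\gamma}}(v_j) \subseteq \app{c_j}{v_j} = \app{h_\fT}{v_j}^{i_j - 1}$ to hold, Lemma~\ref{lem:sections monotone}\,(b) applied with $\alpha' = \gamma$ would give $\app{\varphi^{\gamma}}{\fT} \subseteq P_j$, contradicting $v_{j+1} \in \app{\varphi^{\gamma}}{\fT} \setminus P_j$. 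Hence the subsumption fails, and by monotonicity of the canonical tile sequence $(t_{\fT^{\alpha}}(v_j))_{\alpha}$ this forces $\gamma > \alpha_{i_j - 1}^{+}$, where $\alpha_{i_j - 1}^{+}$ denotes the largest stage at which $t_{\fT^{\alpha}}(v_j) = \app{h_\fT}{v_j}^{i_j - 1}$. Consequently $\app{h_\fT}{v_j}^{i_j} = t_{\fT^{\alpha_{i_j - 1}^{+} + 1}}(v_j)$, and property~(2) of histories yields $\varphi \in \app{h_\fT}{v_j}^{i_j}_{\mo 4} = \mtype n{(\fT, \app{\varphi^{\alpha_{i_j - 1}^{+} + 1}}{\fT}), v_j}$, which puts $v_j$ into $\app{\varphi^{\alpha_{i_j - 1}^{+} + 2}}{\fT} \subseteq \app{\varphi^{\gamma + 1}}{\fT}$, as required.

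The main obstacle is that the set $P_j$ carried by the section is typically not equal to $\app{\varphi^{\delta}}{\fT}$ for any single stage~$\delta$, so the stage of~$v_j$ cannot be read off from~$P_j$ directly. Lemma~\ref{lem:sections monotone}\,(b) is what sidesteps this, by providing the one-directional monotonicity $\app{\varphi^{\gamma}}{\fT} \subseteq P_j$ throughout the relevant subtree; this rules out dependencies on vertices of too small a stage, and property~(2) of histories then supplies the matching upper bound that closes the slack into a clean one-stage jump along the chain.
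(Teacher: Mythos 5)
Your proof is correct and follows essentially the same route as the paper's: the paper runs an induction on~$\alpha$ whose inductive step is exactly your single-step implication, proved the same way (contrapositive of Lemma~\ref{lem:sections monotone}\,(b) to force $c(v)\subsetneq t_{\fT^{\gamma}}(v)$, then the maximal stage realising the pre-jump tile together with property~(2) of histories to place $v$ in $\varphi^{\alpha'+2}(\fT)\subseteq\varphi^{\gamma+1}(\fT)$). Unrolling that induction into an explicit walk along a minimal dependency chain is only a presentational difference.
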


\proof
We proceed by induction on $\alpha$.
As all ranks are positive, $\alpha>0$.
Let $i$ be the index such that $\app{h_{\fT}}v^i_{\mi0}=0$ and $\app{h_{\fT}}v^{i+1}_{\mi0}=1$.
If $i=0$,
then $\varphi\in\app{h_{\fT}}v^0_{\mo4}=t_{\fT^0}(v)$.
Therefore, $v\in\app{\varphi^1}{\fT}\subseteq\app{\varphi^{\alpha}}{\fT}$ and we are done.
Hence we may assume that $i>0$, i.e., the jump at~$v$ is not a base jump.

Let $c$~be the section at $v,i-1$.
As the jump at~$v$ is not a base jump and its rank is finite,
there is some vertex~$w$ with a jump rank at most $\alpha-1$
such that $v$~depends on~$w$.
By the inductive hypothesis, we have $w\in \varphi^{\alpha-1}(\fT)$.
Consequently, $t_{\fT^{\alpha-1}}(w)_{\mi0}=1$.
On the other hand, we have $c(w)_{\mi0} = 0$ by choice of~$w$.
Therefore, $c(w) \subsetneq t_{\fT^{\alpha-1}}(w)$.
By Lemma~\ref{lem:sections monotone}, this implies that
\begin{align*}
  \app{h_{{\fT}}}v^{i-1} = c(v) \subsetneq t_{\fT^{\alpha-1}}(v)\,.
\end{align*}
It follows that $\alpha'<\alpha-1<\beta<\omega$
for any~$\alpha'$ such that $\app{h_{\fT}}v^{i-1}=\app{t_{\fT^{\alpha'}}}v$.
Therefore, there exists a maximal such ordinal~$\alpha'$ and,
moreover, $\alpha'+2\leq\alpha$.

As $i$~is maximal such that $\app{h_{\fT}}v^i_{\mi0}=0$,
it follows that $i-1$ is maximal
such that $\varphi\notin\app{h_{\fT}}v^{i-1}_{\mo4}$.
Accordingly, $\alpha'$~is maximal such that
$\varphi\notin\app{t_{\fT^{\alpha'}}}v_{\mo4}$.
Thus, $\varphi\in\app{t_{\fT^{\alpha'+1}}}v_{\mo4}$
and $\app{t_{\fT^{\alpha'+2}}}v_{\mi0}=1$.
It follows that $v\in\app{\varphi^{\alpha'+2}}{\fT}\subseteq\app{\varphi^{\alpha}}{\fT}$.
\qed

It follows that boundedness of the fixed-point iteration
is equivalent to the existence of a finite bound on the ranks of all annotations.
\begin{defi}
A \defn{proposal} is a tuple $\parlr{\fT,h,s,v}$,
where $\parlr{\fT,h,s}$
is a globally consistent and jump consistent annotated tree
and $v\in\app{\varphi^{\infty}}{\fT}$.
The \defn{rank} of such a proposal is the rank of the jump at~$v$ in $\parlr{\fT,h,s}$.
\end{defi}

\begin{prop}\label{prop: boundeness and ranks}
A formula~$\varphi$ is bounded over the class of all ternary trees
if, and only if, there is some number $N < \omega$
such that the rank of each proposal is at most~$N$.
\end{prop}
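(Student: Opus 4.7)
The plan is to derive each implication from one of the two rank--stage comparison lemmas already proved, with the canonical annotations $\cA_\beta(\fT)$ serving as the bridge between the logical notion of boundedness and the combinatorial notion of bounded rank.

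For the forward implication, suppose $\varphi$ is bounded over all ternary trees, say $\varphi^N(\fT) = \varphi^{N+1}(\fT)$ for all $\fT$, so in particular $\varphi^\infty(\fT) = \varphi^N(\fT)$. Given any proposal $(\fT,h,s,v)$, by definition $v \in \varphi^\infty(\fT) = \varphi^N(\fT)$, and the annotated tree $(\fT,h,s)$ is globally consistent and jump-consistent. Lemma~\ref{lem:stage bounds min rank} then applies and yields that the rank of the jump at~$v$ is at most~$N$. Thus $N$ is a uniform upper bound on the ranks of all proposals.

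For the converse, assume every proposal has rank at most~$N$, and fix an arbitrary ternary tree~$\fT$ together with a vertex $v \in \varphi^\infty(\fT)$; the goal is to show $v \in \varphi^N(\fT)$, which would give $\varphi^\infty(\fT) = \varphi^N(\fT)$ for all~$\fT$ and hence boundedness. Choose $\beta := N+1 < \omega$ and consider the canonical annotation $\cA_\beta(\fT) = (\fT,h_\fT,s)$ of Definition~\ref{def: betasyndef}. By Lemma~\ref{lem:annotated tree is such}, this annotated tree is globally consistent. For jump-consistency, note that, by construction of $h_\fT(v)$ as the duplicate-free sequence of stage tilings at~$v$, a jump occurs at~$v$ exactly when the $\mi 0$-entry $t_{\fT^\alpha}(v)_{\mi 0}$ transitions from $0_{\fL}$ to~$1_{\fL}$ at some ordinal $\alpha$, and this happens iff $v$ enters the fixed point, iff $v \in \varphi^\infty(\fT)$. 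Hence $(\fT,h_\fT,s,v)$ is a proposal.

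By the assumed bound, the rank of this proposal is at most~$N$. Since $N < \beta = N+1 < \omega$, Lemma~\ref{lem:min rank bounds stage} (applied with $\alpha := N$) yields $v \in \varphi^N(\fT)$, as required. The only subtlety is the choice of $\beta$: it has to be finite and strictly larger than the rank bound in order to invoke Lemma~\ref{lem:min rank bounds stage}, which is why $\beta = N+1$ is the natural pick. No other step presents a real obstacle, because the two lemmas on ranks have already done the genuinely technical work of relating rank to stage in both directions.
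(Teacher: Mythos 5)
Your proof is correct and follows essentially the same route as the paper: the forward direction via Lemma~\ref{lem:stage bounds min rank}, and the converse via $\cA_{N+1}(\fT)$ together with Lemma~\ref{lem:min rank bounds stage}. You even supply the jump-consistency check for $\cA_\beta(\fT)$ explicitly, which the paper leaves implicit when citing Lemma~\ref{lem:annotated tree is such}.
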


\proof
$(\Leftarrow)$ Suppose there exists a bound $N < \omega$ on the ranks of proposals.
Let $\fT$ be some ternary tree, and $v\in\app{\varphi^{\infty}}{\fT}$.
By Lemma~\ref{lem:annotated tree is such},
$\parlr{\cA_{N+1}(\fT),v}$ is a proposal.
By choice of~$N$, the rank of~$v$ is at most~$N$.
Hence, Lemma~\ref{lem:min rank bounds stage} implies that $v\in\app{\varphi^N}{\fT}$.
As $v$ was arbitrary, it follows that $\app{\varphi^N}{\fT}=\app{\varphi^{\infty}}{\fT}$.

$(\Rightarrow)$ Suppose that $\varphi$~is bounded by some number $N < \omega$.
Let $\parlr{\fT,h,s,v}$ be an arbitrary proposal.
Then $v \in \varphi^\infty(\fT) = \varphi^N(\fT)$
and Lemma~\ref{lem:stage bounds min rank} implies
that the rank of the proposal is at most~$N$.
\qed

\section{Weighted automata}
\label{sect:automata}
\label{sect:end I}

In order to decide the boundedness problem for $\MSO$
we reduce it to the so-called \emph{limitedness problem}
for a certain kind of weighted automaton.
These automata have \defn{$\Sigma$-labelled directed trees} as inputs.
Such a tree is a triple $\parlr{T,E,\lambda}$
where $\lambda:T\to\Sigma$ is a labelling of~$T$
and $\parlr{T,E}$ is a directed tree
(meaning that $E\cap E^{-1}=\emptyset$,
$\parlr{T,E\cup E^{-1}}$ is a tree structure,
and there is some $r\in T$ called the \defn{root} of the tree
such that $r \mathrel{E^*} t$ for all $t\in T$).

\begin{defi}
\textup{(a)}
A \emph{weighted parity automaton} $\cA = (Q,\Sigma,\Delta,I,\Omega,w)$ consists
of a finite \emph{state space}~$Q$, a finite \emph{input alphabet}~$\Sigma$,
a set $I \subseteq Q$ of \emph{initial states,}
a finite \emph{transition relation}
$\Delta \subseteq \Sigma\times\omega^Q\times Q$,
a \emph{priority function} $\Omega : Q \to \omega$,
and a \emph{weight function} $w: \Delta\to\omega$.

A weighted parity automaton~$\cA$ takes as input 
$\Sigma$-labelled directed trees $\parlr{T,E,\lambda}$.
Let $\pi_3 : \Sigma\times\omega^Q \times Q \to Q$ be the projection
to the third component.
A \emph{run} of $\cA$ on this tree is a mapping $\varrho:T\to\Delta$
satisfying, for all vertices $v \in T$, the following condition\?:
\begin{align*}
  \varrho(v) = (c,f,q)
  \quad\text{implies}\quad
  &c = \lambda(v) \text{ and } f \text{ is the function mapping } p \in Q
  \text{ to} \\
  &\text{the number of children } u \text{ of } v \text{ with }
   \pi_3(\varrho(u)) = p\,.
\end{align*}

A run~$\varrho$ is \emph{accepting,} if
\begin{itemize}
\item $\pi_3(\varrho(r)) \in I$ for the root~$r$ of~$\parlr{T,E,\lambda}$ and
\item for every branch $\beta$ of $(T,E)$, the limit
  \begin{align*}
    \liminf_{v \in \beta} \Omega(\pi_3(\varrho(v))) \quad\text{is even.}
  \end{align*}
\end{itemize}
The \emph{language} $L(\cA)$ recognised by~$\cA$
is the set of all $\Sigma$-labelled directed trees~$\parlr{T,E,\lambda}$
on which there is an accepting run of~$\cA$.

For a run~$\varrho$ on some tree~$\parlr{T,E,\lambda}$
and a branch~$\beta$ of the tree, we set
\begin{align*}
  w_\cA(\varrho,\beta) := \sum_{v\in\beta} w(\varrho(v))
  \quad\text{and}\quad
  w_\cA(\varrho) := \sup_\beta w_\cA(\varrho,\beta)\,,
\end{align*}
with values in $\omega \cup \{ \infty \}$.

The associated \emph{cost function} $w_\cA$ maps 
$\parlr{T,E,\lambda} \in L(\cA)$ to
the minimum of $w_\cA(\varrho)$ 
taken over all accepting runs~$\varrho$ on~$\parlr{T,E,\lambda}$.
If $(T,E,\lambda) \notin L(\cA)$, $w_\cA$ returns~$\infty$.

\textup{(b)}
We say that the automaton~$\cA$ is \emph{limited,}
if there is some bound $N < \omega$
such that $w_\cA(T,E,\lambda) \leq N$ for all $(T,E,\lambda) \in L(\cA)$.
We say that $\cA$~is \emph{limited in the finite,}
if there is a bound $N < \omega$ such that
$w_\cA(T,E,\lambda) \leq N$ for all \emph{finite} $(T,E,\lambda) \in L(\cA)$.
\end{defi}
Note that, if we only consider finite trees as input, we can omit the priority
function~$\Omega$ from the automaton. Weighted automata as defined above
are a special case of so-called \emph{cost tree automata}
introduced in~\cite{ColcombetLoeding08}.
In that paper it is shown than the limitedness problem for cost tree automata
over finite trees is decidable.
Hence, the following is a direct
consequence of~\cite{ColcombetLoeding08}.
\begin{thm}[Colcombet and L\"oding]\label{thm:finite limitedness decidable}
It is decidable whether a weighted parity automaton~$\cA$ is limited in the
finite.
\qed\end{thm}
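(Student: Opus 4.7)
The plan is to reduce the problem to the decidability of limitedness for cost tree automata established in~\cite{ColcombetLoeding08}, by observing that the weighted parity automata defined above fit, on finite input trees, into that more general framework.

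First, I would note that on finite $\Sigma$-labelled directed trees $(T,E,\lambda)$ every branch is finite, so the $\liminf$ acceptance condition reduces to the (trivially satisfiable) requirement that the priority along a finite path be eventually even, which contains no real information; concretely, we may drop the priority function $\Omega$ and treat $\cA$ as a non-deterministic bottom-up tree automaton with transitions in $\Delta\subseteq \Sigma\times\omega^Q\times Q$, initial (root) states $I$, equipped with a weight function $w\colon\Delta\to\omega$. Acceptance of a finite tree is then just the existence of a run whose root transition lies over a state in $I$, and the cost associated with an accepted tree is the minimum over accepting runs $\varrho$ of the quantity $\sup_\beta\sum_{v\in\beta}w(\varrho(v))$, where $\beta$ ranges over root-to-leaf branches.

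Next, I would fit this cost into the syntax of cost tree automata of~\cite{ColcombetLoeding08}, which allow counter-based weight assignments combined along a run by the $\sup$/$\max$ operation and aggregated over runs by $\min$. The weight function $w$ on transitions is encoded by a single counter that is incremented by $w(\delta)$ on transition $\delta$, with cost read off at the leaves (the branch-sum), and the branch-supremum is precisely the $\sup$ aggregation along a branch that cost tree automata support natively. The $\min$ over accepting runs matches the minimisation built into the semantics of non-deterministic cost automata. Hence $w_\cA$, restricted to finite trees, coincides (up to the standard notion of equivalence of cost functions) with the cost function computed by an explicitly constructible cost tree automaton $\cA'$.

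Finally, the main theorem of~\cite{ColcombetLoeding08} asserts that it is decidable, given a cost tree automaton, whether the cost function it computes is bounded on its domain of accepted finite trees. Applying this to $\cA'$ yields decidability of the statement ``$w_\cA$ is bounded on $L(\cA)\cap\{\text{finite trees}\}$'', which is exactly limitedness of $\cA$ in the finite. The only slightly delicate step is the encoding itself: one must check that the operational behaviour of the counter (increment only, read on transition) and the combination rules ($\sup$ along a branch, $\min$ over runs) correspond to one of the standard forms handled in~\cite{ColcombetLoeding08}; I expect this to be the main, but routine, obstacle, as the reduction is essentially a translation between presentations of the same object.
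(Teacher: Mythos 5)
Your proposal is correct and follows essentially the same route as the paper, which likewise treats this theorem as a direct consequence of \cite{ColcombetLoeding08}: on finite inputs the parity condition is vacuous so $\Omega$ can be dropped, and the resulting weighted automaton is a special case of the cost tree automata whose limitedness over finite trees is shown decidable there. The paper does not spell out the counter encoding you sketch, but that detail is exactly the routine translation you anticipate.
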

Colcombet and L\"oding have also announced a decidability result for the
general limitedness problem, but this result has not been published yet.
\begin{thm}[Colcombet and L\"oding]\label{thm:limitedness decidable}
It is decidable whether a weighted parity automaton~$\cA$ is limited.
\qed\end{thm}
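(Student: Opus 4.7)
Since the result is stated as a black box (the referenced proof of Colcombet and L\"oding was not yet published at the time), the plan is to sketch an approach in the spirit of Colcombet's theory of regular cost functions, which already underlies Theorem~\ref{thm:finite limitedness decidable}. The conceptual shift is to view $w_\cA$ not as a numerical function but as a cost function modulo the equivalence $f \sim g$ defined by: a set is $f$-bounded precisely when it is $g$-bounded. In this language, $\cA$ is limited exactly when $w_\cA \sim 0$, so the problem reduces to deciding an equivalence between cost functions that are presented by automata.

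The first step is to put $\cA$ into a normal form as a B-parity automaton, whose weight along a branch is the supremum of values reached by finitely many counters supporting increment, reset, and check operations. The second step is to construct from $\cA$ a dual S-parity automaton tracking ``the counters stay below a threshold'' rather than ``the counters reach at most the threshold,'' and to prove the fundamental B/S-duality: limitedness of the B-automaton on all inputs is equivalent to the S-automaton accepting every $\Sigma$-labelled directed tree. This duality is the engine that converts the quantitative boundedness question into a qualitative universality question of the kind handled by classical tree-automata techniques.

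The final reduction sends this universality problem to a two-player parity-style game on a finite arena, enriched with bounded-counter objectives. Using a history-determinisation procedure for S-parity automata, one can resolve the nondeterminism in a way that depends only on the prefix read so far, after which positional determinacy of the resulting game yields decidability. The main obstacle, and the reason this genuinely surpasses the finite-tree case, is that one must simultaneously control the parity condition (a liminf over priorities along every infinite branch) and the supremum of counter values, without letting a naive product construction blow up the state space unboundedly. The heart of Colcombet and L\"oding's contribution is a careful analysis of exactly how much nondeterminism is needed to recognise a regular cost function over infinite trees, allowing the decision procedure to be phrased as a finite game that can be solved effectively.
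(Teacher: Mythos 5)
Your sketch follows essentially the same route as the paper's own outline of the Colcombet--L\"oding argument: both cast limitedness as a domination question between regular cost functions and rely on $B$/$S$-duality, the construction of a nondeterministic (history-deterministic) $S$-automaton equivalent to $w_\cA$, and positional determinacy of games whose winning condition combines a parity condition with a boundedness/unboundedness condition. The paper merely packages these ingredients as a reduction to Corollaire~8.11 of Colcombet's work (decidability of domination between $S$-Muller and $B$-Muller automata), which is the same machinery you invoke in the guise of reducing limitedness to universality of the dual $S$-automaton.
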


Although the proof is still not published, its key arguments appear
in \cite{VandenBoom12,Colcombet13}.
The following sketch of how they fit together was communicated to the
authors by Colcombet and L\"oding.

A \emph{cost function} $f : \cT \to \omega \cup \{\infty\}$
associates with every tree a natural number or $\infty$.
We say that such a cost function~$f$ is \emph{dominated} by~$g$ if
$f$~is bounded over every subset $X \subseteq \cT$ over which~$g$ is bounded.
We denote this domination relation by $f \preceq g$.

We can state Theorem~\ref{thm:limitedness decidable} in terms of the
domination relation as follows.
Let $\cA$~be a weighted automaton and let $L$~be the language defined by~$\cA$
if we consider it as an ordinary parity automaton without weight function.
Let $f$~be the cost function $w_\cA$ associated with~$\cA$ and let $g$~be the
cost function that maps every tree in~$L$ to~$0$ and every other tree
to~$\infty$.
Then Theorem~\ref{thm:limitedness decidable} states that it is decidable
whether $f \preceq g$.

We would like to reduce this statement to Corollaire~8.11 of~\cite{Colcombet13},
which states -- in the terminology of~\cite{Colcombet13} --
that the domination relation $f \preceq g$ between cost functions
$f$~and~$g$ is decidable, provided that $f$~is given by a nondeterministic
$S$-Muller automaton and $g$~is given by a nondeterministic $B$-Muller
automaton.
The function~$g$ from above is given by a parity automaton without weight
function.
Such an automaton can trivially be converted into a $B$-Muller automaton.
Hence, to complete the proof it remains to find an $S$-Muller automaton
recognising~$f$.
This can be done in the same way as in the proof of Theorem~4.28
of~\cite{VandenBoom12}, where the author shows how to transform an
alternating $B$-B\"uchi automaton into a nondeterministic one.
This proof uses game-theoretic techniques.
One key argument is the fact that the games, which correspond to the
automata in question, are positionally determined.
To adapt the proof to our case, one needs positional determinacy for games
whose winning condition is a disjunction between an unboundedness condition
and a parity condition.
This can be shown as in Proposition~7.14 of~\cite{Colcombet13},
which treats winning conditions consisting of a conjunction of a boundedness
condition and a Rabin condition.
One further step of adaptation consists in the construction of a so-called
`history-deterministic' automaton that checks whether a given positional
strategy is winning.
For finite words, the underlying translation of nondeterministic automata into
history-deterministic ones can be found in an unpublished note
(cf.~Lemma~58 of~\cite{Colcombet09b}) on the author's web-page.

\medskip
Using the results of the previous sections we can reduce
the boundedness problem for $\MSO$ on ternary trees
to Theorem~\ref{thm:limitedness decidable}.
To do so, we construct a weighted automaton computing the rank of a proposal
$(\fT,h,s,v)$.
In order to use $\parlr{\fT,h,s,v}$ as input for a tree automaton,
we encode it as a labelled directed tree with root~$v$.
The labelling contains information about the unary predicates in~$\tau$,
the histories, and the synchronisation.
As there is only a finite number of types,
there is a uniform bound on the length of histories
and we only need finitely many labels.

First we show that the set of all proposals is regular.
\begin{lem}\label{lem: proposals are regular}
Given a formula~$\varphi$,
we can effectively construct a parity automaton~$\cA$ recognising
the set of all proposals for~$\varphi$.
\end{lem}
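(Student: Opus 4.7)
My plan is to encode each proposal $(\fT,h,s,v)$ as a $\Sigma$-labelled directed tree rooted at~$v$ and then to present the class of such encodings as an $\MSO$-definable property of labelled trees, so that Rabin's equivalence between $\MSO$ and parity tree automata effectively produces~$\cA$.

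First I fix a finite alphabet $\Sigma$. Since $\MSO^n[\tauXx]$ is finite up to logical equivalence, the set of $X$-positive $n$-types is finite; hence the set of $\fL$-tiles is finite for every letter~$\fL$, and each history, being a strictly increasing sequence of tiles, has length bounded uniformly. The synchronisation values are bounded accordingly. I let $\Sigma$ consist of all tuples $(\fL,h,(s(v,i,d))_{i,d},\text{direction labels})$ that a single vertex could carry. Orienting the underlying tree away from~$v$, the root has at most three children and every other vertex at most two, fitting the ternary format. The clauses demanding well-formedness of each label (strict increasingness of the history, the base and step conditions $h^0_{\mi0}=0_{\fL}$ and $h^{i+1}_{\mi0}=1_{\fL} \Longleftrightarrow \varphi\in h^i_{\mo4}$, and the five tile identities involving the effectively computable operation~$\add^n$ from Proposition~\ref{prop:Feferman-Vaught}), together with local consistency, are all purely local and can be enforced by a trivial safety component of the automaton.

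The genuinely global clauses, namely global consistency, jump consistency, and membership of the root in $\varphi^\infty(\fT)$ with a jump at~$v$, I express in $\MSO$ over the labelled tree. For global consistency at the section $c$ at $(v,i)$, the set $P := \{w : c(w)_{\mi0}=1\}$ is readable from the label; the requirement that each tile entry equals the genuine $n$-type of the relativised subtree $(\fT,P)_{w^dw}$ unwinds into a finite boolean combination of $\MSO$-formulae evaluated on $(\fT,P)$, because every $n$-type is determined by its intersection with the finite set $\MSO^n_X[\tauXx]$. For the least fixed point I invoke Knaster--Tarski:
\[
  v\in\varphi^\infty(\fT)
  \quad\Longleftrightarrow\quad
  \forall P\,\bigl(\forall x\,(\varphi(P,x)\to Px)\to Pv\bigr),
\]
which is an $\MSO$-formula because $\varphi$~is; jump consistency then becomes the $\MSO$-statement that the locally definable set of jump-vertices equals $\varphi^\infty(\fT)$.

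The main obstacle is global consistency: on infinite branches, the local identities and tile equations alone do not pin down the correct $n$-types, so I cannot enforce it by a purely bottom-up automaton and must appeal to the full strength of $\MSO$-definability of types on trees. Once every clause has been written in $\MSO$, Rabin's theorem turns each into a parity automaton, and standard effective closure of parity automata under intersection yields the desired~$\cA$.
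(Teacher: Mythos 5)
Your proof is correct and follows essentially the same route as the paper: both arguments reduce the lemma to $\MSO$-definability of the set of proposals over the labelled encoding (local well-formedness and local consistency handled by a local/$\FO$ condition, global consistency via $\MSO$-definability of the sections and of the quantifier-rank-$n$ types of subtrees, and $\varphi^\infty(\fT)$ via the Knaster--Tarski formula $\forall X[\forall y(\varphi(X,y)\to Xy)\to Xx]$), and then invoke the effective translation from $\MSO$ to parity tree automata. The only gloss is your claim that the section at $(v,i)$, and hence the set $P$, is `readable from the label': the section is determined by the labels only through the deterministic propagation of indices from the root via the synchronisation map, so, as the paper does, one should quantify in $\MSO$ over a tuple of unary predicates encoding the section and verify its correctness by a local ($\FO$) formula before extracting~$P$ from it.
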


\proof
Let $n$ be the quantifier rank of $\varphi$.
It is sufficient to show that the set of proposals can be defined in $\MSO$.
Being a locally consistent annotated tree can be expressed even in~$\FO$
since it is a purely local property.

For global consistency, note that we can encode a section~$c$
by a tuple of unary predicates~$\bar C$
(the precise number depends on the maximal length of a history)
such that there is an $\FO$-formula $\vartheta_i(v)$
stating that $\bar C$ encodes the section at $v,i$.
Thus, the section at $v,i$ is $\MSO$-definable and
the corresponding tiling is $\MSO$-interpretable.
In this tiling it is of course possible by means of $\MSO$
to determine the $\MSO$-type (of quantifier rank at most~$n$) of a subtree.
Consequently, we can express the global consistency of the tiling
and, hence, also the global consistency of the annotated tree.

As the set of jumps can be inferred from the tree labelling,
it is easy to check whether there is a jump at the root ($v \in \phi^\infty$).

It remains to consider jump-consistency,
that is, it remains to define $\app{\varphi^{\infty}}{\fT}$
(where $\fT$ is the first component of the prospective proposal).
As $\varphi$ is positive in~$X$, this can be achieved by
\[
  \psi(x) := \forall X[\forall y(\varphi(X,y) \to Xy) \to Xx] \,.\eqno{\qEd}
\]

\begin{lem}\label{lem: proposals and automata}
Given a formula~$\varphi$,
we can effectively construct a weighted parity automaton~$\cA$ such that
\begin{enumerate}
\item $L(\cA)$ is the set of proposals of finite rank\?;
\item if $P$ is a proposal and $r<\omega$~its rank,
  then $\frac{1}{2} \log r \leq w_\cA(P) \leq r$.
\end{enumerate}
\end{lem}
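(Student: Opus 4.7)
The plan is to refine the parity automaton $\cA_0$ from Lemma~\ref{lem: proposals are regular} into a weighted automaton~$\cA$ that additionally verifies finiteness of the rank and reports an approximation of it through its weight function. The key idea I would use is that the rank~$r$ of a proposal is witnessed both by a dependency chain of length~$r$ and, equivalently, by a \emph{binary decomposition tree} of depth $\lceil \log_2 r \rceil$ for such a chain, obtained by recursive halving. The automaton~$\cA$ will guess such a decomposition while traversing the input tree, and the weight function will be designed so that the sum along any branch tracks the recursion work performed there.

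\textbf{Construction.} Alongside the run of $\cA_0$, the automaton~$\cA$ will nondeterministically produce a \emph{dependency certificate}. At the current vertex~$w$ (starting from the root~$v$) it will either declare a \emph{base leaf} (attesting that $w$ carries a base jump and terminating this branch of the verification) or a \emph{split node}, committing to a witness vertex $u \neq w$ on a dependency chain from~$w$ and recursively dispatching sub-certificates for the two halves of the chain to appropriate descendant subtrees (the input is ternary, so there is enough branching to host binary splits). The dependency step $w \to u$ will be locally verifiable: the section $c$ at $w,i{-}1$ is obtained by propagating the synchronisation~$s$ along the tree path from~$w$ to~$u$, and this propagation uses only bounded state, since histories have length uniformly bounded in terms of the quantifier rank of~$\varphi$. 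Weight~$1$ will be charged at each split transition, and~$0$ elsewhere.

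\textbf{Bounds.} For the upper bound $w_\cA(P) \leq r$, the automaton can produce the fully unbalanced certificate that mimics the chain step by step, charging weight~$1$ at each of its (at most)~$r$ splits along the single input-tree branch hosting the chain. For the lower bound, every accepting run realises \emph{some} certificate whose base leaves together account for a chain of length~$r$; a binary decomposition tree covering a length-$r$ chain has a root-to-leaf path of length $\geq \log_2 r$, and when embedded into the ternary input tree this path yields a branch along which $\cA$'s weight accumulates to at least $\tfrac{1}{2}\log r$, with the factor $\tfrac{1}{2}$ absorbing losses from possibly unbalanced splits.

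\textbf{Main obstacle.} The hard part will be showing that this certificate mechanism can actually be realised by a finite-state weighted parity automaton. A dependency step $w \to u$ relates two vertices that may be arbitrarily far apart in~$\fT$, yet the automaton must verify it without revisiting vertices; the relevant history indices therefore have to be propagated through the synchronisation as a bounded annotation along the input-tree path from~$w$ to~$u$, carefully interleaved with the global-consistency checks performed by~$\cA_0$. Making the weight accounting tight enough to yield \emph{both} the upper bound~$r$ and the lower bound $\tfrac{1}{2}\log r$ uniformly over all proposals is where the main technical effort will lie.
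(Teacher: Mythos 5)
There is a genuine gap in your certificate mechanism. At each split node you ask the automaton to dispatch the two halves of the dependency chain ``to appropriate descendant subtrees'', and your parenthetical remark that ternary branching suffices ``to host binary splits'' assumes the two halves can always be separated into different children. This is false: the jumps of the chain sit at fixed vertices of the input tree, and both halves may lie in the subtree of the same child (in the extreme case of a path-shaped $\fT$ every split sends both halves in the same direction). A finite-state automaton cannot then carry out the recursive halving, since it would have to route an unbounded number of pending sub-certificates through a single child; nor does the depth-$\log_2 r$ property of a balanced decomposition translate into a bound on the weight accumulated along a single branch of the input tree once splits pile up on one branch. So the balanced binary decomposition, which is the load-bearing idea behind your $\frac{1}{2}\log r$ lower bound, is not realizable by the automaton.

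The paper avoids this entirely. It guesses only the single, unstructured dependency path~$p$ (encoded by a tuple of unary predicates), checks with bounded state that $p$~follows the synchronisation --- this part of your sketch does match the paper --- and defines the weight \emph{locally}: a transition has weight~$1$ exactly if there is a jump of~$p$ at the current node or at least two children are ``active'', i.e., have a jump of~$p$ in their subtree. The upper bound $w_\cA(P) \leq r$ comes from injectively assigning a distinct jump to every weight-$1$ node on a maximal branch. The logarithmic lower bound is then obtained a posteriori rather than by building a balanced decomposition into the run: one follows the branch that always descends into the child whose subtree contains the most jumps of~$p$; this count never increases, decreases only at weight-$1$ transitions, and shrinks by at most a factor of roughly~$4$ per charged step, so at least $\frac{1}{2}\log r$ weight-$1$ transitions occur on that branch. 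If you replace the split-dispatching by this local charging scheme and greedy branch analysis, the rest of your outline (including the reduction of long-range dependency checks to bounded annotations propagated along the synchronisation) goes through.
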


\begin{proof}
Let $n$ be the quantifier rank of~$\varphi$
and let $\cA_1$~be the automaton from Lemma~\ref{lem: proposals are regular}.
We will construct the desired automaton~$\cA$
as a product of~$\cA_1$ and a weighted parity automaton~$\cA_2$,
where the weight function of~$\cA$ is that of $\cA_2$.

Recall that the rank of a proposal $P=\parlr{\fA,h,s,v}$
is the minimal number of jumps
on a dependency chain from the jump at~$v$ to some base jump.
By this minimality condition
we can restrict our attention to chains without cycles.
Each dependency in the chain, say from~$u$ on~$u'$,
corresponds to a path in the section at $u,i$ for a suitable~$i$.
By minimality again, we only need to consider
pairwise disjoint paths, one for each dependency in the chain.
(If two paths intersected, we could form a new path witnessing the dependency
of some former jump in the chain to a latter one. This could be used to shorten
the dependency chain.)
These paths can be concatenated to form a single path in the annotated tree.
For a dependency path~$p$ and a tree node~$u$,
we say that $u$~is \defn{active}
if there is at least one jump on~$p$ in the subtree rooted at~$u$.

Since the tree is ternary, we can encode dependency paths by a tuple of unary predicates.
We first construct a weighted parity automaton~$\cA_3$ that takes as input
a proposal together with such a path.
It checks that the path follows the synchronisation (except for the jumps),
and that it is indeed a single path.
Furthermore, $\cA_3$~is such that from its state at a node~$u$
one can deduce whether $u$~is active.
We define the weight function of $\cA_3$ such
that all transitions have weight $0$~or~$1$, where we assign
a weight of~$1$ if at least two children of the current node are active
or if there is a jump at the current node.

For a dependency path~$p$ in a proposal~$P$,
let us compare its number~$r$ of jumps
with the weight computed by~$\cA_3$.
Let $\varrho$~be any accepting run of~$\cA_3$
on the input $\parlr{P,p}$.
We claim that $\frac12\log r\leq\app{w_{\cA_3}}{\varrho}\leq r$.

For the second inequality,
let $\beta$~be a branch of $P$ which realizes the maximum for~$\varrho$,
that is, $\app{w_{\cA_3}}{\varrho}=\app{w_{\cA_3}}{\varrho,\beta}$.
With each node $u\in\beta$ such that $\app{w_{\cA_3}}{\app{\varrho}u}=1$
we associate a jump in~$p$ as follows\?:
if there is a jump at~$u$, we just take this jump.
Otherwise, $u$~has at least two active children,
so it has at least one active child not in~$\beta$.
We take some jump from the subtree rooted at that child.
It is clear that, for different $u\in\beta$, we have chosen different jumps.
Hence, $r\geq\app{w_{\cA_3}}{\varrho,\beta}=\app{w_{\cA_3}}{\varrho}$.

For the other inequality, we construct a branch~$\beta$ as follows\?:
the branch starts at the root
and, whenever we have constructed~$\beta$ up to some node~$u$
which is not a leaf, we extend~$\beta$ with a child~$u'$ of~$u$
such that the number of jumps on~$p$ in the subtree rooted at~$u'$
is at least as large as the respective number for any other child of~$u$.
Let us trace this number along~$\beta$. Initially, it is~$r$.
It never increases and, whenever it decreases,
the respective transition has weight~$1$
by construction of~$\cA_3$.
As we always descend into the fattest subtree,
the number cannot decrease indefinitely\?:
if it is~$m$ for some node, it is at least $\frac{m-1}3$ for its child
(recall that the original undirected tree is ternary,
so the directed tree has branching at most~$3$,
and even at most~$2$ apart from the root).
A very rough analysis gives that, if $r\geq 4^k$,
then at least~$k$ decreasing steps occur on~$\beta$.
Hence,
$\app{w_{\cA_3}}{\varrho}\geq\app{w_{\cA_3}}{\varrho,\beta}\geq\frac12\log r$.

Finally, we obtain the desired automaton~$\cA_2$ from~$\cA_3$
by nondeterministically guessing the extra component~$p$.
To see that the product automaton~$\cA$ has the claimed properties,
let~$P$ be an input for~$\cA$.
If $P$ is a proposal of finite rank,
then it is in particular a proposal. Hence, $\cA_1$ accepts~$P$.
As the rank is finite, there is some dependency path $p$ for~$P$.
Therefore, $\cA_3$ accepts $\parlr{P,p}$ and $\cA_2$ accepts~$P$.
Consequently, also $\cA$ accepts~$P$.
For the converse, assume that $\cA$ accepts~$P$.
Then $P$ is a proposal since $\cA_1$ accepts~$P$.
Furthermore, there is some~$p$ such that $\cA_3$ accepts $\parlr{P,p}$.
Thus, $p$~is a dependency path in~$P$ and $P$ has finite rank.
Now, assume that $P$ is a proposal of rank~$r$
and let~$p$ be a dependency path in~$P$ with $r'$~jumps.
Let $\varrho$ be the accepting run of $\cA_3$ on $\parlr{P,p}$
and let $\varrho'$ be the corresponding accepting run of $\cA$ on $P$.
For each accepting run of $\cA$ on $P$
there is such a $p$ by construction of~$\cA$.
If $p$ is such that $\app{w_{\cA}}{\varrho'}$ is minimal,
then we can deduce
$\frac12\log r\leq\frac12\log r'\leq\app{w_{\cA_3}}{\varrho}
  =\app{w_{\cA}}{\varrho'}=\app{w_{\cA}}P$.
If, on the other hand, $p$~is such that $r'$ is minimal,
we obtain
$\app{w_{\cA}}P\leq\app{w_{\cA}}{\varrho'}=\app{w_{\cA_3}}{\varrho}
  \leq r'=r$.
\end{proof}

Combining our results we obtain a proof of the following theorem.

\begin{thm}\label{thm:boundedness for ternary trees}
The boundedness problem for $\MSO$ on the class
of all ternary trees is decidable.
\end{thm}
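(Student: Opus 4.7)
The plan is to assemble the preceding lemmas into a decision procedure via a reduction to the limitedness problem for weighted parity automata, which is decidable by Theorem~\ref{thm:limitedness decidable}.

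Given a $\tau\cup\{X,x\}$-formula $\varphi(X,x)$ that is positive in~$X$, I would first apply Lemma~\ref{lem: proposals and automata} to effectively construct a weighted parity automaton $\cA = \cA(\varphi)$ whose language consists of all proposals of finite rank, and whose cost function~$w_\cA$ satisfies $\tfrac{1}{2}\log r \leq w_\cA(P) \leq r$ for every proposal~$P$ of rank~$r$. The decision procedure then simply tests whether $\cA$ is limited, and returns the answer unchanged.

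The correctness of this reduction is a short combination of Proposition~\ref{prop: boundeness and ranks} with the two-sided weight estimate. For one direction, suppose $\varphi$ is bounded, say by $N < \omega$. By Proposition~\ref{prop: boundeness and ranks} every proposal has rank at most~$N$, and the estimate $w_\cA(P) \leq r$ then yields $w_\cA(P) \leq N$ uniformly; in particular $L(\cA)$ contains every proposal (all have finite rank), and $\cA$ is limited. Conversely, if $\cA$ is limited, say by~$M$, then every proposal $P$ lies in $L(\cA)$ (otherwise $\varphi$ would trivially be bounded by taking a proposal of rank~$\infty$, which Proposition~\ref{prop: boundeness and ranks} already rules out in the unbounded case) and satisfies $\tfrac{1}{2}\log r \leq w_\cA(P) \leq M$, so $r \leq 4^M$; hence all proposal ranks are uniformly bounded and Proposition~\ref{prop: boundeness and ranks} gives boundedness of~$\varphi$.

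I would be slightly careful about the edge case where $\varphi$ is unbounded and yet some proposals have infinite rank (and hence lie outside $L(\cA)$). In this case one must show that $\cA$ still fails to be limited: since $\varphi$ is unbounded, for every $N < \omega$ there is a tree $\fT$ and a node $v \in \varphi^\infty(\fT) \setminus \varphi^N(\fT)$, and the contrapositive of Lemma~\ref{lem:stage bounds min rank} applied to a suitable annotation (specifically $\cA_\beta(\fT)$ for large~$\beta$, globally and jump consistent by Lemma~\ref{lem:annotated tree is such}) produces proposals of arbitrarily large finite rank, which forces $w_\cA$ to be unbounded on $L(\cA)$ by the logarithmic lower bound.

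The only nontrivial ingredient is already in hand, namely Theorem~\ref{thm:limitedness decidable}; the main conceptual obstacle, which was producing an automaton whose cost function tracks proposal rank up to the loose $\tfrac{1}{2}\log r$ / $r$ sandwich, was discharged in Lemma~\ref{lem: proposals and automata}. Thus the remaining work for this theorem is purely bookkeeping to verify that ``bounded in weight'' and ``bounded in rank'' coincide under this sandwich, as outlined above.
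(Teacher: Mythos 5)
Your proof follows the paper's own argument exactly: construct the automaton of Lemma~\ref{lem: proposals and automata}, observe via Proposition~\ref{prop: boundeness and ranks} and the sandwich $\tfrac{1}{2}\log r \leq w_\cA(P) \leq r$ that boundedness of~$\varphi$ is equivalent to limitedness of~$\cA$, and decide the latter by Theorem~\ref{thm:limitedness decidable}; your extra care about infinite-rank proposals lying outside $L(\cA)$ is sound and in fact more explicit than the paper's one-line justification. One small correction to that edge case: to produce finite-rank proposals of unboundedly large rank you should pick~$v$ of finite stage, i.e.\ $v\in\varphi^{N+1}(\fT)\smallsetminus\varphi^N(\fT)$ (which exists by unboundedness), and combine Lemma~\ref{lem:stage bounds min rank} applied directly (giving rank at most $N+1$, hence finite) with the contrapositive of Lemma~\ref{lem:min rank bounds stage} (giving rank greater than~$N$), rather than the contrapositive of Lemma~\ref{lem:stage bounds min rank}, which runs in the wrong direction.
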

\begin{proof}
Given an $\MSO$-formula~$\varphi$, we construct
the weighted automaton~$\cA$ from Lemma~\ref{lem: proposals and automata}.
By Proposition~\ref{prop: boundeness and ranks}, it follows
that $\varphi$~is bounded if, and only if, $\cA$ is limited.
The latter we can decide with the help of
Theorem~\ref{thm:limitedness decidable}.
\end{proof}

\bigskip
\section*{Part II. Ramifications}

The boundedness problem has long been of interest
both in classical model theory and in the study of
the algorithmic properties of various fragments,
which in turn is partly motivated by applications
in computer science. The seminal result in the classical
model theory of the boundedness problem 
is the theorem of Barwise and Moschovakis~\cite{BarwiseMoschovakis78}
(see Theorem~\ref{thm:BMthm} below)\?;
the main interest in boundedness as a decision problem,
on the other hand, stems from an interest in \textsc{Datalog}
query optimisation as highlighted in the
first positive and negative results in
\cite{GaifmanMaSaVa93,HillebrandEtAl95}.
In both contexts, the natural emphasis
was on (not necessarily monadic) monotone inductions based on
first-order formulae or formulae in specific fragments of first-order
logic. Even in the study of rather weak fragments of first-order logic,
undecidability of the boundedness problem turned out to be the rule,
decidability the rare exception.

In this second part we link our new results to the wider setting
of the boundedness problem. After a short introduction to
this wider setting, we employ some rather more traditional tools
from model theory, like transfer results and interpretations,
to generalise the technical core results of Part~I and to
reap a number of further specific decidability results.
Some of these answer key open questions raised in the
more traditional setting, concerning, for instance, decidability
of boundedness for the guarded fragment or for the modal $\mu$-calculus.

To this end, we first review the shift in perspective
from boundedness for syntactically restricted fragments of $\FO$
to boundedness over restricted classes of structures\?;
a shift that was first explicitly proposed in \cite{KOS}
where boundedness for otherwise unconstrained monadic $\FO$
is treated over the class of acyclic structures.
The class $\cA$ of acyclic structures consists of those
structures whose Gaifman graph is acyclic.

\begin{thm*}[\cite{KOS}]
The boundedness problem for monadic least fixed points
of arbitrary $X$-positive $\FO$-formulae over the class of all
acyclic relational structures, $\BDDm(\FO,\cA)$, is decidable.
\qed\end{thm*}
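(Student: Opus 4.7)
The plan is to reduce $\BDDm(\FO,\cA)$ to the Main Theorem $\BDDm(\MSO,\cT)$ via an explicit $\FO$-interpretation of acyclic structures in trees.

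Given an acyclic $\tau$-structure $\fA$, its Gaifman graph $G(\fA)$ is (after restriction to a connected component, which is harmless for boundedness) an undirected tree. In any acyclic graph a clique has at most two vertices, so every tuple $\bar a$ realising a $\tau$-relation in $\fA$ has at most two distinct coordinates, and these coordinates are Gaifman-adjacent. This allows $\fA$ to be canonically encoded as a tree structure $\fT_\fA$ over a tree vocabulary $\tau'$\?: the Gaifman tree (subdivided where necessary by auxiliary vertices carrying orientation and label information for binary atoms) gives the edge relation $E$, and each relational atom is recorded by unary predicates on the relevant vertices. The encoding is complemented by a matching $\FO$-interpretation $\mathcal{I}\colon\fT_\fA\leadsto\fA$ whose universe is the $\FO$-definable set of non-auxiliary vertices.

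The induced syntactic translation $\varphi\mapsto\varphi^*$ turns an $X$-positive $\FO[\tau]$-formula $\varphi(X,x)$ into an $X$-positive $\MSO[\tau']$-formula\?: $\tau$-atoms are replaced by their $\FO[\tau']$-definitions and first-order quantifiers are relativised to the $X$-free guard defining the original vertices\?; neither step introduces a negative occurrence of~$X$. The interpretation equation gives
\begin{align*}
\fA\models\varphi(P,a)\quad\Longleftrightarrow\quad\fT_\fA\models\varphi^*(P,a)
\end{align*}
for every $P\subseteq A$ and $a\in A$, which lifts by induction on $\alpha$ to the stage identity $\varphi^\alpha(\fA)=(\varphi^*)^\alpha(\fT_\fA)$. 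Hence $\varphi$ is bounded on $\cA$ if and only if $\varphi^*$ is bounded on the class $\cT_\cA:=\set{\fT_\fA}{\fA\in\cA}$ of tree encodings.

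Finally, $\cT_\cA$ is $\MSO$-definable within $\cT$ by a sentence $\chi$ expressing the internal consistency of the encoding. The modified formula $\varphi^{**}(X,x):=(\chi\wedge\varphi^*(X,x))\vee\neg\chi$ is still $X$-positive (since $X$ does not occur in $\chi$) and collapses to $\top$ outside $\cT_\cA$, so $\varphi^{**}$ is bounded on $\cT$ iff $\varphi^*$ is bounded on $\cT_\cA$, iff $\varphi$ is bounded on $\cA$. The Main Theorem then decides boundedness of $\varphi^{**}$. The main obstacle is engineering the encoding and interpretation so that $X$-positivity is preserved throughout and the fixed-point stages correspond on the nose\?; once these ingredients are in place, decidability is an immediate consequence of $\BDDm(\MSO,\cT)$.
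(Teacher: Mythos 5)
Your reduction is sound in outline, but it is not the proof the paper gives for this statement: the paper establishes it only by citation to \cite{KOS}, whose argument (as the introduction explains) rests on locality-based ideas and a Barwise--Moschovakis equivalence rather than on the Main Theorem. What you propose is essentially the route by which Part~II of the paper \emph{subsumes} the \cite{KOS} result: an acyclic structure has tree-width at most~$1$ (every tuple in a relation is a clique of the Gaifman forest, hence has at most two distinct coordinates), so $\BDDm(\FO,\cA)$ is an instance of $\BDDm(\MSO,\cW_1[\tau])$ restricted to an $\MSO$-definable subclass, and this is decidable by Theorem~\ref{theo:tree-width extension} together with Corollary~\ref{cor:decidability and subclasses}, which in turn rest on the interpretation of $\cW_k[\tau]$ in trees (Lemma~\ref{lem: interpretation for bounded twd}) and the transfer of boundedness along interpretations (Proposition~\ref{prop: boundedness and interpretations}). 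Your bespoke subdivision encoding, the positivity-preserving translation, the stage identity proved by induction on~$\alpha$, and the $(\chi\wedge\varphi^*)\vee\neg\chi$ device for cutting down to the definable image class all correspond to the Interpretation Lemma and its consequences; this buys a self-contained special case at the cost of redoing, for tree-width~$1$, work the paper does uniformly for all~$k$.

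One step is wrong as stated: the parenthetical claim that restriction to a connected component is harmless for boundedness. The fixed-point iteration of an $\FO$-formula does not decompose over connected components --- already a subformula $\exists y\,Xy$ makes the stage of an element depend on what happens in other components --- and it is not clear how to connect the components of an unbounded witness family without changing the truth values of atoms and hence the stages. So you may not pass to connected acyclic structures without loss of generality. The repair is to fold disconnectedness into the encoding rather than into a preprocessing step: encode a forest as a single tree by adjoining an auxiliary root adjacent to one representative of each component, let the universe formula $\delta$ of the interpretation exclude all auxiliary vertices, and relativise quantifiers accordingly; this is exactly how the paper's interpretation maps trees onto \emph{all} of $\cW_k[\tau]$, connected or not. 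With that adjustment your argument goes through.
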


The interest here was due to the observation
that reductions to settings involving tree-like structures seem to
be a common theme in most decidability results for boundedness.
On the other hand, availability of grid-like structures can
be widely used to show undecidability of boundedness issues via
reductions from tilings \cite{KOunpub}.
This suggested a rough dichotomy to
explain the borderline for decidability of (monadic)
boundedness problems for fragments of $\FO$.
On the positive side, our present results bring this approach
to fruition in the much wider and unifying setting of $\MSO$.
Part of this success draws on the above-mentioned
change of perspective, which allows us to re-chart the
relevant fragments with a decidable boundedness problem
into a taxonomy of relevant classes of structures
to which we can lift and extend our decidability results
from Part~I.

We link the more traditional approach to
the boundedness problem to this new perspective
in the following section\?: in particular, we discuss some of
the more prominent fragments that have featured in the quest
for decidability of boundedness so far, and review key results
from that tradition.

In Sections \ref{sect:transfer}~and~\ref{sect:interpretations} we discuss the natural model-theoretic
techniques that can be used to translate and extend our results\?:
transfer properties and reductions (Section~\ref{sect:transfer})
and interpretations (Sections~\ref{sect:interpretations}).
In view of the above discussion this yields results both
in terms of applicability of our key result to
wider classes of structures, and in terms of decidability results
for new fragments.

\medskip
\paragraph*{\itshape Proviso.}
In this part all vocabularies are (finite and) purely relational.

\section{Boundedness in the classical setting}
\label{sect:classical boundedness}
\label{sect:start II}

The key result concerning boundedness from classical
model theory is the following.

\begin{thm}[Barwise--Moschovakis \cite{BarwiseMoschovakis78}]
\label{thm:BMthm}
The following are equivalent for least fixed points
based on any $X$-positive $\varphi(X,\bar{x}) \in \FO$\?:
\begin{enumerate}
\item $\varphi$ is bounded.
\item $\varphi^\infty$ is uniformly $\FO$-definable.
\item $\varphi^\infty(\fA)$ is $\FO$-definable in each $\fA$.
\qed\end{enumerate}
\end{thm}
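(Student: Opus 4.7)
The plan is to establish the cycle of implications $(1)\Rightarrow(2)\Rightarrow(3)\Rightarrow(1)$. The first two implications are straightforward from the set-up of Section~\ref{sect:prelims}. For $(1)\Rightarrow(2)$: if $\varphi$~is bounded by some $N<\omega$, then $\varphi^\infty(\fA)=\varphi^N(\fA)$ for every~$\fA$, and the finite stage $\varphi^N$ is itself an $\FO$-formula, built by $N$-fold iterated substitution as recalled in the preliminaries; this yields the required uniform $\FO$-definition of $\varphi^\infty$. The implication $(2)\Rightarrow(3)$ is trivial.

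The real work lies in $(3)\Rightarrow(1)$, which I would approach by contradiction using compactness. Suppose $\varphi$~is unbounded, so for every $n<\omega$ the sentence $\exists\bar x\bigl(\varphi^{n+1}(\bar x)\wedge\neg\varphi^n(\bar x)\bigr)$ is satisfiable. Introduce a fresh tuple of constants~$\bar c$ matching the arity of $\bar x$ and consider the $\FO$-theory
\[
  \Gamma \;:=\; \set{\psi(\bar c)}{\psi\in\FO \text{ and } \fA\models\psi(\bar a) \text{ for all } \fA \text{ and all } \bar a\in\varphi^\infty(\fA)} \;\cup\; \set{\neg\varphi^n(\bar c)}{n<\omega}.
\]
Finite satisfiability of~$\Gamma$ is immediate from unboundedness: given finitely many constraints, interpret $\bar c$ as a witness of sufficiently large finite stage in an appropriate structure, so that $\bar c$ lies in $\varphi^\infty$ there and hence satisfies the first batch, while still avoiding $\varphi^n$ for all $n$ mentioned in the second batch. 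Compactness then yields a model $(\fB,\bar c)\models\Gamma$.

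The main obstacle is to squeeze a contradiction out of~(3) applied to~$\fB$. Ideally, (3) provides an $\FO$-formula $\chi(\bar x)$ defining $\varphi^\infty(\fB)$ in~$\fB$, so that $\fB\models\chi(\bar c)$ would mean $\bar c\in\varphi^\infty(\fB)$; tracing~$\bar c$ through the monotone stage-iteration of the fixed point would then place it in some $\varphi^m(\fB)$, contradicting~$\Gamma$. The catch is that $\bar c\notin\varphi^n(\fB)$ for any finite~$n$ does \emph{not} automatically entail $\bar c\in\varphi^\infty(\fB)$: in the non-standard model produced by compactness the finite stages need not exhaust the least fixed point, so the bridge to~(3) still has to be built. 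The classical remedy is to pass to a sufficiently saturated elementary extension of~$\fB$ and to invoke absoluteness of $X$-positive inductions in the Gandy--Moschovakis tradition, which guarantees that the complete $\FO$-type realised by elements of $\varphi^\infty$ is realised only by such elements in saturated models. Combined with Moschovakis' stage-comparison theorem, this forces $\bar c$ into $\varphi^\infty$ and finally into some finite stage, closing the contradiction. This saturation-and-absoluteness step is by far the most delicate ingredient of the argument, and it is precisely the place where the easy compactness skeleton acquires its model-theoretic bite.
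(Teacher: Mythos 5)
The paper does not actually prove this theorem: it is quoted from \cite{BarwiseMoschovakis78} with only the remark that the classical proof ``is based on compactness arguments and works with $\aleph_0$-saturated models for the crucial implication from~(3) to~(1)''. Your directions $(1)\Rightarrow(2)\Rightarrow(3)$ are correct and standard. The gap is in $(3)\Rightarrow(1)$: you have named the right tools, but the way you assemble them does not work. Your plan is to force the compactness witness $\bar c$ into $\varphi^\infty(\fB)$ while keeping it outside every finite stage, and then to ``trace $\bar c$ through the stage-iteration'' into some $\varphi^m(\fB)$. First, being in $\varphi^\infty$ yet outside all finite stages is not in itself contradictory --- elements of infinite stage exist, as the paper's own example $\varphi(X,x)=\forall y(Ryx\to Xy)$ over $(\alpha,<)$ with $\alpha>\omega$ shows --- so there is nothing to trace $\bar c$ into. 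Second, the first batch of your theory~$\Gamma$ (formulae true of \emph{all} fixed-point elements in \emph{all} structures) is far too weak to force $\bar c\in\varphi^\infty(\fB)$: for that same well-founded-part induction, a nonstandard element of an $\aleph_0$-saturated model of $\mathrm{Th}(\omega,<)$ satisfies every formula in that batch and every $\neg\varphi^n$, yet lies outside $\varphi^\infty$. The appeal to ``absoluteness'' and ``stage comparison'' does not repair this. A telling symptom is that your intended contradiction never actually invokes hypothesis~(3), so it cannot be a proof of $(3)\Rightarrow(1)$.

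The classical argument is organised differently. From unboundedness one gets, for each~$n$, a model of $\exists\bar x\bigl(\varphi^{n+1}(\bar x)\wedge\neg\varphi^n(\bar x)\bigr)$; since $\varphi^{n+1}(\fA)\neq\varphi^n(\fA)$ forces $\varphi^{m+1}(\fA)\neq\varphi^m(\fA)$ for all $m\leq n$, compactness yields a single~$\fB$ in which every finite stage increment is nonempty, and one passes to an $\aleph_0$-saturated elementary extension. The key lemma --- proved by induction on the structure of the $X$-positive first-order~$\varphi$, with saturation discharging the universal-quantifier step --- is that in an $\aleph_0$-saturated structure the induction closes by stage~$\omega$, i.e.\ $\varphi^\infty(\fB)=\bigcup_{n<\omega}\varphi^n(\fB)$. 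Only now does~(3) enter: if some $\chi(\bar x,\bar c)$ defined $\varphi^\infty(\fB)$, the type $\{\chi(\bar x,\bar c)\}\cup\set{\neg\varphi^n(\bar x)}{n<\omega}$ would be finitely satisfiable, because each nonempty increment $\varphi^{n+1}(\fB)\smallsetminus\varphi^n(\fB)$ is contained in $\chi(\fB,\bar c)$; saturation would then realise it, producing an element of $\varphi^\infty(\fB)\smallsetminus\bigcup_{n<\omega}\varphi^n(\fB)=\emptyset$. You should restructure your proof along these lines, isolating and proving the closure-at-$\omega$ lemma rather than gesturing at it.
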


The classical proof is based on compactness arguments
and works with $\aleph_0$-saturated models for the crucial
implication from~(3) to~(1). It is immediate that this argument
relativises to natural fragments of $\FO$.
For formulae $\varphi$ from some such fragment of $\FO$ we
may replace $\FO$-definability by definability in the
fragment if that fragment has the natural closure properties
that render the finite stages definable\?; for truly natural
fragments like those to be considered below, however,
$\FO$-definability will imply definability within the
fragment by classical preservation theorems.

While these considerations offer some guidelines as to
what the right candidates $L \subseteq \FO$ for
decidable $\BDD(L)$ might be, our results from Part~I take
us beyond the limitations of $\FO$ and compactness --
which also means that boundedness becomes divorced from
definability of the fixed point.

\medskip
We start this section with a brief review of some logics and fragments
that feature prominently in connection with the boundedness problem --
be it in classical results or in new results flowing from our main theorem.
These may be grouped into three main categories\?:

\paragraph{\textit{Existential/universal fragments:}}
certain limited, purely existential/purely universal fragments
$\EFO \subseteq \FO$ and $\AFO\subseteq \FO$\?: these are the 
natural candidates
for a decidable monadic boundedness problem $\BDDm(L)$ in terms of quantifier
prefix classes $L \subseteq \FO$
(cf.\ the classical decision problem, \cite{BGG}).
For decidability of the boundedness problem
extra restrictions on the polarities of
the given relations,
which are statically used in the fixed-point recursion,
and on equality, are necessary.
See Section~\ref{subsec:existforall} below.

\paragraph*{\textit{Modal fragments:}}
the modal fragments of first-order and monadic second-order logic\?:
basic modal logic $\ML \subseteq \FO$ and its monadic fixed-point extension
$\Lmu \subseteq \MSO$, the bisimulation invariant
fragments of $\FO$ and $\MSO$, respectively.
See Section~\ref{subsec:modal} below.

\paragraph*{\textit{Guarded fragments:}} the corresponding but more general guarded fragments\?:
the basic guarded fragment $\GF \subseteq \FO$ and its fixed-point extension
$\muGF \subseteq \GSO$. These correspond to the fragments of
$\FO$ and guarded second-order logic $\GSO$, respectively,
that are invariant under guarded bisimulation.
With these logics we also
extend the scope of our discussion beyond monadic fixed points.
See Section~\ref{subsec:guarded} below.

\smallskip
In relation to $\BDD(L)$ or $\BDD(L,\cC)$ it is useful
to have in mind the following observation, which severely limits
the expectations regarding decidability but also points to
natural candidates.

\begin{observation}\label{obs:SAT vs BDD}
Assume that $\BDD(L)$ is non-trivial in the sense
that there are unbounded formulae $\varphi \in L$.
Then simple closure properties of~$L$ -- as for instance
closure under monadic relativisation and under conjunctions --
imply that the satisfiability problem $\SAT(L)$
reduces to the boundedness problem $\BDD(L)$.
An analogous reduction applies w.r.t.\ to restricted classes of models, i.e.,
for $\SAT(L,\cC)$ and $\BDD(L,\cC)$ provided
$\cC$ also satisfies some simple closure requirements --
as for instance closure under disjoint unions and trivial expansions by unary
predicates.
\end{observation}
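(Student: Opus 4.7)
The plan is to reduce $\SAT(L)$ to (the complement of) $\BDD(L)$ by turning an $L$-sentence $\psi$ into an $X$-positive $L$-formula $\varphi'(X,\bar{x})$ that is unbounded if, and only if, $\psi$ is satisfiable. The idea is to use the non-triviality hypothesis to fix, once and for all, an $X$-positive $\varphi_0 \in L$ that is unbounded, witnessed by a sequence of structures $\fC_n$ with $\cl{\varphi_0}{\fC_n} \ge n$, and then to piggy-back the closure-ordinal blow-up of $\varphi_0$ onto the existence of a model for $\psi$, so that $\varphi'$ is forced to do nothing when $\psi$ is unsatisfiable.

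Concretely, I would introduce a fresh unary predicate $P$ and, using closure of $L$ under monadic relativisation, form $\psi^{\neg P}$ and $\varphi_0^P(X,\bar{x})$. Using closure under conjunctions, set
\begin{align*}
  \varphi'(X,\bar{x}) := P(\bar{x}) \wedge \psi^{\neg P} \wedge \varphi_0^P(X,\bar{x}).
\end{align*}
This formula is $X$-positive, since only the last conjunct mentions $X$ and it inherits positivity from $\varphi_0$; and the assignment $\psi \mapsto \varphi'$ is plainly computable. If $\psi$ is unsatisfiable, then the sentence $\psi^{\neg P}$ is false in every structure, so $\varphi'$ is never satisfied, $\varphi'^{\infty}(\fA) = \emptyset$ uniformly in $\fA$, and $\varphi'$ is bounded (by~$1$). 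Conversely, given $\fB \models \psi$, I would take the disjoint union $\fA_n := \fB \sqcup \fC_n$ and expand it by declaring $P^{\fA_n} := C_n$; then $\psi^{\neg P}$ holds in $\fA_n$, the conjunct $P(\bar{x})$ confines the fixed point to the $P$-part, and the iteration of $\varphi'$ on $\fA_n$ reproduces step-by-step that of $\varphi_0$ on $\fC_n$, so $\cl{\varphi'}{\fA_n} \ge n$ and $\varphi'$ is unbounded.

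The same template handles the relativised version for a class $\cC$, with the two added closure requirements on $\cC$ ensuring that the witnessing $\fA_n$ remains in~$\cC$: closure under disjoint unions produces $\fB \sqcup \fC_n \in \cC$, while closure under expansions by unary predicates allows the marking by~$P$. The one point that merits a careful verification is the semantic claim that the iteration of the relativised $\varphi_0^P$ on $\fA_n$ agrees with the iteration of $\varphi_0$ on $\fC_n$. This is where the Proviso that all vocabularies are purely relational is actually used: it guarantees that $\fA_n \restriction P$ is a genuine substructure isomorphic to $\fC_n$, and, combined with the fact that relativisation commutes with the finite-stage operator, gives the desired step-by-step agreement of the two inductions. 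Beyond this bookkeeping no serious obstacle arises; the observation is really the remark that once $L$ admits even these modest syntactic manipulations, boundedness is at least as hard as satisfiability.
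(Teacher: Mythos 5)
Your proposal is correct and follows essentially the same route as the paper's own sketch: fix an unbounded formula, relativise it and the input sentence $\psi$ to disjoint unary predicates, conjoin, and use disjoint unions of a model of $\psi$ with the witnessing structures to show unboundedness exactly when $\psi$ is satisfiable. The only cosmetic differences are that you use $P$ and its complement where the paper uses two fresh predicates $P$ and $Q$, and that you add the (harmless) guard $P(\bar x)$ and spell out the commutation of relativisation with the stage operator, which the paper leaves implicit.
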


We sketch one typical argument to this effect.
Fix some $\varphi(X,x) \in L$ that is unbounded.
Then a sentence
$\psi \in L$ is unsatisfiable if, and only if,
the formula $\varphi(X,x)^Q \wedge \psi^P$ is bounded\?;
here $\varphi(X,x)^Q$ and $\psi^P$ stand for the relativisations
to two distinct unary predicates $P$~and~$Q$,
which do not occur in either formula.
Clearly, unsatisfiability of~$\psi$ implies that
$\varphi(X,x)^Q \wedge \psi^P$ is unsatisfiable and hence
has closure ordinal $0$. Conversely, if $\psi$ is satisfiable, then
structures obtained as the disjoint union of a $P$-coloured model
of $\psi$ and a $Q$-coloured part show $\varphi(X,x)^Q \wedge \psi^P$
to be unbounded. The basic idea can be
modified to suit various other situations. For instance, for modal logic,
where disjoint unions are not the right choice,
one could look at boundedness for $\varphi^Q \wedge
\Diamond (P \wedge \psi^P)$ to decide satisfiability of $\psi$.

\medskip
We turn to the above-mentioned groups of logics.

\subsection{Purely existential and universal fragments}
\label{subsec:existforall}

$\EFO[\tau] \subseteq \FO[\tau]$ is the fragment of positive,
purely existential prenex first-order formulae (with equality), where
for $\BDD$ we also allow (positive occurrences of)
monadic second-order variables.
Dually, we let $\AFO[\tau] \subseteq \FO[\tau]$
be the fragment of prenex universal
first-order formulae that are negative in all
relation symbols from the underlying relational vocabulary
$\tau$ and equality, but of course we allow positive occurrences
of monadic second-order variables.

The first interest in boundedness as a decision problem
concerned the query language \textsc{Datalog}
corresponding to the evaluation of systems of least fixed
points of relational Horn clauses of the form
\begin{align*}
  \textstyle
  X\bar x \leftarrow \exists\bar y \bigwedge_i \alpha_i(\bar x,\bar y)
\end{align*}
with relational atomic formulae~$\alpha_i$.
This Horn clause translates into
\begin{align*}
  \textstyle
  \varphi(X,\bar x) = \exists\bar y \bigwedge_i \alpha_i(\bar x,\bar y) \in \EFO
\end{align*}
in our framework. In this connection
the first decidability results were obtained in
\cite{CosmadakisGaKaVa88}, and also the strict limitations for this
decidable case became apparent \cite{GaifmanMaSaVa93,HillebrandEtAl95}.

\begin{thm}\hfill
\begin{enumerate}[label=\({\alph*}]
\item
  The monadic boundedness problem $\BDDm(\EFO)$ is decidable \textup{\cite{CosmadakisGaKaVa88}.}
\item
  Boundedness for binary least fixed points in $\EFO$
  is undecidable\?; so is boundedness even for monadic least fixed points
  in the extension of\/ $\EFO$ that allows
  negated equalities (or negative and positive
  occurrences of some of the static relations)
  \textup{\cite{GaifmanMaSaVa93,HillebrandEtAl95}.}
\qed\end{enumerate}
\end{thm}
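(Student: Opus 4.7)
The plan splits along the two parts, which require very different techniques.

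For part~(a), I would derive decidability from Theorem~\ref{thm:boundedness for ternary trees} via an $\MSO$-interpretation that exploits the tree-model property of monadic Datalog. The starting point is the observation that, for an $X$-positive $\varphi(X,\bar x) \in \EFO$ with monadic~$X$, any witness $\bar a \in \varphi^k(\fA)$ is already produced by a derivation tree of depth at most~$k$ whose branching is bounded by the number of conjuncts of~$\varphi$. Such a derivation tree is naturally a $\sigma$-tree for a tree vocabulary~$\sigma$ that uses unary predicates to record, at each node, which atoms of~$\varphi$ are fired there and how the free variables of~$\varphi$ are identified with neighbouring nodes. Given~$\varphi$, I would then construct an $X$-positive $\MSO$-formula $\widehat\varphi(X,x)$ over~$\sigma$ asserting the existence of a derivation subtree rooted at~$x$, verify that its finite stages track those of~$\varphi$, and conclude that $\varphi$ is bounded over all $\tau$-structures iff $\widehat\varphi$ is bounded over the class~$\cT$ of all tree structures. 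Decidability of the latter follows from Theorem~\ref{thm:boundedness for ternary trees}, together with the standard $\MSO$-interpretation encoding arbitrary trees into ternary ones (first-child-next-sibling).

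For part~(b), the plan is to reduce a known undecidable problem -- for instance, halting for two-counter machines, or a tiling problem for the quadrant -- to unboundedness. For the binary-fixed-point case, the arity-two variable~$X$ suffices to encode the step-by-step reachability relation between configurations of a machine~$M$\?: one constructs $\varphi_M(X,x,y) \in \EFO$ such that, on any structure $\fA_n$ encoding a sufficiently long tape, $\varphi_M^k$ captures $k$-step reachability and the closure ordinal on~$\fA_n$ matches the running time of~$M$ on that tape. Then boundedness of $\varphi_M$ is equivalent to a uniform halting bound on~$M$, which is $\Pi^0_1$-complete. For the monadic case with negated equalities (or with mixed polarities on the static predicates), the same reduction goes through after recoding the binary reachability information in unary predicates together with inequality tests that distinguish ``current'' from ``target'' positions\?; the extra expressive power of $x \neq y$ (or of allowing negative occurrences of the input relations) recovers exactly what was lost in passing to monadic~$X$.

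The central obstacle lies in~(a): showing that the derivation-tree encoding is faithful in both directions and that closure ordinals are preserved requires verifying that no non-tree-shaped witness can beat all tree witnesses for any given stage. This uses essentially the positive-existential character of~$\varphi$ -- each step of the recursion is a conjunction of atoms, so a tree-shaped proof always exists whenever any proof exists -- and this is precisely what fails in the setting of~(b). Once the tree-witness property is phrased as an $\MSO$-interpretation between monadic-Datalog boundedness and $\MSO$-boundedness over~$\cT$, the reduction to the main theorem, and hence decidability, is routine.
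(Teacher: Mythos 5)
First, a point of reference: the paper does not prove this theorem at all --- it is stated with a \qed as a quotation of known results from \cite{CosmadakisGaKaVa88} and \cite{GaifmanMaSaVa93,HillebrandEtAl95}. What the paper does do, later, is rederive part~(a) from its own main theorem: Observation~\ref{obs:classical transfer results}\,(c) and the accompanying lemma show that $\EFO$ allows transfer for $\BDDm$ to (finite) models of bounded tree-width --- via the expansion $\hat{\fA}$ that turns $\varphi$ into a strictly guarded formula, guarded tree-unfolding, and preservation of existential positive formulae under homomorphisms --- after which Theorem~\ref{theo:tree-width extension} applies. Your plan for~(a) is a genuine alternative in the same spirit: instead of keeping $\varphi$ and transferring the class of models to $\cW_k$, you change the formula to $\widehat\varphi$ over a tree vocabulary by encoding derivation trees directly. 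This is workable (it is essentially the remark in the introduction that monadic \textsc{Datalog} derivation trees yield models of bounded tree-width), but be aware of two things you gloss over. First, a derivation ``tree'' for $\varphi(X,x)=\exists\bar y\bigvee_i(\rho_i\wedge\bigwedge_{j\in s_i}Xy_j)$ has nodes that carry the whole tuple $\bar y$ together with the static atoms $\rho_i$, i.e.\ it is really a tree decomposition with bags of size $|\bar y|+1$, so your encoding into a genuine tree structure with only unary predicates is exactly the machinery of Lemma~\ref{lem: interpretation for bounded twd}, and garbage trees that do not code valid derivations must be filtered out as in Proposition~\ref{prop: boundedness and interpretations}. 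Second, faithfulness of closure ordinals in both directions needs the homomorphism argument (a depth-$j$ derivation in the code tree projects to one in $\fA$, so the root cannot enter early); you name this obstacle but it is where all the work sits.

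For part~(b) the paper offers nothing beyond the citations, so there is no proof to compare against; your sketch is the standard reduction shape, but as written it does not yet constitute an argument. The phrase ``uniform halting bound on~$M$'' should be replaced by the precise equivalence one actually proves (e.g.\ $M$ halts iff $\varphi_M$ is bounded, with closure ordinals on the family $\fA_n$ tracking the running time truncated at $n$), and the claim that negated equalities ``recover exactly what was lost'' in passing from binary to monadic $X$ is an assertion, not a construction --- the entire content of \cite{GaifmanMaSaVa93,HillebrandEtAl95} is precisely that encoding. So: (a) is a correct alternative route modulo the details flagged above; (b) is a plausible outline with the essential encoding left open.
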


\noindent As for $\BDDm(\AFO)$, whose decidability was established in \cite{Otto06},
it should be noted that the fragment $\AFO$ is strictly dual to
$\EFO$\?; but as duality of fixed points links least to greatest
fixed points, trivial dualisation of the \textsc{Datalog} result would
just cover boundedness for greatest fixed points over $\AFO$.
Indeed, the techniques employed in \cite{Otto06} for decidability
of $\BDDm(\AFO)$ owe more to a reduction inspired by the guarded
fragment (see Section~\ref{subsec:guarded} below) and also do not seem to carry over directly to
$\BDDm(\EFO)$ or vice versa.

\begin{thm}[\cite{Otto06}]
$\BDDm(\AFO)$ is decidable, and both the restriction to monadic
least fixed points and the polarity restriction built into $\AFO$ are
necessary for decidability.
\qed\end{thm}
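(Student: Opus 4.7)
The plan is to prove decidability by reducing $\BDDm(\AFO)$ to the boundedness problem for $\MSO$ over a suitable tree-like class of structures, to which the main theorem of Part~I applies via the $\MSO$-interpretation machinery developed later in Part~II. The two necessity claims will be obtained by standard tiling-style reductions, dualising arguments already known for $\EFO$.

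For the decidability direction, the central step is a tree-model property for unbounded $\AFO$ fixed points: for every $\varphi(X,x)\in\AFO$, $\varphi$ is unbounded over the class of all $\tau$-structures if and only if $\varphi$ is unbounded over the class of acyclic $\tau$-structures. One direction is trivial. For the nontrivial direction I would fix, for each $n$, a structure $\fA$ with some element $a$ at fixed-point stage $\geq n$, and form the Gaifman-graph tree unfolding $\fT$ of $\fA$ rooted at~$a$, with projection $\pi\colon\fT\to\fA$ and with each relation $R^\fT$ interpreted as those tuples that project into $R^\fA$ and all of whose components lie in a common Gaifman neighbourhood of a single unfolded node. Then I would establish, by induction on $\alpha$, the monotonicity inclusion
\begin{align*}
\pi^{-1}\bigl(\varphi^\alpha(\fA)\bigr)\ \subseteq\ \varphi^\alpha(\fT).
\end{align*}
The induction step is checked literal by literal in the universal matrix $\psi(X,x,\bar y)$ of~$\varphi$: because $\psi$ is negative in $\tau$, the possibly missing tuples in $\fT$ only strengthen literals of the form $\neg R\bar z$; because $\psi$ is negative in equality, the splitting of copies of a node under $\pi$ only strengthens literals of the form $\neg z = z'$; and because $\psi$ is positive in~$X$, the inductive hypothesis transports $X$-atoms from $\fA$ up to~$\fT$. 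Hence every $\pi$-preimage of $a$ is at stage $\geq n$ in~$\fT$, and unboundedness transfers to a class of acyclic (hence bounded tree-width) structures. Since this class is $\MSO$-interpretable in the class of ternary trees, Theorem~\ref{thm:boundedness for ternary trees} combined with the interpretation arguments of Section~\ref{sect:interpretations} yields decidability.

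For the two necessity claims I would exhibit tiling reductions. For the failure with binary inductive variables, the reduction of Hillebrand et al.\ \cite{HillebrandEtAl95} for binary $\EFO$ dualises directly: an instance of the tiling problem is encoded so that a binary least fixed point based on an $\AFO$-formula propagates tiling constraints along rows and columns of a grid, and the closure ordinal is finite precisely when the tiling has no solution. For the failure of the polarity restriction, admitting even a single positive occurrence of a static relation symbol inside a universal matrix permits the encoding of Horn-like patterns of the form $R\bar y\to\chi$, which again suffice to simulate grid-dependencies and reduce from tiling to monadic boundedness in the extended fragment.

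The main obstacle is the transfer lemma in step~one: the relational enrichment of the Gaifman-graph unfolding must be set up so that $\fT$ is acyclic of bounded tree-width, the projection $\pi$ preserves relational atoms in exactly the way required for the literal-by-literal argument, and distinct copies of a node genuinely witness the inequalities that the universal quantifier ranges over. Once this polarity bookkeeping is carried out with care, the combinatorial content of the decidability proof reduces to an application of the main theorem via $\MSO$-interpretations, while the negative parts are routine tiling encodings.
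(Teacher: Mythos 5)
First, note that the paper does not prove this statement at all: it is imported from \cite{Otto06} and stated with a q.e.d. The paper does, however, re-derive the decidability half later, via a transfer lemma showing that $\EFO$ and $\AFO$ allow transfer for $\BDDm$ to (finite) models of bounded tree-width, followed by an application of the decidability of $\BDDm(\MSO,\cW_k)$. Your overall plan is the same, but your version of the transfer lemma has a genuine gap. You establish (by the literal-by-literal polarity argument) only the inclusion $\pi^{-1}(\varphi^\alpha(\fA))\subseteq\varphi^\alpha(\fT)$, i.e., that stages transfer \emph{up} to the unfolding, and then conclude that every preimage of~$a$ has stage $\geq n$ in~$\fT$. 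That conclusion does not follow: it requires the converse inclusion $\varphi^{n}(\fT)\subseteq\pi^{-1}(\varphi^{n}(\fA))$, without which a preimage of~$a$ could enter the fixed point of~$\fT$ already at stage~$1$ and the closure ordinal of~$\fT$ could collapse. Worse, for $\AFO$ the converse is exactly the direction in which your polarity reasoning points the wrong way: the unfolding has \emph{fewer} relational tuples and \emph{more} distinct elements, so universal formulae that are negative in $\tau$ and in equality only become \emph{easier} to satisfy there. The one inclusion you prove is the ``free'' one (universal negative formulae pull back along the projection homomorphism); the other is the substantive content of the lemma.

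The paper closes this gap differently: it first rewrites $\varphi$ over an expansion $\hat\fA$ by new guard relations $R_i$ (defined by the conjunctions of atoms in the matrix), so that the quantification becomes guarded; it then takes a \emph{guarded} unfolding $\hat\fA^\ast$, uses invariance of the resulting $\GF$-formulae $\hat\varphi^\alpha$ under guarded bisimulation to get exact preservation in one direction, and the homomorphism-preservation property of the original $\varphi^\alpha$ for the other, obtaining $\cl{\varphi}{\fA}=\cl{\varphi}{\fA^\ast}$. If you want to repair your argument you would need an analogous device; a naive Gaifman-graph unfolding with a one-directional polarity check does not suffice. Two smaller points: for vocabularies of arity $>2$ the unfolded structure is of bounded tree-width but not Gaifman-acyclic, so ``acyclic'' is the wrong target class; and the two necessity claims are only gestured at (the paper likewise only cites them), so they are acceptable as a sketch but carry no weight in the comparison.
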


W.r.t.\ polarity restrictions on the static predicates in~$\tau$,
it should be noted that, as long as we consider the class of all
$\tau$-structures, it does not matter which polarity is prescribed,
since we can replace each predicate by its complement to switch
between polarities (this does not carry over from to $\BDD(L)$ to
$\BDD(L,\cC)$ unless $\cC$ is closed under predicate complementation).
What does matter, even over the class of all
$\tau$-structures, however, is whether we allow some predicates to
appear both positively \emph{and} negatively in $\varphi$.

\subsection{Logics of modal character}
\label{subsec:modal}

For a relational vocabulary $\tau$ consisting of only unary and
binary relation symbols, $\ML[\tau] \subseteq \FO[\tau]$ stands for the
\emph{modal fragment} of first-order logic. $\ML[\tau]$ is obtained
as the closure of monadic atomic formulae (where we also allow
monadic second-order variables besides unary relation symbols
in $\tau$) in a single free first-order variable under boolean
connectives and modal quantification of the form
\begin{align*}
  \psi(x) = \exists y ( Rxy \wedge \varphi(y))
  \quad\text{and, dually,}\quad
  \psi(x) = \forall y ( Rxy \to \varphi(y))
\end{align*}
for any $\varphi(y) \in \ML[\tau]$ and binary relation symbol $R \in \tau$.

The \emph{modal $\mu$-calculus} $\Lmu[\tau]$ is obtained as the natural
fixed-point extension of $\ML[\tau]$ through additional closure
under least fixed points\?: if $\varphi(X,x) \in \Lmu[\tau]$ is
positive in $X$, then $\psi(x) = \mu_X \varphi \in \Lmu[\tau]$
defines the least fixed-point $\varphi^\infty$.

\medskip
Our definition of $\ML$ is the usual embedding of basic
modal logic into $\FO$ by means of
the standard translation $\varphi \mapsto \varphi^*$,
which translates the modal
formula $\Box_R \varphi$ into
$(\Box\varphi)^* (x) = \forall y ( Rxy \to \varphi^*(y))$.
By van~Benthem's classical result in~\cite{Ben83},
$\ML[\tau]$ provides equivalent syntax for exactly those first-order
formulae in a single
free element variable whose semantics is preserved under bisimulation
equivalence. In this sense $\ML$
\emph{is} the bisimulation invariant (read\?: modal)
fragment of first-order logic. (For these and other basic facts
in the model theory of modal logic compare e.g.~\cite{GOHBML2007}).

We have similarly translated the $\mu$-calculus
in a manner that in particular turns it into a fragment of $\MSO$. In fact
$\Lmu$ \emph{is} the modal fragment of $\MSO$, in just the sense that
$\ML$ is the modal fragment of $\FO$, by an important result
of Janin and Walukiewicz \cite{JaninWalukiewicz}.

For us it will be important that $\ML \subseteq \Lmu \subseteq \MSO$ and that
$\ML$ and $\Lmu$ are preserved under bisimulation, which entails the
tree-model property.
Decidability of $\BDDm(\ML)$ was first shown in \cite{Otto99}\?;
note, however, that although that paper shows more generally that
it is decidable for an arbitrary formula of $\Lmu$ whether it is
equivalent to any formula in plain modal logic (of which
$\BDDm(\ML)$ is a special case, by the modal variant of the
Barwise--Moschovakis Theorem), it does \emph{not} deal with $\BDDm(\Lmu)$.

As will be reviewed in Section~\ref{sect:classical transfer} below,
decidability of
$\BDDm(\ML)$ and $\BDDm(\Lmu)$ can be essentially attributed to the tree-model
property stemming from bisimulation invariance.
Decidability of $\BDDm(\Lmu)$ is new here\?;
see Corollary~\ref{cor:decidability of BDD(GF), etc} below.
This result obviously implies the result of \cite{Otto99}
concerning decidability of $\BDDm(\ML)$ (but not as far as the problem
of equivalence of a given $\Lmu$-formula to some $\ML$-formula is concerned).

\begin{thm}
$\BDDm(\Lmu)$ and hence $\BDDm(\ML) \subseteq \BDDm(\Lmu)$ are decidable.
\end{thm}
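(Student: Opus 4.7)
The plan is to deduce decidability of $\BDDm(\Lmu)$ as a corollary of the forthcoming decidability of $\BDD(\GSO)$ over $\GSO$-definable classes of bounded tree-width (Corollary~\ref{cor:decidability of BDD(GF), etc}), by reducing from arbitrary Kripke structures to the tree-width-$1$ class of tree Kripke structures via bisimulation. Via the standard translation, $\Lmu$ embeds into $\MSO \subseteq \GSO$, and an $X$-positive $\Lmu$-formula $\varphi(X,x)$ becomes an $X$-positive $\MSO$-formula over a vocabulary whose binary symbols are the modal accessibility relations and whose unary symbols include the propositional variables together with the fixed-point variable~$X$.

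The key step is to exploit bisimulation invariance. $\Lmu$-formulas are preserved under bisimulation (treating $\tau \cup \{X\}$ as the vocabulary), and, crucially, all finite stages $\varphi^\alpha(X,x)$ remain in $\Lmu$ because $\Lmu$ is closed under substitution of $X$-positive $\Lmu$-formulas for~$X$. Hence every $\varphi^\alpha$ is itself bisimulation-invariant. Combined with the fact that any Kripke structure is bisimilar to its tree unfolding, this yields the equivalence that $\varphi$ is bounded by some $\alpha < \omega$ over the class of all Kripke structures if and only if it is bounded by $\alpha$ over the class of all tree Kripke structures. Since the latter class is $\MSO$-definable and of tree-width~$1$, Corollary~\ref{cor:decidability of BDD(GF), etc} applies and delivers decidability of $\BDDm(\Lmu)$. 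The statement for $\BDDm(\ML)$ follows from $\ML \subseteq \Lmu$.

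The main obstacle I expect is verifying that bisimulation invariance lifts cleanly and \emph{uniformly} to all finite stages simultaneously, so that the boundedness equivalence between Kripke structures and their tree unfoldings is witnessed by the same closure ordinal. This amounts to checking that the stage-forming substitution $\varphi[\psi/X]$ preserves membership in $\Lmu$ and that the resulting preservation under bisimulation is uniform in the stage index~$\alpha$. These are routine in view of the standard syntactic closure properties of $\Lmu$, but essential for the transfer argument, and they illustrate the general mechanism by which the main theorem of Part~I propagates through interpretations and tree-width reductions to classical fragments like~$\ML$ and~$\Lmu$.
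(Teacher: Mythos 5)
Your proposal is correct and follows essentially the same route as the paper: the paper obtains the transfer to tree models for $\Lmu$ in Observation~\ref{obs:classical transfer results} by exactly the mechanism you describe (closure of $\Lmu$ under the stage-forming substitution together with bisimulation invariance of every finite stage), and then concludes via the decidability of $\BDDm(\MSO,\cW_k)$, respectively $\BDDm(\GSO,\cW_k)$, as in Corollary~\ref{cor:decidability of BDD(GF), etc}. One small imprecision: it is the \emph{monadic} problem $\BDDm(\GSO,\cW_k)$ that is decidable, not full $\BDD(\GSO,\cW_k)$ (which the paper shows to be undecidable), but since $\Lmu$-fixed points are monadic this does not affect your argument.
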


\subsection{Guarded logics}
\label{subsec:guarded}

The \emph{guarded fragment} $\GF \subseteq \FO$ of
first-order logic extends the idea of
the local, relativised quantification of modal logic to the setting
of higher-arity relations. Since its inception in~\cite{ABN} the guarded
fragment and its extensions have been shown to mirror many of the nice
model-theoretic properties of modal logic in this more general setting.
Just like $\ML$ and its fixed-point extension $\Lmu$,
$\GF$ as well as its fixed-point extension $\muGF$
are decidable for satisfiability, cf.\ \cite{ABN,Graedel99,GW}.
Their roles as the guarded bisimulation invariant fragments of $\FO$
and a suitable guarded second-order logic are strictly analogous to those
of $\ML$ and $\Lmu$ as
bisimulation invariant fragments of $\FO$ and $\MSO$\?:
$\GF \subseteq \FO$ captures precisely those
$\FO$ definable properties that are preserved under guarded bisimulation
\cite{ABN}, and similarly for $\muGF \subseteq \GSO$ w.r.t.\
the natural guarded second-order logic $\GSO$, \cite{GHO}.
Like $\ML$, $\GF$ still has the finite model property, and
both $\GF$ and $\muGF$ have a generalised tree-model
property~\cite{Graedel99,GW}, which implies
in particular that every satisfiable formula of
$\muGF[\tau]$ is satisfiable in a model whose tree-width
is bounded by the width of $\tau$ (maximal arity of relations in
$\tau$). But note that $\muGF$ does not have the finite model
property, in fact this is already true of the extension of $\Lmu$
that admits modal operators along backward edges (inverse or past
modalities).
$\GF$ has long been considered a good candidate for decidability of
$\BDD(\GF)$.

\smallskip
Let us define these logics and the concept of guardedness in more detail.
A subset of a $\tau$-structure $\fA$ is called \emph{guarded}
if it is a singleton set or a set of the form
$\set{ a }{ a \in \bar{a} }$ for some $\bar{a} \in R^\fA$,
$R \in \tau$. Clearly the cardinality of guarded subsets in
$\tau$-structures is bounded by the width of $\tau$.
A tuple is guarded if the set of its components is contained
in some guarded subset.
A subset $W \subseteq A^r$ is called a \emph{guarded relation} over $\fA$
if all tuples $\bar{a} \in W$ are guarded in $\fA$.

Syntactically, a \emph{guard} for variables $\bar{x}$ is an atomic
formula $\alpha(\bar{x}) \in \FO[\tau]$ (relational atom or equality)
in which precisely the variables $x \in \bar{x}$ occur (as free variables).

\emph{Guarded quantification} is relativised first-order quantification
of the form
\begin{align*}
  \exists \bar y ( \alpha(\bar x) \land \varphi(\bar x) )
  \quad\text{and, dually,}\quad
  \forall \bar y ( \alpha(\bar x) \to \varphi(\bar x) )\,,
\end{align*}
where $\alpha$ is an atom (viz., a guard for $\bar{x}$),
$\mathrm{free}(\varphi) \subseteq \mathrm{free}(\alpha) = \set{x}{x \in \bar x}$
and $\bar y$ is any
tuple of (distinct) variables from $\mathrm{free}(\alpha)$.

\begin{defi}
\textup{(a)}
$\GF[\tau] \subseteq \FO[\tau]$, the \emph{guarded fragment}
of first-order logic, is obtained as the closure of
atomic $\FO[\tau]$-formulae under boolean connectives and
guarded quantification. We stress that, even if we admit a
second-order variable~$X$, $X$~may \emph{not} be used as
a guard for quantificational purposes.

\textup{(b)}
\emph{Guarded fixed-point logic} $\muGF$ is the natural extension
of $\GF$ that is additionally closed under the formation of
least fixed points over $X$-positive formulae.
Note again that second-order variables, which may
occur free or bound in formulae of $\muGF[\tau]$,
must not be used as guards.

\textup{(c)}
Also define \emph{strictly guarded} formulae of these logics to be those
formulae whose free first-order variables are explicitly guarded\?:
$\varphi(\bar x)$ is strictly guarded if it can only be satisfied by
guarded assignments to~$\bar x$ (a syntactic normal form can be
obtained with the help of the $\GF$-formula $\gdd(\bar x)$ below).
We denote these restrictions as $\GFs \subseteq \GF$ and $\muGFs \subseteq \muGF$.
\end{defi}

It is clear that $\ML \subseteq \GFs$ and $\Lmu \subseteq \muGFs$.
We also note in passing that there is, for every finite $\tau$ and arity $r$,
a $\GFs[\tau]$-formula
$\gdd(x_1,\dots, x_r)$ that uniformly defines the set of all
guarded $r$-tuples in $\tau$-structures $\fA$\?:
\begin{align*}
    \set{ \bar{a} \in A^r }{ (\fA,\bar{a}) \models \gdd(\bar{x}) }
  = \set{ \bar{a} \in A^r }{ \bar{a} \text{ guarded in } \fA }\,.
\end{align*}

Clearly these formulae can be used to restrict arbitrary
relations to their guarded parts.
For strictly guarded formulae we thus obtain a normal form of
\begin{align*}
  \gdd(\bar x) \wedge \varphi(\bar x)
\end{align*}
where $\bar x$ is the tuple
of all the free first-order variables of $\varphi$.

\smallskip
For \emph{guarded second-order logic} there are several formalisations,
which were shown to be equally expressive \emph{in the absence of free
second-order variables} in~\cite{GHO}.
As we shall see 
as a consequence of Theorems~\ref{thm:GFmuGF} and~\ref{thm:FOundec}
below, this equivalence breaks down
if free second-order variables (for the generation
of non-monadic least fixed points) are admitted.

Specifically, one can define $\GSO$ as the extension of either $\GF$
or $\FO$ by second-order quantifiers ranging over guarded relations.
This can be enforced syntactically by means
of the formulae $\gdd(\bar x)$ that uniformly
define the sets of all guarded $r$-tuples\?;
alternatively one can stick with ordinary
second-order syntax and modify the semantics
to admit just guarded relations as instantiations for
second-order variables (guarded semantics).
The equivalence between these two definitions according to
\cite{GHO} breaks down in the presence of free
second-order variables of arity greater than~$1$,
since such variables are not allowed to serve as guards.
Therefore, we introduce two variants of guarded second-order logic.
As we shall see below, the corresponding boundedness problems
are different\?: one is decidable for arbitrary fixed points,
while the other one is only decidable for monadic fixed points.

\begin{defi}
\emph{Guarded second-order logic} $\GSO[\tau]$ is the extension
of $\FO[\tau]$ by quantification over guarded relations.
We denote by $\GSOg[\tau]$ the fragment of $\GSO[\tau]$
where all first-order quantifications are guarded.

Again, we denote by $\GSOs$ and $\GSOgs$ the respective
fragments of \emph{strictly guarded} formulae,
in which the tuple of free first-order variables
is explicitly guarded.
\end{defi}

Clearly $\GF \subseteq \muGF \subseteq \GSOg \subseteq \GSO$.
Similar inclusions hold for the corresponding strict fragments.
Furthermore, $\MSO \subseteq \GSO$
since monadic relations are guarded (by the equality predicate).
We shall see that
the restriction to least fixed points that are guarded --
i.e., fixed points of formulae in the starred logics --
is the right counterpart, in the guarded world, for
monadic fixed points.
For the boundedness problem, moreover, we shall have
reductions from $\BDD(\GF)$ to $\BDD(\GFs)$ and
from $\BDD(\GSOg)$ to $\BDD(\GSOgs)$, see Section~\ref{GFredsec}.

\medskip
The guarded fragment $\GF$ as well as its fixed-point extension $\muGF$
are preserved under \emph{guarded bisimulation,} the infinitary game
equivalence associated to the restricted quantification pattern of
guarded quantification. Guarded bisimulation equivalence plays a role for
guarded logics that is analogous to the role of ordinary bisimulation
for modal logics. In fact, just as modal logic is
the bisimulation-invariant fragment of first-order logic \cite{Ben83},
so $\GF$ corresponds to the fragment of first-order logic
that is invariant under guarded bisimulation \cite{ABN}\?;
and just as $\Lmu$~is the bisimulation-invariant fragment
of monadic second-order logic \cite{JaninWalukiewicz},
so $\muGF$ corresponds to the fragment of $\GSO$ that is
invariant under guarded bisimulation \cite{GHO}.
Note that, despite its name, $\GSO$ is not invariant under guarded bisimulation.
The model theory and crucial algorithmic properties of $\GF$ and $\muGF$
are discussed in \cite{Graedel99} and \cite{GW}. For both logics,
much of their well-behavedness is due to invariance under
guarded bisimulation equivalence, and, consequently, the
`generalised tree-model property' \cite{Graedel99}\?:
by means of a natural process of guarded tree unfolding, any structure
can be transformed into a guarded bisimilar structure that admits a
tree-decomposition based on guarded subsets. Hence any satisfiable
formula of $\GF$ or $\muGF$ has a model which is guarded tree-decomposable
so that its tree-width is bounded by the width of the underlying vocabulary.

\medskip
Because of its vicinity to the modal fragment,
$\GF$ has been a promising candidate for
decidability of boundedness, even not just for
monadic least fixed points. Approaches to $\BDD(\GF)$
along those lines that worked for $\ML$ and even for $\AFO$
-- viz., the use of invariance under guarded bisimulation and
the guarded version of the Barwise--Moschovakis theorem --
have not been successful. Our present techniques do indeed
yield decidability of $\BDD(\GF)$,
see Corollary~\ref{cor:decidability of BDD(GF), etc},
and thus settle a major open problem in the classical
context. As we do not rely on either compactness or locality
criteria in our approach, we do get a much stronger result
in Theorem~\ref{thm: bdd for GSO over bounded twd},
concerning the decidability
of $\BDD(\GSOs,\cW_k)$, the boundedness problem for
least fixed points over $\GSOs$-formulae over the class
of all relational structures of tree-width up to~$k$.
This decidability is even uniform w.r.t.\ tree-width, so that 
both the $X$-positive $\GSOs$-formula and the tree-width 
parameter~$k$ may be regarded as input to a single algorithm.

\begin{thm}\label{thm:GFmuGF}
The following are decidable\?:
$\BDD(\GF)$, $\BDD(\muGF)$, $\BDD(\GSOg,\cW_k)$,
$\BDD(\GSOs,\cW_k)$, $\BDDm(\GSO,\cW_k)$.
\end{thm}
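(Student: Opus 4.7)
The plan is to derive all five decidability assertions by reducing each, through the transfer properties and $\MSO$-interpretations to be established in Sections~\ref{sect:transfer} and~\ref{sect:interpretations}, to the monadic boundedness problem for $\MSO$ on ternary trees, which is already decidable by Theorem~\ref{thm:boundedness for ternary trees}. The first step disposes of $\BDD(\GF)$ and $\BDD(\muGF)$ via the generalised tree-model property: every satisfiable $\muGF[\tau]$-formula (hence every $\GF[\tau]$-formula) has a model of tree-width at most $k_\tau$, the width of $\tau$. Since unboundedness of $\varphi$ is witnessed by satisfiability of $\varphi^n\wedge\neg\varphi^{n-1}$ for arbitrarily large $n$, this property lifts to boundedness: $\varphi$ is bounded iff it is bounded over $\cW_{k_\tau}$. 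Together with the inclusions $\GF\subseteq\muGF\subseteq\GSOg$, the first three items thereby reduce to $\BDD(\GSOg,\cW_k)$. A further strictification step along the lines of Section~\ref{GFredsec} prepends the $\GF$-definable guard $\gdd(\bar x)$ to the fixed-point formula, reducing $\BDD(\GSOg,\cW_k)$ to $\BDD(\GSOgs,\cW_k)\subseteq\BDD(\GSOs,\cW_k)$. It therefore remains to settle $\BDD(\GSOs,\cW_k)$ and $\BDDm(\GSO,\cW_k)$.

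Both are handled by the same interpretation scheme. Every structure $\fA\in\cW_k$ admits an $\MSO$-interpretation in a labelled tree $\fT_\fA$ that encodes a width-$k$ tree decomposition of $\fA$, and the class of labelled trees that genuinely arise as such encodings is itself $\MSO$-definable, so one may work uniformly over $\cT$. The decisive feature of the \emph{guarded} setting is that every guarded $r$-tuple of $\fA$ lives inside some bag of the decomposition, and each bag has size at most $k{+}1$\?; hence an $r$-ary guarded relation on $\fA$ is faithfully coded by a \emph{tuple} of \emph{monadic} predicates on $\fT_\fA$, which mark for each bag which of its finitely many $r$-tuples are selected. Under this coding an $X$-positive $\GSOs$-formula $\varphi(X,\bar x)$ translates to a positive $\MSO$-formula $\varphi^{\ast}(\bar X,x)$ in a tuple $\bar X$ of monadic variables, and the fixed-point iteration of $\varphi$ over $\fA$ matches that of $\varphi^{\ast}$ over $\fT_\fA$ stage by stage, so that boundedness is preserved. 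For $\BDDm(\GSO,\cW_k)$ the scheme applies more directly: $\GSO$ on bounded tree-width is $\MSO$-interpretable in $\fT_\fA$ in the usual Courcelle-style manner, and the single monadic variable $X$ translates to a single monadic $X^{\ast}$.

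The main obstacle is the passage from the resulting \emph{simultaneous} monadic fixed point over a tuple $\bar X = (X_1,\dots,X_m)$ of monadic predicates on $\fT_\fA$ to the \emph{single-variable} setting of Theorem~\ref{thm:boundedness for ternary trees}. A tuple of monadic predicates on $\fT_\fA$ is equivalent to a single monadic predicate on the disjoint union of $m$ labelled copies of $\fT_\fA$ glued below a fresh common root, and with some care this can be arranged as a ternary tree on which a single $X$-positive $\MSO$-formula drives an equivalent fixed-point iteration. One has to verify that this merger preserves $X$-positivity, that uniform boundedness of the resulting single-variable iteration on the glued trees is equivalent to uniform boundedness of the original simultaneous iteration over $\cW_k$, and that the $\MSO$-definable side-condition that the labelled tree genuinely codes a width-$k$ tree decomposition can be absorbed into the formula without damaging $X$-positivity. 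With these routine but somewhat delicate checks in place, Theorem~\ref{thm:boundedness for ternary trees} settles all five boundedness problems in a uniform manner.
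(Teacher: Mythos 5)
Your proposal follows essentially the same route as the paper: the generalised tree-model property reduces $\BDD(\GF)$ and $\BDD(\muGF)$ to bounded tree-width, the inclusions $\GF\subseteq\muGF\subseteq\GSOg$ together with the strictification of Section~\ref{GFredsec} funnel everything into $\BDD(\GSOs,\cW_k)$, and the tree-decomposition interpretation with guarded relations coded by tuples of monadic predicates, followed by the reduction of simultaneous monadic fixed points to a single one over ternary trees, is exactly how the paper proves Theorem~\ref{thm: bdd for GSO over bounded twd} and Corollary~\ref{cor:decidability of BDD(GF), etc}. The only caveats are details you correctly defer to the cited sections: the strictification is not mere syntactic prepending of $\gdd(\bar x)$ but rests on the uniform bound $\cl{\varphi}{\fA}\leq\cl{\varphi^g}{\fA}+n$ of Proposition~\ref{prop: GSOg reduces to GSOgs}, and the passage from arbitrary trees to ternary trees additionally uses the countable-branching (L\"owenheim--Skolem) property of Proposition~\ref{prop:countable branching property for MSO}.
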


The transfer and reduction techniques to be discussed below
immediately show that decidability for $\BDD(\GFs)$ and
$\BDD(\muGFs)$ are an immediate
consequence of decidability for $\BDD(\GSOs,\cW_k)$.
These results essentially invoke the generalised tree-model
property of $\GF$. 

As far as undecidability results are concerned,
we have the following fundamental result,
which follows from the proof given in \cite{GaifmanMaSaVa93}.
\begin{thm}\label{thm:FOundec}
$\BDD(\FO,\cP)$ is undecidable,
where $\cP$~is the class of all finite paths.
\qed\end{thm}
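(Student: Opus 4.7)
The plan is to reduce from an undecidable problem --- the halting problem for deterministic Turing machines on empty input --- by a construction in the spirit of~\cite{GaifmanMaSaVa93}. Given a deterministic TM~$M$, I would produce an $\FO$-formula $\varphi_M(X,\bar x)$ over a tree vocabulary (the edge relation~$E$ plus finitely many unary predicates and one constant symbol) with a fixed-point variable~$X$ of arity at least two, such that $\varphi_M$ is bounded over the class~$\cP$ of finite paths if, and only if, $M$ halts on the empty input.

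The encoding treats a path of length~$n$ as a resource along which a space-time tableau of $M$'s computation can be unrolled\?: I~would use pairs $(x,y)$ of path vertices to index cells of the tableau, the first coordinate representing a tape position and the second a time position. Since $E$~provides local navigation between neighbours, the formula $\varphi_M$ can express the TM's one-step transition locally, requiring that a cell $(x,y)$ enters~$X$ precisely when its predecessor cells $(x',y')$ in the tableau (with $y'$ the $E$-neighbour of $y$ in the direction of a distinguished ``time zero'' endpoint marked by a constant) have already entered~$X$ in a configuration that licenses a legal $M$-move. To keep the iteration monotone, each cell is labelled by its content via extra coordinates of~$X$, or equivalently via a family of binary fixed-point variables $X_{q,a}$ bundled into one higher-arity~$X$\?; this way, each cell receives its unique content exactly once and the fixed point merely fills up the tableau row by row.

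With this setup, on a path of length~$n$ the fixed-point iteration computes the first $n$~rows of the $M$-tableau and then stabilises. If $M$~halts in $k$~steps on empty input, then every fixed-point iteration stabilises after a fixed function of~$k$ stages, independently of~$n$, so $\varphi_M$ is bounded. If $M$~does not halt, then on paths of length~$n$ the iteration requires $\Theta(n)$ stages and the closure ordinals grow without bound, so $\varphi_M$ is unbounded. Thus we obtain a computable reduction from halting to $\BDD(\FO,\cP)$, proving undecidability.

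The hard parts are two technicalities. First, paths are undirected, so the formula must not covertly rely on an orientation\?; this is handled by using a constant symbol to mark the ``time zero'' endpoint and propagating the simulation outwards from it, so that the $y$-direction of the tableau is implicitly defined by distance from this root. Second, one must verify that the closure ordinal tracks the number of simulated steps closely enough that boundedness of $\varphi_M$ corresponds exactly to halting of~$M$, with no slack introduced by auxiliary bookkeeping\?; this is the standard monotonicity argument of~\cite{GaifmanMaSaVa93} and goes through once the transition rule is faithfully expressed in~$\FO$ over~$E$ and the unary labels.
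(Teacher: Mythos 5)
The paper gives no construction here at all: it simply records that the reduction of~\cite{GaifmanMaSaVa93} already produces its unbounded witnesses on finite paths. So your plan --- a halting-problem reduction in which a fixed point of arity at least~$2$ fills in a Turing-machine space-time tableau indexed by pairs of path vertices --- is exactly the intended argument, and you are right to insist on arity $\geq 2$: the monadic problem over paths is decidable (cf.\ Theorem~\ref{theo:lower complexity bound}(a) and Corollary~\ref{cor:decidability and subclasses}), so no monadic encoding could work.

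There is, however, a concrete gap in your treatment of orientation. ``The $E$-neighbour of $y$ in the direction of the marked endpoint'' is not $\FO$-definable on an undirected path: which of the two neighbours of~$y$ lies towards~$c$ is a distance/reachability property, and naming~$c$ by a constant does not make it local. If you instead let a cell be derived from \emph{any} neighbouring row already present in~$X$ (which is what ``propagating outwards from the root'' amounts to once direction is unavailable), then as soon as row $t+1$ is complete the same rule derives, backwards, the configuration $C_{t+2}$ into row~$t$; cells then carry several contents, mixed configurations license further spurious derivations, and this garbage can keep reaching fresh rows for $\Theta(n)$ stages even when $M$~halts --- which destroys the equivalence ``bounded iff halting''. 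The same difficulty affects the left/right orientation of the tape coordinate. The standard repair uses the unary predicates available in a tree vocabulary: restrict attention to paths carrying a cyclic $3$-colouring starting at the endpoint~$c$. Properness of such a colouring is a local, hence $\FO$-expressible, condition, and on properly coloured paths the successor direction is determined locally by the colours. Arrange for $\varphi_M$ to be identically false (hence trivially bounded) on improperly coloured paths; unboundedness for non-halting~$M$ is then still witnessed by the properly coloured paths of growing length. With this (or an equivalent symmetry-breaking device) in place, the rest of your argument goes through.
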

\begin{cor}
$\BDD(\GSO,\cW_k)$ is undecidable.
\end{cor}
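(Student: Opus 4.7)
The plan is to reduce $\BDD(\FO,\cP)$, which is undecidable by Theorem~\ref{thm:FOundec}, to $\BDD(\GSO,\cW_k)$ for any $k \geq 1$. The inclusions $\FO \subseteq \GSO$ and $\cP \subseteq \cW_k$ (paths have tree-width~$1$) are not by themselves sufficient, since a formula that is bounded on the narrower class~$\cP$ of paths may well fail to be bounded on the larger class~$\cW_k$. The idea is to apply a simple syntactic trick that forces the fixed-point iteration to coincide with the iteration over paths on path-structures, while making it trivial on all other structures in $\cW_k$.

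First, let $\mathrm{Path}$ be a $\GSO$-sentence expressing that the structure is a finite path, i.e., that the edge relation is symmetric, acyclic, connected, and every vertex has degree at most~$2$. Each conjunct is expressible in $\MSO$ (connectedness and acyclicity via monadic second-order quantification, the degree bound in $\FO$), and $\MSO \subseteq \GSO$ because monadic relations are guarded by equality. Then, given an $X$-positive $\FO[\tau]$-formula $\varphi(X,x)$ over the vocabulary~$\tau$ of paths, I would define
\begin{align*}
  \varphi'(X,x) := \parlr{\mathrm{Path} \wedge \varphi(X,x)} \vee \neg\mathrm{Path}\,.
\end{align*}
Since $\mathrm{Path}$ contains no occurrence of~$X$, positivity in~$X$ is preserved, so $\varphi' \in \GSO[\tau]$ is an $X$-positive formula suitable for the boundedness problem.

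The correctness of the reduction is then straightforward. On any path $\fP \in \cP$ the formula $\varphi'$ is equivalent to $\varphi$, so the stages $(\varphi')^\alpha(\fP)$ agree with $\varphi^\alpha(\fP)$ for every~$\alpha$. On any $\fA \in \cW_k \smallsetminus \cP$ we have $\varphi' \equiv \top$, so $(\varphi')^1(\fA) = A = (\varphi')^\infty(\fA)$ and the closure ordinal is at most~$1$. Hence if $\varphi$ is bounded on~$\cP$ by some $N < \omega$, then $\varphi'$ is bounded on~$\cW_k$ by $\max(N,1)$\?; conversely, if $\varphi'$ is bounded on~$\cW_k$ by~$M$, then, since $\cP \subseteq \cW_k$, $\varphi$~is bounded on~$\cP$ by~$M$. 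Therefore $\varphi \mapsto \varphi'$ is a computable reduction from $\BDD(\FO,\cP)$ to $\BDD(\GSO,\cW_k)$, and the latter inherits undecidability from the former.

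There is no real obstacle here beyond matching vocabularies and confirming that $\mathrm{Path}$ is genuinely $\GSO$-expressible\?; once the observation is made that $\MSO \subseteq \GSO$, the reduction goes through directly. The scheme also makes clear why the decidability results of Theorem~\ref{thm:GFmuGF}, which rely on the \emph{guarded} restriction in $\muGF$ or~$\GSOg$, cannot possibly extend to~$\GSO$\?: the quantifier $\neg\mathrm{Path}$ and the disjunction with~$\neg\mathrm{Path}$ both rest on unguarded first-order quantification over the whole structure, which is precisely the feature that $\GSOg$ forbids.
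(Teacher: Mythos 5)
Your approach is exactly the one the paper intends: Theorem~\ref{thm:FOundec} plus the observation that the finite paths form a $\GSO$-definable (indeed $\MSO$-definable) subclass of~$\cW_k$, so that relativising via $(\mathrm{Path} \wedge \varphi) \vee \neg\mathrm{Path}$ transfers undecidability. However, two details in your execution need repair. First, the conjuncts you list for $\mathrm{Path}$ --- symmetric, acyclic, connected, degree at most~$2$ --- do \emph{not} express that the structure is a \emph{finite} path: the one-way infinite ray and the two-way infinite line also satisfy them, and both have tree-width~$1$, hence lie in~$\cW_k$. This actually breaks the forward direction of your reduction: one can write an $X$-positive $\FO$-formula that is trivially bounded on every finite path (e.g.\ by guarding it with the first-order sentence asserting the existence of two endpoints) but whose iteration on the ray computes distance from the endpoint and thus needs $\omega$~stages; your $\varphi'$ would then be unbounded over~$\cW_k$ although $\varphi$ is bounded over~$\cP$. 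The fix is easy and stays first-order: among connected, acyclic, degree-$\leq 2$ graphs, the finite ones are exactly those with a single vertex or with two distinct vertices of degree at most~$1$, so add that conjunct to $\mathrm{Path}$.

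Second, you carry out the reduction for a formula $\varphi(X,x)$ with a \emph{monadic} fixed-point variable. But $\BDDm(\FO,\cP)$ is decidable (Theorem~\ref{theo:lower complexity bound}(a)); the undecidability in Theorem~\ref{thm:FOundec} concerns fixed points of higher arity (it descends from the undecidability of boundedness for binary least fixed points in \cite{GaifmanMaSaVa93}), which is also why the corollary does not contradict the decidability of $\BDDm(\GSO,\cW_k)$ and $\BDD(\GSOs,\cW_k)$. So you must start from $\varphi(X,\bar x)$ with $X$ of arity $r \geq 2$. Fortunately your construction is insensitive to the arity --- $\mathrm{Path}$ mentions neither $X$ nor $\bar x$, so positivity and the stagewise analysis go through verbatim --- but as written the reduction starts from a decidable problem and proves nothing.
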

In the same way, we obtain undecidability of $\BDD(L,\cC)$
for every logic $L \supseteq \FO$ and class $\cC \supseteq \cP$
in which the class of all finite paths is $L$-definable.
Examples include boundedness of $\MSO$ over the class of all trees, over
the class of all finite trees, or over the class of all structures
of tree-width~$k$.

\medskip
The fragments discussed so far are closed under (at least)
positive boolean connectives and relativisation to unary
predicates. They are also closed under the substitution
operation used in defining the finite stages of fixed points.
So Observation~\ref{obs:SAT vs BDD}
applies to all of them and highlights the role of
$\EFO$, $\AFO$, $\ML$, $\Lmu$, $\GF$ and $\muGF$ as natural candidates
for decidability of $\BDD(L)$.
For $\FO$, $\MSO$ and $\GSO$ on the other hand,
not $\BDD(L)$ but at best
$\BDD(L,\cC)$ for suitably restricted classes $\cC$
can be decidable.

\section{Transfer properties for BDD}
\label{sect:transfer}

Model-theoretic transfer results involving special, restricted classes
of models are often useful. Key examples are provided by
the finite model property or the tree-model property, which, as
transfer results for satisfiability, can be useful
towards establishing decidability of $\SAT(L)$. The following
introduces a similar notion in connection with the boundedness
problem. The most far-reaching among these properties, which in the light
of our key result yields the strongest decidability consequences
for the boundedness problem, is the bounded-tree-width property.
We first define a general notion of transfer, then several
concrete specialisations that we need in the sequel.

\begin{defi}
A logic~$L$ allows \emph{$\cC$-to-$\cC'$ transfer for $\BDD$}
if, for all $\varphi \in L$,
$\varphi$~is bounded over~$\cC$ iff it is bounded over~$\cC'$\?:
$\BDD(L,\cC) = \BDD(L,\cC')$.

A logic~$L$ has the \emph{$\cC$-property for $\BDD$}
if it allows transfer from the class of all structures to $\cC$\?;
i.e., if $\BDD(L) = \BDD(L,\cC)$.
\end{defi}

Let $\cW_k$ stand for the class of all relational structures
of tree-width up to~$k$\?;
similarly $\cT_k$ stands for the class of tree models
of branching degree up to~$k$.

In accordance with the above, we say that
$L$~has the \emph{tree-width-$k$ property} for $\BDD$ for some concrete
bound~$k$ if $\BDD(L) = \BDD(L,\cW_k)$. In a similar spirit,
one could consider transfer properties from the class of all tree
models to the class of $k$-branching tree models, for concrete bounds $k$.
In both cases, however, our decidability arguments 
require just a computable dependence of the width parameter
on the input $\varphi \in L$, rather than a uniform constant bound.
This motivates the following.

\begin{defi}
We say that $L$ has the \emph{bounded-tree-width property for $\BDD$}
if, for some computable function~$f$,
$\varphi \in L$ is bounded iff $\varphi$~is bounded over $\cW_{f(\varphi)}$
(transfer to models of bounded tree-width).

Similarly, $L$ has the \emph{bounded-branching property for
$\BDD$ over trees} if, for some computable function~$f$,
$\varphi \in L$ is bounded over the class of all tree models
iff it is bounded over $\cT_{f(\varphi)}$
(transfer to tree models of bounded branching).
\end{defi}

In all natural cases a $\cC$-model property (transfer for $\SAT(L)$)
implies a $\cC$-property for $\BDD$. This is clearly the
case if $L$ is closed under the kind of substitution used to define
the finite stages and under boolean connectives. In that case,
the finite stages $\varphi^\alpha$ for $\alpha < \omega$ and the finite
stage increments $\varphi^{\alpha +1} \wedge \neg \varphi^\alpha$ are
definable by formulae in~$L$ and $\varphi$~is unbounded iff
all these formulae are satisfiable.

Concerning the finite model property for $\BDD$, note that
(even for fragments $L \subseteq \FO$) it does not
imply decidability of $\BDD(L)$\?: one still would need to check
satisfiability for each member of the infinite family
$\varphi^{\alpha+1} \wedge \neg \varphi^\alpha$ (albeit just in finite models).

\subsection{Transfer results for classical fragments}
\label{sect:classical transfer}

We collect some transfer results for the fragments
and logics discussed in the last section.

\begin{observation}\label{obs:classical transfer results}\hfill
\begin{enumerate}[label=\({\alph*}]
\item
  $\EFO$, $\AFO$, $\ML$, $\Lmu$ and\/ $\GF$
  have the \emph{finite model property} for $\BDD$ just as for $\SAT$.
\item[\normalfont(b)]
  $\ML$ and\/ $\Lmu$ have the \emph{tree-property} for
  $\SAT$ and\/ $\BDD$\?;
  $\ML$ even allows transfer to \emph{finite} tree-models
  of bounded branching.
\item
  $\EFO$, $\AFO$, $\ML$, $\Lmu$, $\GF$ and $\muGF$ all have
  the \emph{bounded-tree-width property} for $\SAT$ and\/ $\BDD$.
  Among these, the modal logics $\ML$, $\Lmu$ even allow transfer to
  tree models of bounded branching\?;
  $\EFO$, $\AFO$, $\GF$ and $\muGF$ allow transfer to models of
  bounded tree-width, in the case of $\EFO$, $\AFO$, $\GF$ even
  to \emph{finite} models of bounded tree-width.
\end{enumerate}
\end{observation}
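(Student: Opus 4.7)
The proof plan is to derive all three parts from the corresponding SAT-transfer properties via the general principle laid out immediately above the observation. Recall that an $X$-positive $\varphi \in L$ is unbounded over a class~$\cC$ iff, for every $\alpha < \omega$, the \emph{increment formula} $\chi_\alpha := \varphi^{\alpha+1} \wedge \neg \varphi^\alpha$ is satisfied in some member of~$\cC$. Hence, if the family $(\chi_\alpha)_\alpha$ lies in some sub-logic~$L'$ possessing the SAT-transfer property from all structures to~$\cC$, then BDD-transfer for~$L$ follows immediately. The entire proof then reduces to two tasks\?: locating the~$\chi_\alpha$ in a well-behaved~$L'$, and invoking classical SAT-transfer results for~$L'$.

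For $\ML$ and $\GF$ the first task is trivial, since both are closed under negation and under the substitution defining finite stages, so $\chi_\alpha$~stays in the fragment itself\?; the classical finite-model and bounded-branching (respectively bounded-tree-width) model properties of~\cite{Graedel99} then carry over and yield the $\ML$- and $\GF$-entries in~(a), (b) and~(c). For $\Lmu$ and $\muGF$ one uses closure under De~Morgan dualisation $\mu \leftrightarrow \nu$ inside the $\mu$-calculus, respectively inside $\muGF$ extended by greatest fixed points\?; the tree/bounded-branching property for $\Lmu$ (via guarded tree-unfolding of Kripke models) and the generalised tree-model property for $\muGF$ of~\cite{GW} then supply the corresponding BDD-transfers. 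This handles all of~(b) and all $\Lmu$-, $\muGF$-entries of~(a), (c)\?; note that $\muGF$ lacks the finite model property and is correctly omitted from~(a).

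The delicate cases are $\EFO$ and $\AFO$, which are not closed under negation\?: here $\varphi^\alpha$~lies in the fragment itself but $\neg\varphi^\alpha$~lies only in the dual fragment, so $\chi_\alpha$ is a genuine $\FO$-formula and no generic $\FO$ SAT-transfer is available (indeed it fails). For~$\EFO$ (monadic \textsc{Datalog}) I would argue directly\?: any model satisfying $\varphi^{\alpha+1}$ at a given tuple does so via a monadic \textsc{Datalog} derivation tree of depth $\alpha{+}1$ and arity bounded in~$\varphi$. Restricting the given model to the image of this derivation tree yields a finite substructure of tree-width bounded in~$\varphi$, and $\neg\varphi^\alpha$~is preserved by this restriction because $\EFO$-formulae, being existential-positive, can only lose, never gain, witnessing tuples upon passage to substructures. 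Hence $\chi_\alpha$~has a finite model of bounded tree-width whenever satisfiable. For~$\AFO$ the argument is dual and reduces to the guarded-style construction of~\cite{Otto06}.

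This substructure-preservation step for $\EFO/\AFO$ is the main obstacle, as it is the one place where the general principle does not reduce mechanically to a classical SAT-transfer theorem\?; everywhere else the proof is a routine chaining of the increment-formula principle with well-known SAT-transfer properties of modal, guarded and fixed-point-modal logics.
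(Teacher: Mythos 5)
Your overall strategy --- reduce $\BDD$-transfer to $\SAT$-transfer for the stage-increment formulae $\varphi^{\alpha+1}\wedge\neg\varphi^\alpha$ whenever the fragment is closed under negation and stage substitution, and treat $\EFO$, $\AFO$ separately --- is exactly the paper's (this is the paragraph preceding the observation, together with the lemma on $\EFO$/$\AFO$). Parts (a), (b) and the modal/guarded entries of (c) are handled correctly, and you rightly identify the $\EFO$/$\AFO$ case of (c) as the only place where real work is needed.

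However, your argument for that case has a genuine gap at the tree-width step. You restrict the model to the (finite) set of elements occurring in a derivation tree of depth $\alpha+1$ and claim the resulting substructure has tree-width bounded in $\varphi$. That is not true: the induced substructure inherits \emph{all} relations of~$\fA$ among these elements, including tuples that play no role in the derivation (for instance, tuples of relation symbols not even occurring in~$\varphi$), so its Gaifman graph can contain arbitrarily large cliques or grids; the only bound you actually get is its cardinality, which grows with~$\alpha$, whereas the bounded-tree-width property requires a bound depending on~$\varphi$ alone. Keeping only the tuples used in the derivation does not help either, because the same element of~$\fA$ may occur at many nodes of the derivation tree, and the union of the resulting small ``bags'' can again have unbounded tree-width (and violates the connectivity condition of a tree decomposition). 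The missing idea is to \emph{duplicate} elements, i.e.\ to unravel: the paper expands~$\fA$ by guard relations $R_i$ interpreted by the quantifier-free kernels $\rho_i$ of~$\varphi$, so that $\varphi$ becomes a strictly guarded formula $\hat\varphi$, and passes to a guarded tree unfolding $\fA^\ast$ with projection homomorphism $\pi:\fA^\ast\to\fA$; then $\fA^\ast,a\models\varphi^\alpha$ iff $\fA,\pi(a)\models\varphi^\alpha$ follows by combining guarded-bisimulation invariance of $\hat\varphi^\alpha$ (for the downward direction, using $R_i\to\rho_i$) with preservation of the existential-positive $\varphi^\alpha$ under the homomorphism~$\pi$ (for the upward direction). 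This yields tree-width bounded by $\max$ of the width of~$\tau$ and $k+1$, uniformly in~$\alpha$, and preserves the whole closure ordinal at once. Your homomorphism-preservation observation for $\neg\varphi^\alpha$ is the right ingredient for the upward direction, but without the unfolding it does not deliver the uniform tree-width bound, and the deferral of $\AFO$ to \cite{Otto06} leaves the dual case equally unestablished.
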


\noindent More specifically, the necessary tree-width~$k$ in~(c) can be
bounded by the width of the underlying vocabulary~$\tau$
in the modal and guarded cases,
and (for a rough bound) by the size of the given formula~$\varphi$
in the case of $\EFO$, $\AFO$.

Most of these statements follow
from corresponding properties for $\SAT(L)$,
which are well known from the literature
(cf.~in particular Observation~\ref{obs:SAT vs BDD} above).
The bounded-tree-width property for $\BDD$ in the
case of $\GF$ and $\muGF$ is a direct consequence of
preservation of these logics under guarded bisimulation. Guarded
tree-unfoldings \cite{Graedel99,GHO} of arbitrary models yield
models possessing a tree decomposition whose bags are guarded subsets,
hence of width bounded by the width of~$\tau$.
For the assertions concerning the fragments $\EFO$ and $\AFO$, which
are not closed under negation, we prove the following lemma.

\begin{lem}
$\EFO[\tau]$ and $\AFO[\tau]$ allow transfer for $\BDDm$ to
finite models of bounded tree-width.\footnote{%
Here tree-width can be bounded by
the size of the given prenex formula $\varphi(X,x)$\?; a better bound
would be the tree-width of the quantifier-free kernel formula.}
\end{lem}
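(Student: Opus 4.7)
The plan is to prove both cases by dual tree-unfolding constructions that convert any witness of unboundedness for $\varphi(X,x)$ into a finite $\tau$-structure whose tree-decomposition has bags of size $\leq 1+|\bar y|$, where $\bar y$ is the quantified tuple of the prenex form (hence tree-width bounded by the size of $\varphi$). One direction of the transfer is trivial; I focus on the other: assuming $\varphi$ unbounded, produce for each $n<\omega$ such a finite model in which some element has fixed-point rank exactly $n$.

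For the $\EFO$ case, write $\varphi(X,x)=\exists\bar y\,\chi$ with $\chi$ in positive DNF. Given $\fA$ with $a\in\varphi^n(\fA)\smallsetminus\varphi^{n-1}(\fA)$, I unfold a \emph{derivation tree} of depth $\leq n-1$ and branching $\leq |\bar y|$: at a node labelled by $a'$ of rank $k\geq 1$, pick a DNF-disjunct and a witness tuple $\bar b$ with $(\fA,\varphi^{k-1}(\fA))\models\chi(a',\bar b)$, and spawn a child for each $X$-atom in the disjunct, labelled by the matching component of $\bar b$. I then build a finite $\tau$-structure $\fB$ whose universe is a fresh copy per occurrence-position along the tree and whose relations hold exactly on the tuples of copies prescribed by the chosen disjunct at each node; the tree induces a tree-decomposition of $\fB$ with bags of size $\leq 1+|\bar y|$. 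The collapse map $h\colon\fB\to\fA$ identifying copies with originals is a $\tau$-homomorphism, and a routine induction on $\alpha$ (using monotonicity in $X$ together with preservation of positive existential formulas under homomorphisms) establishes the forward preservation $c\in\varphi^\alpha(\fB)\Rightarrow h(c)\in\varphi^\alpha(\fA)$. Its contrapositive at $a\notin\varphi^{n-1}(\fA)$ gives $\hat a\notin\varphi^{n-1}(\fB)$, while the derivation tree provides $\hat a\in\varphi^n(\fB)$ by an easy inductive traversal; hence the root-copy has rank exactly $n$ in $\fB$.

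The $\AFO$ case is dual, and is where the main obstacle lies. Write $\varphi(X,x)=\forall\bar y\,\chi$ with $\chi$ a CNF whose literals are $\tau$-negative or $X$-positive. Given $a\in\varphi^n(\fA)\smallsetminus\varphi^{n-1}(\fA)$, unfold a \emph{refutation tree} of depth $\leq n-1$ witnessing $a\notin\varphi^{n-1}(\fA)$: at a node labelled $a'$ with claim $a'\notin\varphi^k(\fA)$, pick a clause of $\chi$ and a tuple $\bar b$ falsifying it, i.e., every positive $\tau$-atom dual to a negated literal holds in $\fA$ at $(a',\bar b)$ and every $X$-literal's component of $\bar b$ lies outside $\varphi^{k-1}(\fA)$; spawn children for the $X$-position components. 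Build $\fB$ from fresh copies, but now interpret each $\tau$-relation to hold \emph{only} on the tuples of copies forced by some refutation node, so that $h\colon\fB\to\fA$ remains a $\tau$-homomorphism. Two ingredients now combine to pin down the rank in $\fB$. The \emph{reverse} stage-preservation for $\AFO$—if $h(c)\in\varphi^\alpha(\fA)$ then $c\in\varphi^\alpha(\fB)$, proved by a dual induction in which negative $\tau$-literals holding in $\fA$ transfer to $\fB$ via the contrapositive of the homomorphism property, and positive $X$-literals transfer by IH—yields $\hat a\in\varphi^n(\fB)$. An internal induction on refutation-subtree height then gives $\hat a\notin\varphi^{n-1}(\fB)$: the bag-tuple at each refutation node falsifies its chosen clause in $\fB$ because the required $\tau$-atoms hold by construction while each $X$-position child is, by IH, outside its target stage. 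The main obstacle is precisely the need for \emph{both} ingredients: no single homomorphism-preservation trick suffices for $\AFO$ as it does for $\EFO$, and the minimal-atom interpretation of $\fB$ is exactly what keeps $h$ a homomorphism while leaving the internal refutation intact.
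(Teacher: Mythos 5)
Your proof is correct in outline but takes a genuinely different route from the paper's. The paper handles $\EFO$ by expanding the vocabulary with a relation $R_i$ of arity $1+|\bar y|$ for each conjunction of atoms $\rho_i$ in the DNF kernel, so that $\varphi$ becomes a strictly guarded formula $\hat\varphi$ over the expansion; it then takes a guarded tree unfolding $\hat{\fA}^\ast$ of the expanded structure and plays off preservation of $\GF$ under guarded bisimulation (giving equivalence of $\hat{\fA}^\ast$ and $\hat{\fA}$ with respect to the $\hat\varphi^\alpha$) against preservation of the existential-positive stages $\varphi^\alpha$ under the projection homomorphism, concluding $\cl{\varphi}{\fA}=\cl{\varphi}{\fA^\ast}$; finiteness is then recovered by cutting down to witnesses of the $\exists^\ast/\forall^\ast$ stage-increment formula, and the $\AFO$ case is dispatched as ``strictly analogous.'' Your derivation-tree/refutation-tree construction instead builds the finite bounded-tree-width witness directly, using only the two homomorphism-preservation facts (forward for the existential-positive stages, backward for the universal-negative ones); this is more elementary and self-contained, essentially the classical \textsc{Datalog} expansion-tree argument, and it makes the $\AFO$ case explicit rather than leaving it to duality. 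What the paper's detour through $\GF$ buys is uniformity with the guarded machinery of Part~II and an exact closure-ordinal identity rather than a per-$n$ witness.

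Two details in your construction fail exactly as written and need patching. First, equality: $\EFO$ admits positive equalities and $\AFO$ negated ones, so a chosen disjunct may contain $y_i\seq y_j$ (which must be \emph{true} at the witness tuple) and a falsified clause may contain $\neg(y_i\seq y_j)$ (whose falsity forces $b_i=b_j$); taking ``a fresh copy per occurrence-position'' destroys these identities and the disjunct/clause no longer behaves as required in $\fB$. You must identify the copies within a node's bag according to the equalities actually holding among $(a',\bar b)$ in $\fA$; this shrinks bags and keeps $h$ a homomorphism, so nothing else changes. Second, in the $\AFO$ refutation tree a falsified clause may contain the literal $Xx$; its falsity is the claim $a'\notin\varphi^{k-1}(\fA)$ about the \emph{same} element, so the corresponding child must reuse the node's own copy $\hat a'$ rather than spawn a fresh one --- otherwise your internal induction shows that some other copy avoids $\varphi^{k-1}(\fB)$, not $\hat a'$ itself, and the clause is not falsified in $\fB$. (In the $\EFO$ case the disjunct chosen at a rank-$k$ node cannot contain $Xx$, since the node's element lies outside $\varphi^{k-1}(\fA)$, so no analogous repair is needed there.) Both repairs preserve the bag-size bound, the connectivity of the decomposition, and the homomorphism, so the argument goes through.
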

\begin{proof}
We explicitly treat the case of $\EFO$\?; the argument for $\AFO$ is
strictly analogous.

For $X$-positive $\varphi(X,x) \in \EFO[\tau]$ and finite $\alpha < \omega$,
the stage increment
$\varphi^{\alpha+1}(\fA) \smallsetminus \varphi^\alpha(\fA)$
is uniformly definable by a conjunction of a purely existential
formula
$\varphi^{\alpha+1}(x) \in \EFO[\tau]$
and a purely universal
formula in $\AFO[\tau]$ equivalent to the negation of $\varphi^\alpha(x)$.
Formulae of this kind are known to have the finite
model property\?:\footnote{They fall in particular within
the Bernays--Sch\"onfinkel class of prenex $\FO$-formulae
with quantifier prefix $\exists^\ast\forall^\ast$, cf.~\cite{BGG},
but a more direct argument suffices here.}
from an arbitrary model $(\fA,a)$ of some conjunction of a prenex
$\exists^\ast$-formula $\psi_1(x)$ and a prenex
$\forall^\ast$-formula $\psi_2(x)$, one obtains a finite model
by restricting $\fA$ to $a$ together with any chosen instantiation for the
existentially quantified variables in $\psi_1$\?; this restriction
still satisfies $\psi_1$, and as an induced substructure
of $(\fA,a) \models \psi_2$ it also still satisfies the universal
formula~$\psi_2$.

To obtain suitable (finite) models of bounded tree-width,
though, we need to consider the stronger preservation properties of
the formulae $\varphi^{\alpha}(x) \in \EFO[\tau]$,
and to some extent use the polarity constraints in $\EFO$ and $\AFO$.
The following argument also makes an interesting connection
with $\GF$.

Let w.l.o.g.\ $\varphi$ be of the form
\begin{align*}
  \varphi(X,x) =
  \exists \bar{y} \bigvee_i
  \bigl( \rho_i(x,\bar{y}) \wedge \bigwedge_{j \in s_i} X y_j \bigr)
\end{align*}
where $\bar{y} = (y_1,\dots, y_k)$,
the $\rho_i$ are conjunctions of relational $\tau$-atoms
(not involving $X$), and $s_i \subseteq \{ 1,\dots, k \}$.
For any $\tau$-structure $\fA$ let $\hat{\fA}$ be its expansion to
a $\hat{\tau}$-structure by new relations~$R_i$
of arity $k+1$, with $R_i$~defined by~$\rho_i$.
In $\hat{\fA}$, $\varphi$ is equivalent to the $\GFs$-formula
\begin{align*}
  \hat{\varphi}(X,x) =
  \exists \bar{y} \bigvee_i
  \bigl( R_i x\bar{y} \wedge \bigwedge_{j \in s_i} X y_j \bigr)\,.
\end{align*}

An analogous equivalence obtains for formulae $\varphi^\alpha(x) \in \GF[\tau]$
and $\hat{\varphi}^\alpha(x) \in \GF[\hat{\tau}]$
defining the finite stages w.r.t.\ $\varphi$ and $\hat{\varphi}$.

Obviously
\begin{align*}
  \bigwedge_i \forall x \forall \bar y \bigl( R_i x\bar y \to \rho_i(x,\bar y) \bigr)
  \models \forall x \bigl( \hat{\varphi}^\alpha(x) \to \varphi^\alpha(x) \bigr)\,,
\tag{$\ast$}
\end{align*}
where the
formula on the left-hand side is in $\GF[\hat{\tau}]$.
Note, however, that implications of the form
$\forall x \forall \bar{y} \bigl( \rho_i(x,\bar{y}) \to R_i x\bar{y} \bigr)$,
which would be needed
towards the equivalence between $\varphi^\alpha$ and $\hat{\varphi}^\alpha$
cannot in general be expressed in $\GF$.

Let $\hat{\fA}^\ast$ be a guarded bisimilar unfolding of~$\hat{\fA}$.
Its tree-width is bounded by the maximum of
the width of~$\tau$ and $k+1$.
We also write $\fA^\ast$ for the $\tau$-reduct of $\hat{\fA}^\ast$.
Let $\pi : \hat{\fA}^\ast \to \hat{\fA}$ be the
projection from the unfolding onto the base structure\?;
$\pi$~is a homomorphism inducing the natural guarded bisimulation
between $\hat{\fA}^\ast$ and $\hat{\fA}$.
Preservation of $\GF[\hat{\tau}]$ under guarded bisimulations implies
that, for all $\alpha < \omega$,
\begin{align*}
  \hat{\fA}^\ast, a \models \hat{\varphi}^\alpha
  \quad\text{iff}\quad
  \hat{\fA}, \pi(a) \models \hat{\varphi}^\alpha.
\end{align*}

Since $\fA,\pi(a) \models \varphi^\alpha$ implies $\hat\fA, \pi(a) \models \hat\varphi^\alpha$
and, therefore, also $\hat\fA^*,a \models \hat\varphi^\alpha$,
it follows with $(\ast)$ above that 
$\fA,\pi(a) \models \varphi^\alpha$ implies 
$\fA^*,a \models \varphi^\alpha$.

In the opposite direction,
since the~$\varphi^\alpha$, as existential positive
formulae, are preserved under homomorphisms,
the implication
$\fA^\ast, a \models \varphi^\alpha \Rightarrow
\fA, \pi(a) \models  \varphi^\alpha$ is straightforward.
Therefore, for all $a \in \fA^\ast$ and all $\alpha < \omega$,
\begin{align*}
  \fA^\ast, a \models \varphi^\alpha
  \quad\text{iff}\quad
  \fA, \pi(a) \models  \varphi^\alpha,
\end{align*}
whence $\cl{\varphi}{\fA} = \cl{\varphi}{\fA^\ast}$.
Hence $\varphi$ is bounded iff it is bounded over structures
whose tree-width is bounded by the
maximum of the width of~$\tau$ and $k+1$.
In order to get back to finite models
of bounded tree-width, we may apply the simple
argument from above to find a finite induced substructure within
some $(\fA^\ast, a)$ that still satisfies the corresponding
$\exists^\ast/\forall^\ast$-conjunction
$\varphi^{\alpha+1}(x) \wedge \neg \varphi^\alpha(x)$.
\end{proof}

\subsection{\boldmath Transfer for $\MSO$ over trees}

At the level of $\MSO$ we obtain a bounded-branching property
for $\BDDm$ over trees. The availability of transfer at least
down to countable branching is essential to make a connection
via interpretations with our core result that was formulated over
ternary trees.

\begin{prop}\label{prop:countable branching property for MSO}
$\MSO$ has a countable branching property for monadic $\BDDm$ over trees.
\end{prop}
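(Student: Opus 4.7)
The plan is to prove the non-trivial direction (bounded over countably-branching trees $\Rightarrow$ bounded over all trees) by contrapositive. Assume $\varphi(X,x)$ is unbounded over $\cT$\?; I will exhibit, for every $N<\omega$, a countably-branching tree with closure ordinal exceeding $N$. Fix $N$ and note that the $\MSO$-formula $\psi_N(x):=\varphi^{N+1}(x)\wedge\neg\varphi^N(x)$ has some quantifier rank $n=n(N)$ depending only on $\varphi$ and $N$. By unboundedness there is a tree $\fT$ and a vertex $v\in T$ with $(\fT,v)\models\psi_N$, so it suffices to produce a countably-branching $\fT'$ and $v'\in T'$ with $\mtype{n}{\fT',v'}=\mtype{n}{\fT,v}$.

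The construction of $\fT'$ is by pruning. First I orient $\fT$ as a tree rooted at $v$, and then prune from the leaves upward: at each node $u$, having already pruned every rooted subtree below $u$, I classify $u$'s (already-pruned) children by the $X$-positive $n$-type of the rooted subtree they carry. Since $\mType n$ is finite, for each type $t$ realised by infinitely many such children I retain exactly $\omega$ of them and discard the rest. The resulting tree $\fT'$ then has at most countable branching at every node.

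The main obstacle is showing that this pruning preserves the $n$-type at every rooted subtree, and in particular at $(\fT,v)$. Proposition~\ref{prop:Feferman-Vaught II} permits replacing subtrees by $n$-type-equivalent ones, but does not directly license discarding subtrees, so what is needed is a saturation form of the composition theorem: the $n$-type of a tree formed by attaching an arbitrary family of rooted subtrees at a single node depends only on the multiset of subtree $n$-types modulo the equivalence identifying two multisets iff they agree on every finite count and on which types occur at least $\omega$ many times. I would prove this by a structural induction on formulae paralleling the proof of Proposition~\ref{prop:Feferman-Vaught}, treating atoms, $\exists y\varphi$, $\forall y\varphi$, $\exists Y\varphi$, $\forall Y\varphi$ separately. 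The crucial point in each quantifier case is that a first-order witness lies in at most one child-subtree, whereas a second-order witness restricts to a subset on each child-subtree\?; in either case, once $\omega$ representatives of any given subtree $n$-type are present, further copies yield no new combinations of witness behaviour that are distinguishable at quantifier rank~$n$.

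Granting this composition lemma, an induction up the rooted tree from the leaves shows that $\mtype{n}{\fT',v}=\mtype{n}{\fT,v}$, so $(\fT',v)\models\psi_N$ and $\cl{\varphi}{\fT'}>N$. Since $N$ was arbitrary, $\varphi$ is unbounded over countably-branching trees, completing the contrapositive.
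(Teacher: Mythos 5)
Your overall strategy matches the paper's: argue the contrapositive by realising each witness $\exists x\,[\varphi^{N+1}(x)\wedge\neg\varphi^N(x)]$ to unboundedness in a countably branching tree. The combinatorial core of your saturation lemma --- that only the multiplicities of children-subtrees of each $n$-type, counted up to a threshold, affect the type at the parent --- is exactly the Ehrenfeucht--Fra\"iss\'e fact on purely unary structures that the paper invokes inside its L\"owenheim--Skolem argument (Proposition~\ref{prop:Loewenheim-Skolem for trees}); indeed a finite threshold such as $2^n$ already suffices, so retaining $\omega$ copies is more than enough. That part is sound.

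The gap is in how you build $\fT'$ and how you propagate type-preservation. You define the pruning ``from the leaves upward,'' selecting the children of $u$ according to the $n$-types of the \emph{already pruned} subtrees below $u$, and you conclude with ``an induction up the rooted tree from the leaves.'' This recursion is well-founded only on trees of finite height. The class $\cT$ contains trees with infinite branches --- and these are precisely the trees for which uncountable branching is a live issue --- and along an infinite branch there is no leaf to start from, so neither the construction of $\fT'$ nor the concluding induction is defined. If instead you select children by the types of the \emph{original} subtrees, you do get a well-defined $\fT'\subseteq\fT$, but then the claim that pruning below a node does not change the type its parent sees is exactly what the ill-founded induction was supposed to deliver, and the circularity reappears. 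The paper's proof sidesteps this by trading types for automaton runs: it fixes, once and for all, an accepting run of each relevant $\MSO$-automaton on the whole of $\fT$, selects countably many children per node and per state via the unary EF argument, and then only has to check a \emph{local} transition condition at every node together with a parity condition on branches --- neither of which requires a well-founded induction over the tree. To repair your argument you need a similarly non-inductive mechanism (a back-and-forth system between $(\fT,v)$ and $(\fT',v)$ specified uniformly at all nodes, or the automaton detour), not a leaf-up recursion.
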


This statement follows immediately from
Proposition~\ref{prop:Loewenheim-Skolem for trees} below,
whose proof relies on
the availability of tree automata for $\MSO$ and involves, as a key step,
a L\"owenheim--Skolem property for $\MSO$-theories of trees.
We employ a certain kind
of tree automata introduced by Walukiewicz~\cite{Walukiewicz02}.
\begin{defi}
An \emph{$\MSO$-automaton} is a tuple $\cA = \langle Q,\Sigma,q_0,\delta,\Omega\rangle$
with a finite set of states~$Q$, an input alphabet~$\Sigma$,
an initial state~$q_0$, a parity function $\Omega : Q \to \omega$,
and a transition function $\delta : Q \times \Sigma \to \MSO$
that, given a state~$q$ and a letter~$c$, returns an $\MSO$-formula
$\delta(q,c)$ over the signature $\set{ P_q }{ q \in Q }$.

Such an automaton takes a $\Sigma$-labelled directed tree
$t=(T,E,\lambda)$ as input.
A \emph{run} of~$\cA$ on~$t$ is a function $\varrho : T \to Q$
with $\varrho(r) = q_0$ for the root $r$ of $(T,E)$
such that
\begin{align*}
  \langle U_v,\bar P\rangle \models \delta(q,\lambda(v))\,,
  \quad\text{for all } v \in T\,,
\end{align*}
where the universe~$U_v$ of the structure is the set of all children of~$v$
and the unary predicates are $P_p := \varrho^{-1}(p) \cap U_v$.
The run~$\varrho$ is \emph{accepting}
if, and only if, for all infinite branches $v_0v_1\dots$ of~$(T,E)$
\begin{align*}
  \liminf_{n \to \infty} \Omega(\app{\varrho}{v_n}) \text{ is even.}
\end{align*}
The \emph{language recognised by~$\cA$} is the set $L(\cA)$
of all trees~$t$ such that there exists an accepting run
of~$\cA$ on~$t$.
\end{defi}

Over trees these automata have the same expressive power as monadic second-order logic.
\begin{thm}[Walukiewicz~\cite{Walukiewicz02}]
A class~$\cC$ of directed trees is definable
by an $\MSO$-sentence~$\varphi$ if, and only if,
it is recognised by some $\MSO$-automaton~$\cA$.
\qed\end{thm}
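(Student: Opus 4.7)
The plan is to prove both directions of the equivalence separately, with the reverse direction (from $\MSO$-definability to automaton recognisability) being the substantive one. For the forward direction, suppose $\cA = \langle Q,\Sigma,q_0,\delta,\Omega\rangle$ recognises~$\cC$. I would introduce, for each state $q \in Q$, a fresh monadic second-order variable~$P_q$, intended to encode $\varrho^{-1}(q)$ for an accepting run~$\varrho$. The required $\MSO$-sentence~$\varphi$ asserts that the $P_q$'s partition the tree, that the root lies in $P_{q_0}$, that at every node~$v$ with label~$\lambda(v) = c$ and unique~$q$ with $v \in P_q$ the substructure $\langle U_v, (P_p \cap U_v)_{p \in Q}\rangle$ satisfies $\delta(q,c)$, and that along every infinite branch the liminf of the priorities $\Omega$ is even. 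Each conjunct is plainly expressible in $\MSO$; in particular the parity condition on a branch translates into the standard statement that some even priority~$k$ occurs infinitely often on the branch while all smaller priorities occur only finitely often. By construction, $(T,E,\lambda) \models \varphi$ iff $\cA$ has an accepting run on $(T,E,\lambda)$.

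For the reverse direction I would reduce to Rabin's classical theorem on the monadic second-order theory of the infinite binary tree. Given an $\MSO$-sentence~$\varphi$ defining~$\cC$, encode unranked $\Sigma$-labelled directed trees as binary trees via the first-child/next-sibling encoding. This encoding is $\MSO$-definable in both directions, so~$\varphi$ translates to an equivalent $\MSO$-sentence over binary trees, which by Rabin's theorem is recognised by a nondeterministic parity tree automaton~$\mathcal{B}$ on binary trees. I would then repackage~$\mathcal{B}$ into an $\MSO$-automaton~$\cA$ on the original unranked tree: the state of~$\cA$ at a node~$v$ corresponds to the state of~$\mathcal{B}$ upon entering the image of~$v$ in the binary encoding, and the $\MSO$-transition $\delta(q,c)$ expresses the existence of a horizontal run of~$\mathcal{B}$ along the next-sibling spine formed by $U_v$, starting in a state derived from~$q$ and assigning each child~$u \in U_v$ the state that matches the recorded state of~$u$ in~$\cA$ (read off from the partition $(P_p \cap U_v)_{p \in Q}$).

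The main obstacle is that the children $U_v$ of a node in an unranked tree may form an arbitrary, possibly infinite and non-well-ordered, linear order, so the parity acceptance of~$\mathcal{B}$ must be simulated along spines of this general shape. This is precisely where the full strength of an $\MSO$-transition becomes indispensable: the formula $\delta(q,c)$ exploits monadic second-order quantification over~$U_v$ to guess the run of~$\mathcal{B}$ on the spine, assert the local transitions, and verify the parity condition on every infinite suborder of $U_v$, which is itself an $\MSO$-property of the linear order~$U_v$ enriched by the state-partition. The priority function $\Omega$ of~$\cA$ must then be chosen so that, together with these horizontal $\MSO$-checks at each node, the vertical parity condition of~$\cA$ along infinite branches of the unranked tree faithfully captures the parity condition of~$\mathcal{B}$ along the corresponding branches of the binary encoding.
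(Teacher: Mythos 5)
The paper does not prove this statement; it is quoted from Walukiewicz~\cite{Walukiewicz02}, whose actual argument is an induction on the structure of the $\MSO$-sentence, establishing closure of $\MSO$-automata under union, projection (existential set quantification) and, crucially, complementation via positional determinacy of parity games. Your forward direction (guessing the sets $\varrho^{-1}(q)$ and expressing the partition, the local transition conditions relativised to the children sets, and the branch parity condition in $\MSO$) is the standard argument and is fine.

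The reverse direction, however, has two genuine gaps. First, the trees in question are \emph{unordered} and of arbitrary, possibly uncountable, branching (the paper's Proposition~\ref{prop:Loewenheim-Skolem for trees} exists precisely to reduce uncountable branching to countable branching). A first-child/next-sibling encoding presupposes a linear order on each sibling set; for an unordered set of uncountable cardinality any choice of order yields a ``spine'' whose order type cannot be embedded into the binary tree $\{0,1\}^*$, so Rabin's theorem is not applicable, and even in the countable case the encoding is not canonical, so one must additionally argue that the resulting binary-tree language is invariant under reordering of siblings. Second, and more decisively, the structure $\langle U_v,\bar P\rangle$ on which the transition formula $\delta(q,c)$ is evaluated carries \emph{only} the unary predicates $P_p$ --- there is no sibling order in its vocabulary. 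Hence $\delta(q,c)$ cannot ``guess a horizontal run of $\mathcal{B}$ along the next-sibling spine'': over a pure set with unary predicates, $\MSO$ can express only threshold-counting conditions on the cardinalities $\lvert P_p\rvert$, and your plan would require reducing the simulation of the horizontal parity run to such counting conditions, which you do not do and which is exactly the hard content of the theorem. This is why Walukiewicz works directly with the automaton model and proves a complementation lemma game-theoretically rather than routing through the binary tree.
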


We use $\MSO$-automata to prove the following L\"owenheim-Skolem theorem.
\begin{prop}\label{prop:Loewenheim-Skolem for trees}
For every tree structure~$\fT$ there exists a countable tree structure $\fT_0 \subseteq \fT$
with the same $\MSO$-theory.
\end{prop}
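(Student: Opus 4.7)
The plan is to combine the Walukiewicz characterisation just stated with a levelwise L\"owenheim--Skolem construction on child-sets. First I would fix an arbitrary vertex $r\in T$ as a root, turning $\fT$ into a rooted directed tree on which $\MSO$-automata can be evaluated, and enumerate the countably many sentences $\psi_1,\psi_2,\ldots$ in $\mathrm{Th}_{\MSO}(\fT)$. For each $\psi_i$, Walukiewicz's theorem yields an $\MSO$-automaton $\cA_i=\langle Q_i,\Sigma,q_0^i,\delta_i,\Omega_i\rangle$ with an accepting run $\varrho_i:T\to Q_i$, which I fix once and for all.

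Next I would build $T_0\subseteq T$ inductively, starting from $T_0^{(0)}$ equal to the finite subtree spanned by $r$ together with the (finitely many) interpretations of constants of $\tau$. Given $T_0^{(n)}$, for each newly added vertex $v\in T_0^{(n)}$ and each $i$, the child-set $U_v\subseteq T$ of $v$ carries the partition $(P_q^i:=\varrho_i^{-1}(q)\cap U_v)_{q\in Q_i}$, and the resulting purely unary structure satisfies $\delta_i(\varrho_i(v),\lambda(v))$. Because $\MSO$ over a finite purely unary signature is invariant under preserving the cardinality (as element of $\omega\cup\{\infty\}$) of each of the finitely many atoms of the Boolean algebra generated by the predicates, I can pick a countable $U_v^i\subseteq U_v$ with the same atom-cardinalities: include all of any finite atom and take countably many representatives from any infinite atom. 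Setting $U_v':=\bigcup_i U_v^i$ (still countable) and $T_0^{(n+1)}:=T_0^{(n)}\cup\bigcup_{v\in T_0^{(n)}} U_v'$ yields, after $\omega$ steps, a countable union $T_0:=\bigcup_n T_0^{(n)}$ of countable sets, and $\fT_0:=\fT\restriction T_0$ is connected by construction and acyclic by inheritance, hence a tree substructure of $\fT$.

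Finally I would verify that each $\varrho_i\restriction T_0$ is an accepting run of $\cA_i$ on $\fT_0$. Local correctness at each $v\in T_0$ is immediate from the choice of $U_v'$: the unary structure $\langle U_v',(P_q^i\cap U_v')_{q\in Q_i}\rangle$ has the same $\MSO$-type as $\langle U_v,(P_q^i)_{q\in Q_i}\rangle$ and thus still satisfies $\delta_i(\varrho_i(v),\lambda(v))$. The parity acceptance condition survives because every infinite branch of $\fT_0$ is also a branch of $\fT$ on which $\varrho_i$ takes identical values, so the $\liminf$ of priorities is unchanged. Hence $\fT_0\models\psi_i$ for all $i$, and by completeness of $\mathrm{Th}_{\MSO}(\fT)$ together with closure of $\MSO$ under negation, $\mathrm{Th}_{\MSO}(\fT_0)=\mathrm{Th}_{\MSO}(\fT)$.

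The hardest part will be pinning down the correct ``$\MSO$-local'' invariant to preserve at each child-set when countably many automata must be accommodated simultaneously. The key observation that makes the construction work is that each $\delta_i$ lives over the finite signature of $\cA_i$, so its truth on $U_v$ depends only on the finitely many atom-cardinalities of the $\{P_q^i\}_{q\in Q_i}$-algebra and not on any finer information about individual children, which is exactly what lets the diagonal construction of $U_v'$ as a countable union go through.
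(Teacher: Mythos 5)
Your proof is correct and follows essentially the same route as the paper's: fix accepting runs of the (countably many) relevant $\MSO$-automata, shrink each child-set to a countable subset that preserves the $\MSO$-type of the induced purely unary structure (exact finite atom-cardinalities, infinite atoms kept infinite), close off from the root, and check that the restricted runs remain accepting runs, the parity condition being inherited along branches. The only cosmetic differences are that the paper indexes the construction over all automata accepting $\fT$ rather than over an enumeration of $\mathrm{Th}_{\MSO}(\fT)$, and phrases the unary-structure invariance in its quantifier-rank-$m$ threshold form (agreement up to $2^m$) rather than in the limit form you use.
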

\begin{proof}
We prove the proposition for directed trees.
Then the corresponding claim for undirected trees follows.
Suppose that $\fT$ is a directed tree with root~$r$.
Let us call a substructure $\fT_0 \subseteq \fT$ a
\emph{subtree} of~$\fT$ if $\fT_0$~is a tree and it contains the root~$r$.

To prove the claim, we construct a countable subtree $\fT_0 \subseteq \fT$
such that every $\MSO$-automaton accepting~$\fT$ also accepts~$\fT_0$.
Since every $\MSO$-formula is equivalent (on trees) to an $\MSO$-automaton
and since $\MSO$ is closed under complement, it follows that
$\fT$ and $\fT_0$ have the same $\MSO$-theory.

To construct~$\fT_0$ we proceed as follows.
For every $\MSO$-automaton~$\cA$ that accepts~$\fT$ and every vertex
$v \in T$, we fix a countable set $S_\cA(v) \subseteq T$ of children of~$v$
such that the following holds\?:
\begin{itemize}[label=$(*)$]
\item Every subtree $\fT_0 \subseteq \fT$ such that
  \begin{align*}
    v \in T_0 \quad\text{implies}\quad S_\cA(v) \subseteq T_0
  \end{align*}
  is accepted by~$\cA$.
\end{itemize}
Let us call a subtree~$\fT_0$ \emph{$S_\cA$-closed} if
$v \in T_0$ implies $S_\cA(v) \subseteq T_0$.
We take for~$T_0$ the minimal subset of~$T$ containing
the root~$r$ that is $S_\cA$-closed for every~$\cA$ accepting~$\fT$.
The subtree~$\fT_0$ induced by~$T_0$ is countable and has the desired property.

To define $S_\cA$ we fix an accepting run~$\varrho$ of~$\cA$ on~$\fT$.
Let $v \in T$ be a vertex with label~$c$ and let $U$~be the set of children
of~$v$ in~$T$.
For each state $q \in \varrho(v)$, $\varrho$~induces a structure $\langle U,\bar P\rangle$
satisfying the transition formula $\delta(q,c)$.
For every state $p \in Q$, we select a set $X^q_p \subseteq P_p = \varrho^{-1}(p) \cap U$ as follows.
If $P_p$~is countable, we set $X^q_p := P_p$.
Otherwise, we choose an arbitrary countably infinite subset $X_p \subseteq P_p$.
Then we set
\begin{align*}
  S_\cA(v) := \bigcup_{q \in Q} \bigcup_{p \in Q} X^q_p\,.
\end{align*}

We claim that, for every $S_\cA$-closed subtree $\fT_0 \subseteq \fT$,
the restriction of~$\varrho$ to~$T_0$ is an accepting run of~$\cA$ on~$\fT_0$.
Obviously, every infinite branch of~$\fT_0$ is an infinite branch of~$\fT$
and, hence, satisfies the parity condition. So we only need to check
that the transition formulae hold at each vertex.
Let $v \in T$ be a vertex with label~$c$ and with set of children~$U$, and
let $\langle U,\bar P\rangle$ be the structure induced by~$\varrho$.
Since $\varrho$~is a run, we have
\begin{align*}
  \langle U,\bar P\rangle \models \delta(\varrho(v),c)\,.
\end{align*}
Note that the structure $\langle U,\bar P\rangle$ has only unary relations.
There is a well-known Ehrenfeucht-Fra\"\i ss\'e argument showing that
an $\MSO$-sentence of quantifier rank~$m$ cannot distinguish
two such structures $\langle U,\bar P\rangle$ and $\langle U',\bar P'\rangle$,
provided that each quantifier-free $1$-type is realised the same
number of times in both structures, or it is realised at least $2^m$~times
in each structure.

By definition of~$S_\cA(v)$, it follows that,
for all subsets $U_0 \subseteq U$ containing $S_\cA(v)$,
the structures
\begin{align*}
  \langle U,\bar P\rangle
  \quad\text{and}\quad
  \langle U_0,\bar P|_{U_0}\rangle
\end{align*}
have the same $\MSO$-theory.
Consequently,
\begin{align*}
  \langle U_0,\bar P|_{U_0}\rangle \models \delta(\varrho(v),c)\,.
\end{align*}
In particular, this is the case for $U_0 := U \cap T_0$.
Therefore, $\varrho \restriction T_0$ is a run.
\end{proof}
\begin{rem}
If, instead of the full $\MSO$-theory, we are only interested in
the preservation of a single $\MSO$-sentence,
the construction of the theorem yields a tree that is finitely branching.
\end{rem}

\begin{proof}[Proof of Proposition~\ref{prop:countable branching property for MSO}]
Clearly, if an $\MSO$-formula $\varphi(X,x)$ is bounded over the class of all
trees, it is also bounded over the class of all countable trees.
Conversely, suppose that $\varphi(X,x)$ is unbounded over arbitrary
trees. Then we can find, for every $\alpha < \omega$, a tree~$\fT_\alpha$
satisfying the formula
$\psi_\alpha := \exists x[\varphi^{\alpha+1}(x) \land \neg\varphi^\alpha(x)]$.
By the above proposition, we can choose $\fT_\alpha$ to be countably branching.
Hence, $\varphi(X,x)$ is also unbounded over the class of all countably
branching trees.
\end{proof}

\section{Interpretations and reductions}
\label{sect:interpretations}

In the preceding section we have considered transfer of $\BDD(L,\cC)$
from one class~$\cC$ to a subclass $\cC_0 \subseteq \cC$.
In this section we will study more general reductions
of $\BDD(L,\cC)$ to $\BDD(L',\cC')$ where both the logic~$L$
and the class~$\cC$ may change.

\subsection{\boldmath A reduction for $\GF$}
\label{GFredsec}

We start by reducing $\BDD(\GSOg,\cC)$ to $\BDD(\GSOgs,\cC)$.
The following normal form for $\GSOg$-formulae is used in the proof
of the proposition below.
\begin{lem}\label{lem:normal form for GSOg}
Let $\varphi(\bar R,X,\bar x)$ be a $\GSOg$-formula with
free second-order variables $\bar R$,~$X$ and
free first-order variables~$\bar x$ that is positive in~$X$.
We can effectively construct 
$\GSOg$-formulae $\psi^0_i$, $\psi^1_i$, for $i < n$,
such that
\begin{align*}
  \varphi(\bar R,X,\bar x) \equiv \bigvee_{i < n}\bigl[\psi^0_i(X,\bar x) \land \psi^1_i(\bar R,X,\bar x)\bigr]\,,
\end{align*}
where
\begin{itemize}
\item the formulae~$\psi^0_i$ are quantifier-free and positive in~$X$,
\item the formulae~$\psi^1_i$ are positive in~$X$ and such that $X$~only appears in subformulae
  of the form $\forall\bar y\vartheta$ and $\exists\bar y\vartheta$.
\end{itemize}
Furthermore, if $\varphi$~is a $\GF$-formula,
then so are the formulae $\psi^0_i,\psi^1_i$, $i < n$.
\end{lem}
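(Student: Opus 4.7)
The plan is to argue by structural induction on $\varphi$, after first reducing to negation normal form.  As a preparatory step I would push all negations inwards using De Morgan and the guarded duality $\neg\exists\bar y(\alpha\land\vartheta) \equiv \forall\bar y(\alpha\to\neg\vartheta)$ (and symmetrically for $\forall$), together with the analogous rules for second-order quantifiers.  This transformation preserves the $\GSOg$ (respectively $\GF$) form because it neither introduces new quantifiers nor disturbs guards, and it preserves positivity in~$X$, so after this step every occurrence of~$X$ in~$\varphi$ is in a positive literal $X\bar t$.

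The induction then proceeds as follows.  In the base case, if $\varphi$ is an atom $X\bar t$, take $\psi^0 := X\bar t$ and $\psi^1 := \top$; if $\varphi$ is any other literal (positive or negative, possibly mentioning $\bar R$ or equality), take $\psi^0 := \top$ and $\psi^1 := \varphi$.  In the step case for $\vee$, apply the induction hypothesis to each disjunct and take the union of the resulting disjunctions.  For $\wedge$, apply the induction hypothesis and then distribute, forming the pairs $(\psi^0_{0,i}\wedge\psi^0_{1,j},\,\psi^1_{0,i}\wedge\psi^1_{1,j})$.  For a (first- or second-order, guarded or unguarded-on-guarded-relations) quantified subformula $Q\bar y\,\vartheta$, simply set $\psi^0 := \top$ and $\psi^1 := Q\bar y\,\vartheta$: every occurrence of~$X$ inside~$\psi^1$ then lies inside the outer quantifier, as required.

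Verifying the invariants at each step is essentially bookkeeping.  The~$\psi^0_i$ remain quantifier-free and positive in~$X$ because they are built as conjunctions/disjunctions of $X$-atoms and $\top$, and in particular mention only~$X$ and variables from~$\bar x$.  The~$\psi^1_i$ remain positive in~$X$, and any~$X$ occurrence sits either (after the base cases) nowhere at all or, after the quantifier step, strictly inside a quantified subformula; this property is preserved by conjunctions and disjunctions.  For the $\GF$ refinement, no step introduces a new quantifier or modifies a guard, and $\GF$ is closed under boolean combinations, so guardedness is automatically preserved.  Finally, effectivity is clear since the whole transformation is purely syntactic.  The only mildly delicate point, which I would flag as the main thing to get right, is the NNF reduction in the presence of guarded quantifiers and second-order quantification over guarded relations: one has to check that the dualised quantifier $\forall\bar y(\alpha\to\neg\vartheta)$ is still a legal $\GSOg$ (respectively $\GF$) construct, but this is immediate from the definitions in Section~\ref{subsec:guarded}.
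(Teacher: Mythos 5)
Your treatment of quantifiers conflates the first-order and the second-order case, and the second-order case is precisely where this lemma is non-trivial. The condition on the $\psi^1_i$ is that every occurrence of~$X$ lies inside a subformula $\forall\bar y\,\vartheta$ or $\exists\bar y\,\vartheta$ with \emph{first-order}~$\bar y$; this is exactly what the intended application (Proposition~\ref{prop: GSOg reduces to GSOgs}) needs, since only under a guarded first-order quantification may an atom $X\bar y$ be replaced by its guarded restriction $X\bar y\wedge\gdd(\bar y)$. If, for a second-order quantifier, you simply set $\psi^0:=\top$ and $\psi^1:=\mathsf{Q}Z\,\vartheta$, this property fails: for $\varphi=\exists Z\,(Zxy\wedge Xxy)$ your recipe returns $\psi^1=\varphi$, in which $X$ occurs under no first-order quantifier at all, so the guarded-restriction step downstream would change the semantics on unguarded tuples. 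The paper instead distributes the second-order quantifier through the inductive decomposition of its body, exploiting that the $\psi^0_i$ do not contain the bound variable~$Z$ (they mention only $X$ and~$\bar x$): for $\exists Z$ one gets directly
\begin{align*}
  \exists Z\bigvee_{i<n}\bigl[\psi^0_i\wedge\psi^1_i\bigr]
  \equiv \bigvee_{i<n}\bigl[\psi^0_i\wedge\exists Z\,\psi^1_i\bigr]\,,
\end{align*}
while for $\forall Z$, which does not commute with disjunction, one must first rewrite the body as $\bigwedge_{\sigma\in 2^n}\bigvee_{i<n}\psi^{\sigma(i)}_i$, pull $\forall Z$ past the $Z$-free disjuncts~$\psi^0_i$ in each conjunct, and then redistribute back into the required disjunctive form. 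Your base cases, the boolean cases, and the first-order quantifier case are all fine and agree with the paper (which dismisses them as trivial), and the NNF preprocessing you flag as the delicate point is indeed unproblematic; the genuine gap is the second-order quantifier step, which as written does not establish the stated invariant.
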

\begin{proof}
We may assume that $\varphi$~is in negation normal form.
The claim follows by induction on the structure of~$\varphi$.
All other cases being trivial, we present only the case of second-order quantifiers.

Hence, let us assume that $\varphi = \mathsf{Q}Z\vartheta$, for $\mathsf{Q} \in \{\forall,\exists\}$.
By inductive hypothesis, we may assume that
\begin{align*}
  \vartheta = \bigvee_{i < n}[\psi^0_i(X,\bar x) \land \psi^1_i(\bar R,Z,X,\bar x)]
\end{align*}
with $\psi^0_i$~and~$\psi^1_i$ as in the statement of the lemma.
In case of an existential quantifier, we are done since
\begin{align*}
  \exists Z\vartheta \equiv \bigvee_{i < n}[\psi^0_i(X,\bar x) \land \exists Z\psi^1_i(\bar R,Z,X,\bar x)]\,.
\end{align*}
For a universal quantifier, note that
\begin{align*}
  \forall Z\vartheta
  &= \forall Z\bigvee_{i < n}\bigl[\psi^0_i(X,\bar x) \land \psi^1_i(\bar R,Z,X,\bar x)\bigr] \\
  &\equiv \forall Z\bigwedge_{\sigma \in 2^n} \bigvee_{i < n} \psi^{\sigma(i)}_i \\
  &\equiv \bigwedge_{\sigma \in 2^n} \forall Z
          \Bigl[\bigvee_{i \in \sigma^{-1}(0)} \!\!\psi^0_i \ \lor\!\! \bigvee_{i \in \sigma^{-1}(1)} \!\!\psi^1_i\Bigr] \\
  &\equiv \bigwedge_{\sigma \in 2^n}
          \Bigl[\bigvee_{i \in \sigma^{-1}(0)} \!\!\psi^0_i \ \lor\ \forall Z\!\!\bigvee_{i \in \sigma^{-1}(1)} \!\!\psi^1_i\Bigr]\,.
\end{align*}
Hence, the claim follows by another application of the distributive law.
\end{proof}
\begin{prop}\label{prop: GSOg reduces to GSOgs}
For every formula $\varphi(X,\bar x) \in \GSOg[\tau]$,
we can effectively construct a formula $\varphi^g(X,\bar x) \in \GSOgs[\tau]$
such that $\varphi(X,\bar x)$ is bounded if, and only if, $\varphi^g(X,\bar x)$ is.

Furthermore, if $\varphi$~is a $\GF$-formula, then so is~$\varphi^g$.
\end{prop}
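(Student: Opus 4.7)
The plan is to set $\varphi^g(X,\bar x) := \gdd(\bar x) \land \varphi(X,\bar x)$. By construction this is strictly guarded and, when $\varphi \in \GF$, belongs to $\GFs$. The crucial ingredient is a semantic observation extracted from Lemma~\ref{lem:normal form for GSOg}. Writing $\varphi \equiv \bigvee_{i<n}\bigl[\psi^0_i(X,\bar x) \land \psi^1_i(\bar R,X,\bar x)\bigr]$ in the given normal form, each $X$-atom in the quantifier-free $\psi^0_i$ uses only variables from $\bar x$; each $X$-atom in $\psi^1_i$ either sits inside a first-order (hence guarded) quantifier, in which case the clause $\mathrm{free}(\vartheta)\subseteq\mathrm{free}(\alpha)$ built into guarded quantification forces its variables to come from the innermost guard's tuple, so the atom is applied to a guarded tuple; or it sits only inside second-order quantifiers and so has its first-order variables among $\bar x$. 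Writing $G$ for the set of guarded $r$-tuples in $\fA$ and, for $\bar a=(a_1,\ldots,a_r)$, setting $T(\bar a):=\{a_1,\ldots,a_r\}^r$, it follows that the truth value of $\varphi(X,\bar a)$ depends only on $X\cap(G\cup T(\bar a))$; in particular $\varphi(X,\bar a) \equiv \varphi(X\cap G,\bar a)$ whenever $\bar a$ is guarded, since then $T(\bar a)\subseteq G$.

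From this a routine induction on $\alpha$ yields $F^\alpha_{\varphi^g}(\fA) = F^\alpha_\varphi(\fA)\cap G$ for every $\fA$. The direction $(\Rightarrow)$ is then immediate. For $(\Leftarrow)$, suppose $\varphi^g$ is bounded by $N$; then the guarded part of the $\varphi$-iteration stabilises by stage $N$ and no further guarded tuple enters afterwards. I claim $\varphi$ itself is bounded by $N+r^r$, and prove this by a chain argument. If some unguarded $\bar a$ enters $F^M_\varphi$ at a stage $M > N$, then by monotonicity and the semantic fact above some $\bar b_1 \in (G\cup T(\bar a))\cap\bigl(F^{M-1}_\varphi \smallsetminus F^{M-2}_\varphi\bigr)$ must be responsible for the change; since no guarded tuple enters at stage $M-1>N$, this $\bar b_1$ must be unguarded and lie in $T(\bar a)$. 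Iterating the same argument at $\bar b_1$ and so on produces pairwise distinct unguarded tuples $\bar b_0=\bar a,\bar b_1,\ldots,\bar b_{M-N-1}$, all in $T(\bar a)$, with $\bar b_i$ entering at stage $M-i$. Since $|T(\bar a)|\leq r^r$, this forces $M-N\leq r^r$.

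The principal obstacle is the syntactic analysis behind the semantic observation: one has to unwind how $\mathrm{free}(\vartheta)\subseteq\mathrm{free}(\alpha)$ propagates through nested guarded first-order quantifiers possibly interleaved with unrestricted second-order quantifiers, so as to confirm that every $X$-atom occurring inside any first-order quantifier scope is in fact evaluated on a subtuple of the innermost guard. Once this syntactic step is in place, the chain argument is elementary combinatorial bookkeeping, and the constant $r^r$ is simply the number of $r$-tuples that can be formed from the $r$ components of~$\bar a$.
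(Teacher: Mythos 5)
Your proof is correct, and its skeleton coincides with the paper's: the same formula $\varphi^g(X,\bar x)=\gdd(\bar x)\wedge\varphi(X,\bar x)$, the same key consequence of Lemma~\ref{lem:normal form for GSOg} (every $X$-atom under a guarded first-order quantifier is evaluated only on guarded tuples, so $\varphi(X,\bar a)$ depends only on $X\cap(G\cup T(\bar a))$), and the same identity $(\varphi^g)^\alpha(\fA)=\varphi^\alpha(\fA)\cap G$ proved by induction on~$\alpha$. You diverge only in the final step, where one must bound the number of further iterations $\varphi$ needs once its guarded part has stabilised. The paper freezes the guarded fixed point as a static parameter~$Z$ in an auxiliary formula $\xi(Z,X,\bar x)$ in which $X$ occurs only quantifier-free, sandwiches $\varphi^\alpha(\fA)\subseteq\xi^\alpha(\fA,P_0)\subseteq\varphi^{\gamma+\alpha}(\fA)$, and observes that $\xi$ is trivially bounded by the number of quantifier-free $r$-types over $\tau\cup\{Z\}$. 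You instead run a dependency chain through the at most $r^r$ tuples over the components of~$\bar a$ and apply the pigeonhole principle. Both arguments exploit the same underlying fact---after the guarded part is fixed, the residual induction at~$\bar a$ consults only tuples from $\{a_1,\dots,a_r\}^r$---and yield constants of the same flavour; yours is somewhat more combinatorial and elementary, the paper's is more uniform and formula-level. Two small bookkeeping points in your version are harmless but worth tidying: at the very start of the chain the trigger $\bar b_1$ could still be guarded when $M=N+1$ (so the chain of \emph{unguarded} tuples really begins one step later, shifting the constant by one), and guarded tuples~$\bar a$ need the separate, trivial remark that they cannot enter after stage $N+1$ at all. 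Neither affects the conclusion that $\varphi$ is bounded iff $\varphi^g$ is.
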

\begin{proof}
By the lemma we may assume that the formula $\varphi(X,\bar x)$ has the form
\begin{align*}
  \varphi(X,\bar x) = \bigvee_{i < n}\bigl[\chi_i(X,\bar x) \land \psi_i(X,\bar x)\bigr]\,,
\end{align*}
where the formulae~$\chi_i$ are quantifier-free
and in the formulae~$\psi_i$ every occurrence of~$X$ is in a subformula
of the form $\forall\bar y\vartheta$ and $\exists\bar y\vartheta$.

Note that any occurrence of an atom~$X\bar y$ that is in the
scope of some (guarded\?!) first-order quantification may be replaced by the formula
$X^g\bar y := X\bar y \land \gdd(\bar y)$ without changing its semantics.
Therefore,
\begin{align*}
  \varphi(X,\bar x) \equiv \bigvee_{i < n}\bigl[\chi_i(X,\bar x) \land \psi_i(X^g,\bar x)\bigr]\,,
\end{align*}
where $\psi_i(X^g,\bar x) := \psi_i(X^g/X,\bar x)$ is the formula obtained from~$\psi_i$
by replacing~$X$ by its guarded restriction~$X^g$ without affecting the semantics.
In the following a superscript $\rule{0pt}{1.5ex}^g$
is always used to indicate syntactic and/or semantic restriction to
the guarded part.

The fixed-point induction of $\varphi(X,\bar x)$
is closely related to the fixed-point induction of the strictly guarded formula
\begin{align*}
  \varphi^g(X,\bar x) :=
    \gdd(\bar x) \wedge \varphi(X,\bar x)
    \equiv \bigvee_{i<n} \bigl[\gdd(\bar x) \wedge \chi_i(X,\bar x)
                                            \wedge \psi_i(X^g,\bar x) \bigr]\,.
\end{align*}

Since $\varphi^g$~implies~$\varphi$ and both formulae are positive in~$X$,
it follows that the stages of~$\varphi^g$ are included in those for~$\varphi$.
In fact, it follows by a simple induction on~$\alpha$ that
$(\varphi^g)^\alpha(\fA) = (\varphi^\alpha(\fA))^g$.
Consequently, we have
\begin{align*}
  \cl{\varphi^g}{\fA} \leq \cl{\varphi}{\fA}
  \quad\text{and}\quad
  (\varphi^\infty(\fA))^g = (\varphi^g)^\infty(\fA)\,.
\end{align*}

If we can show that there exists a constant $n < \omega$, depending only on~$\varphi$,
such that, for all structures~$\fA$,
\begin{align*}
  \cl{\varphi}{\fA} \leq \cl{\varphi^g}{\fA} + n\,,
\end{align*}
then it follows that $\varphi$~is bounded if, and only if, $\varphi^g$~is bounded.

To find the constant~$n$, we consider the auxiliary formula
\begin{align*}
  \xi(Z,X,\bar x) :=
    \bigvee_{i<n} \bigl[\chi_i(X,\bar x) \wedge \psi_i(Z,\bar x)\bigr]
\end{align*}
in vocabulary $\tau \cup \{ Z \}$
(with a new second-order variable~$Z$ of the same arity~$r$ as~$X$,
which is regarded as a parameter) and
we consider its fixed-point induction in the expansion
$(\fA,P_0)$ of~$\fA$ where $Z$~is interpreted by the relation $P_0 := (\varphi^g)^\infty(\fA)$.
We claim that
\begin{align*}
  \cl{\varphi}{\fA} \leq \cl{\varphi^g}{\fA} + \cl{\xi}{(\fA,P_0)}\,.
\end{align*}

Let $\gamma := \cl{\varphi^g}{\fA}$. We have shown above that
$P_0 = (\varphi^g)^\gamma(\fA) \subseteq \varphi^\gamma(\fA)$.
Using monotonicity, it follows by a simple induction on~$\alpha$ that
\begin{align*}
  \varphi^\alpha(\fA) \subseteq \xi^\alpha(\fA,P_0) \subseteq \varphi^{\gamma+\alpha}(\fA)\,.
\end{align*}
The first inclusion implies that $\varphi^\infty(\fA) \subseteq \xi^\infty(\fA,P_0)$ while
the second inclusion implies that $\xi^\infty(\fA,P_0) \subseteq \varphi^\infty(\fA)$.
Setting $\beta := \cl{\xi}{(\fA,P_0)}$, it follows that
\begin{align*}
  \varphi^\infty(\fA) = \xi^\beta(\fA,P_0) \subseteq \varphi^{\gamma+\beta}(\fA) \subseteq \varphi^\infty(\fA)\,.
\end{align*}
Hence, $\cl{\varphi}{\fA} \leq \gamma + \beta$, as desired.

We have shown that, for every structure~$\fA$,
\begin{align*}
  \cl{\varphi^g}{\fA} \leq \cl{\varphi}{\fA} \leq \cl{\varphi^g}{\fA} + \cl{\xi}{(\fA,P_0)}\,.
\end{align*}
To conclude the proof it remains to prove that $\cl{\xi}{(\fA,P_0)}$ is uniformly bounded.
Note that $\xi$~treats~$Z$ as a static parameter, and only involves
its fixed-point variable~$X$ outside the scope of any quantifiers.
It follows that $\xi$~is trivially bounded (with a bound that is given by
the number of quantifier-free $r$-types in vocabulary $\tau \cup \{ Z \}$).
\end{proof}

Hence we may restrict attention
to fixed points over strictly guarded formulae.
This means that $\BDD(\GSOg)$ reduces to $\BDD(\GSOgs)$.
Let us remark that a corresponding result for $\GSO$ fails.

An argument analogous to the above
also applies to $\muGF$\?: according to \cite{GHO},
every $\muGF$-formula is equivalent to one where
every fixed-point operator is applied to a
strictly guarded formula.
For $\muGF$-formulae of this form,
a variant of Lemma~\ref{lem:normal form for GSOg} holds.
This is all we need for the proof of Proposition~\ref{prop: GSOg reduces to GSOgs}.
Consequently, $\BDD(\muGF)$ reduces to $\BDD(\muGFs)$.

\subsection{\boldmath $\MSO$-interpretations in trees}

In the first part we have obtained the decidability of $\BDDm(\MSO,\cT_3)$.
In this section, we use model-theoretic interpretations to
reduce the decidability of $\BDDm(\MSO,\cW_k)$ to this problem.

\begin{defi}
Let $\sigma$~and~$\tau$ be relational signatures.

\textup{(a)} A \emph{definition scheme} for an $\MSO$-interpretation
from~$\sigma$ to~$\tau$ is a list
\begin{align*}
  \cI = \bigl\langle\chi, \delta(x),\varepsilon(x,y),(\varphi_R(\bar x))_{R \in \sigma}\bigr\rangle
\end{align*}
of $\MSO[\tau]$-formulae where
$\chi$~is a sentence,
$\delta(x)$~has one free variable, $\varepsilon(x,y)$ has two,
and the number of free variables of $\varphi_R(\bar x)$ equals the arity of the
relation symbol~$R$.

\textup{(b)} The \emph{operation defined} by a definition scheme~$\cI$
maps $\tau$-structures~$\fA$ to $\sigma$-structures $\cI(\fA)$.
A $\tau$-structure~$\fA$ such that
$\fA \models \chi$, $\delta[\fA] \neq \emptyset$ 
and such that $\varepsilon$~defines an equivalence relation~$\sim$
on $\delta[\fA]$, 
is mapped to the $\sigma$-structure~$\fB$ with universe
\begin{align*}
  B := \set{ [a]_\sim \in A/{\sim} }{ \fA \models \delta(a) }
\end{align*}
and, for each $n$-ary relation $R \in \sigma$, the relation
\begin{align*}
  R^\fB := \set{ [\bar a]_\sim \in A^n/{\sim} }{ \fA \models \varphi_R(\bar a) }\,.
\end{align*}
For any other $\tau$-structure~$\fA$, we let $\cI(\fA)$ be undefined.

\textup{(c)} An \emph{$\MSO$-interpretation} is an operation defined
by a definition scheme~$\cI$.
If $\cC$~is a class of $\tau$-structures, we set
\begin{align*}
  \cI(\cC) := \set{ \cI(\fA) }{ \fA \in \cC \text{ such that } \cI(\fA) \text{ is defined} }\,.
\end{align*}
\end{defi}

For the proof of Proposition~\ref{prop: boundedness and interpretations} below,
let us recall the following well-known lemma.
We include a proof, so that we may refer to a
precise format of the formulae~$\psi^\cI$ later.
\begin{lem}[Interpretation Lemma]\label{lem:interpretations}
Let $\cI = \bigl\langle\chi, \delta(x),\varepsilon(x,y),(\varphi_R(\bar x))_{R \in \sigma}\bigr\rangle$
be an $\MSO$-interpretation.
For every $\MSO[\sigma]$-formula~$\psi$,
there exists an $\MSO[\tau]$-formula~$\psi^\cI$ such that,
for all $\tau$-structures~$\fA$ and every tuple $\bar a$ in~$A$, we have
\begin{align*}
  \fA \models \psi^\cI(\bar a)
  \quad\text{iff}\quad
  &\cI(\fA) \text{ is defined, }
    \fA \models \delta(a_i) \text{ for all } i, \text{ and}\\
  &\cI(\fA) \models \psi([\bar a]_\sim)\,. 
\end{align*}

If $\psi$~is positive in a predicate~$X$ and the
formula $\varphi_X(\bar x)$ from~$\cI$ is positive
in a predicate~$Y$, then $\psi^\cI$~is also positive in~$Y$.
\end{lem}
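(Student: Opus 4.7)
The plan is to define $\psi^\cI$ by induction on the structure of~$\psi$, with a preliminary conjunct that encodes the side condition that $\cI(\fA)$ is defined. First I would fix an $\MSO[\tau]$-sentence $\chi_{\cI}$ that formalises ``\(\cI(\fA)\) is defined'', namely the conjunction of $\chi$, $\exists x\,\delta(x)$, and the axioms saying that $\varepsilon$ is reflexive, symmetric, and transitive on elements satisfying~$\delta$, and that each $\varphi_R(\bar x)$ respects $\sim$ on $\delta$-tuples (this last point is automatic from the definition, but it is convenient to have it explicitly). Then I would set $\psi^\cI := \chi_{\cI} \wedge \bigwedge_i \delta(x_i) \wedge \hat\psi$, where $\hat\psi$ is defined by recursion as follows.

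For atoms, I would set $(R\bar x)^{\hat{}} := \varphi_R(\bar x)$ and $(x=y)^{\hat{}} := \varepsilon(x,y)$. Boolean connectives are translated component-wise. For first-order quantification I would relativise to $\delta$:
\begin{align*}
(\exists y\,\psi)^{\hat{}} := \exists y\bigl(\delta(y) \wedge \hat\psi\bigr), \qquad
(\forall y\,\psi)^{\hat{}} := \forall y\bigl(\delta(y) \to \hat\psi\bigr).
\end{align*}
For second-order quantification of an $r$-ary variable $Z$, I would restrict to relations that live inside $\delta$ and are closed under~$\sim$ component-wise. Writing
\begin{align*}
\mathrm{inv}_Z := \forall \bar x\bigl(Z\bar x \to \textstyle\bigwedge_i \delta(x_i)\bigr)
  \wedge \forall \bar x \bar y\bigl( Z\bar x \wedge \textstyle\bigwedge_i \varepsilon(x_i,y_i) \to Z\bar y\bigr),
\end{align*}
I would set $(\exists Z\,\psi)^{\hat{}} := \exists Z (\mathrm{inv}_Z \wedge \hat\psi)$ and dually $(\forall Z\,\psi)^{\hat{}} := \forall Z(\mathrm{inv}_Z \to \hat\psi)$. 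Correctness follows by a routine induction: assuming $\cI(\fA)$ is defined, relations on $\cI(\fA)$ correspond bijectively to $\sim$-closed subsets of $\delta[\fA]^r$, and the atomic clauses faithfully encode $R^{\cI(\fA)}$ by construction.

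For the positivity claim, observe that if $\psi$ is positive in~$X$ then every occurrence of the atom $X\bar x$ in~$\psi$ is in an unnegated position, so the corresponding $\varphi_X(\bar x)$ appears in $\hat\psi$ in an unnegated position as well. Since $\varphi_X$ is positive in~$Y$ by hypothesis, these occurrences contribute only positive occurrences of~$Y$ to $\hat\psi$. The remaining pieces of $\hat\psi$ (the formulas $\delta$, $\varepsilon$, $\varphi_R$ for $R \neq X$, and $\chi_\cI$) do not mention~$Y$ at all, since $Y$ is a fresh predicate coming only from~$\varphi_X$. Hence $\psi^\cI$ is positive in~$Y$.

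I do not expect any real obstacle: the only delicate point is the second-order case, where one must ensure that the translation quantifies over exactly those $\tau$-side relations that correspond to $\sigma$-side relations on the quotient $\cI(\fA)$, which is precisely what $\mathrm{inv}_Z$ is designed to guarantee. The rest is bookkeeping, and the positivity claim is immediate from inspection of where $\varphi_X$ enters the recursion.
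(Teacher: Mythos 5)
Your overall strategy --- an inductive translation relativised to~$\delta$, prefixed by a sentence asserting that $\cI(\fA)$ is defined --- is the same as the paper's, but the atomic case contains a genuine error. By the definition of an interpretation, $[\bar a]_\sim \in R^{\cI(\fA)}$ holds iff \emph{some} representative $\bar a' \sim \bar a$ satisfies $\varphi_R$; neither the definition of the interpretation nor the condition ``$\cI(\fA)$ is defined'' requires $\varphi_R$ to be a congruence for~$\sim$. So your parenthetical claim that $\varphi_R$ respecting $\sim$ ``is automatic from the definition'' is false, and with it your clause translating the atom $R\bar x$ to $\varphi_R(\bar x)$: if $\bar a \sim \bar a'$ with $\fA \models \varphi_R(\bar a)$ but $\fA \not\models \varphi_R(\bar a')$, then $[\bar a']_\sim \in R^{\cI(\fA)}$ while your translated atom evaluated at~$\bar a'$ is false. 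Your fallback of putting the congruence axiom into $\chi_\cI$ does not rescue the lemma either: it makes $\psi^\cI$ strictly stronger than the right-hand side of the stated equivalence, so the ``iff'' fails on structures where $\cI(\fA)$ is defined but some $\varphi_R$ is not $\sim$-invariant. The paper avoids both problems by saturating at the atoms, translating $R\bar c$ to $\exists \bar z\,\bigl[\bigwedge_i \varepsilon(z_i,c_i) \wedge \varphi_R(\bar z)\bigr]$; this remains positive in~$Y$ when $\varphi_X$ is, so the positivity claim is unaffected.

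With that one clause repaired the rest of your argument goes through. Your treatment of second-order quantifiers (relativising to $\sim$-closed subsets of $\delta[\fA]$ via $\mathrm{inv}_Z$) differs only cosmetically from the paper's convention, which lets the translation commute with set quantifiers and instead saturates the atoms $Zc$ over~$\sim$; both set up the required correspondence between subsets of the quotient and $\sim$-closed subsets of $\delta[\fA]$, and your positivity analysis is correct.
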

\begin{proof}
First, we define a formula~$\psi^*$ by induction on~$\psi$ as follows\?:
\begin{alignat*}{-1}
  (R\bar c)^* &:= \textstyle\exists\bar z\bigl[\bigwedge_i \varepsilon(z_i, c_i) \land \varphi_R(\bar z)\bigr]\,,
  &\qquad
  (\exists y\vartheta)^* &:= \exists y[\delta(y) \land \vartheta^*]\,, \\
  (c\seq d)^* &:= \varepsilon(c,d)\,,
  &\qquad
  (\forall y\vartheta)^* &:= \forall y[\delta(y) \to \vartheta^*]\,,
\end{alignat*}
and the translation ${}\cdot{}^*$ commutes with boolean operations and set quantifiers.

Then we can set
\begin{align*}
  \psi^\cI := \chi' \land \bigwedge_i \delta(x_i) \land \psi^*
\end{align*}
where
the conjunction is over all free variables of~$\psi$ and
$\chi' := \chi \land \eta$ is the conjunction of~$\chi$ with a formula~$\eta$ stating
that $\varepsilon$~defines an equivalence relation on~$\delta$.
\end{proof}

\begin{prop}\label{prop: boundedness and interpretations}
Let $\cI$~be an $\MSO$-interpretation and
$\cC$~a class of $\tau$-structures.
If\/ $\BDDm(\MSO,\cC)$ is decidable then so is $\BDDm(\MSO,\cI(\cC))$.
\end{prop}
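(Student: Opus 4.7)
The plan is to reduce $\BDDm(\MSO,\cI(\cC))$ to $\BDDm(\MSO,\cC)$ by translating a candidate formula over $\sigma$ into a formula over $\tau$ whose fixed-point iteration on $\fA \in \cC$ mimics the one on $\cI(\fA)$.

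Given an $X$-positive $\varphi(X,x) \in \MSO[\sigma]$, I would extend the definition scheme~$\cI$ to $\cI' = \langle \chi,\delta,\varepsilon,(\varphi_R)_{R \in \sigma},\varphi_X\rangle$ from $\sigma \cup \{X\}$ to $\tau \cup \{Y\}$, where $Y$ is a fresh monadic predicate and $\varphi_X(x) := Yx$. Applying the Interpretation Lemma to $\varphi \in \MSO[\sigma\cup\{X\}]$ produces a formula $\varphi^{\cI'}(Y,x) \in \MSO[\tau\cup\{Y\}]$, which is positive in~$Y$ by the positivity clause of the lemma (since both $\varphi$ is positive in~$X$ and $\varphi_X(x) = Yx$ is positive in~$Y$). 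I then set
\begin{align*}
  \psi(Y,x) := \delta(x) \wedge \varphi^{\cI'}(Y,x)\,,
\end{align*}
which remains positive in~$Y$.

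The key technical claim, proved by induction on~$\alpha$, is that for every $\fA$ with $\cI(\fA)$ defined,
\begin{align*}
  \psi^\alpha(\fA) = \tilde P_\alpha :=
    \set{a \in \delta[\fA]}{[a]_\sim \in \varphi^\alpha(\cI(\fA))}\,.
\end{align*}
The base case is trivial. For the inductive step, the inductive hypothesis tells us that $\tilde P_\alpha$ is a $\sim$-closed subset of $\delta[\fA]$, so it corresponds bijectively to $\varphi^\alpha(\cI(\fA)) \subseteq \cI(\fA)$. The Interpretation Lemma then yields, for any $a \in A$, that $\fA \models \psi(\tilde P_\alpha, a)$ iff $a \in \delta[\fA]$ and $\cI(\fA) \models \varphi(\varphi^\alpha(\cI(\fA)),[a]_\sim)$, i.e., iff $a \in \tilde P_{\alpha+1}$. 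In particular $\tilde P_{\alpha+1} \subseteq \delta[\fA]$ (by the $\delta(x)$ conjunct) and is $\sim$-closed (because the truth of $\varphi^{\cI'}(Y,a)$ depends on~$a$ only through $[a]_\sim$ whenever $Y$ is a $\sim$-closed subset of $\delta[\fA]$), so the induction goes through.

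For $\fA \in \cC$ with $\cI(\fA)$ undefined, the conjunct $\chi' = \chi \wedge \eta$ occurring in $\varphi^{\cI'}$ is false in~$\fA$, so $\psi$ is unsatisfiable on~$\fA$ and has closure ordinal~$0$. Combining these observations, $\cl{\psi}{\fA} = \cl{\varphi}{\cI(\fA)}$ when $\cI(\fA)$ is defined and $\cl{\psi}{\fA} = 0$ otherwise; hence $\psi$ is bounded over~$\cC$ if and only if $\varphi$ is bounded over $\cI(\cC)$. Since $\varphi \mapsto \psi$ is effective, decidability of $\BDDm(\MSO,\cC)$ transfers to $\BDDm(\MSO,\cI(\cC))$. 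The only real bookkeeping to watch is maintaining the invariants that each stage $\psi^\alpha(\fA)$ is $\sim$-closed and contained in $\delta[\fA]$, so that the Interpretation Lemma stays applicable along the iteration; no hard step beyond this is anticipated.
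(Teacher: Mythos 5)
Your proposal is correct and follows essentially the same route as the paper: you pull $\varphi$ back along the interpretation (your $\varphi_X(x):=Yx$ yields exactly the paper's translation $(Xc)^* = \exists z[\varepsilon(z,c)\wedge Yz]$, preserving positivity) and then match the fixed-point stages on $\fA$ with those on $\cI(\fA)$. The only cosmetic difference is that you verify the stage correspondence semantically via the $\sim$-closed preimages $\tilde P_\alpha$, whereas the paper argues syntactically that the translation commutes with the substitution defining $\psi^\alpha$; both yield $\cl{\psi}{\fA}=\cl{\varphi}{\cI(\fA)}$ and the same reduction.
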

\begin{proof}
We use the same notation as in the proof of Lemma~\ref{lem:interpretations}.
Suppose that $\cI = \langle\chi,\delta,\varepsilon,(\varphi_R)_R\rangle$
and let $\psi(x,X)$ be a formula over the signature $\sigma \cup \{x,X\}$.
We extend the notation~$\vartheta^*$ from above to formulae containing a free
set variable~$X$ by treating~$X$ as a relation defined by the formula~$Xx$, i.e.,
we set
\begin{align*}
  (Xc)^* := \exists z[\varepsilon(z,c) \land Xz]\,.
\end{align*}
Let $\fA \in \cC$ be a structure such that $\cI(\fA)$ is defined.
Note that Lemma~\ref{lem:interpretations} implies that
\begin{align*}
  \fA \models
    \forall x\forall y\bigl[\varepsilon(x,y) \to (\vartheta^*(x) \leftrightarrow \vartheta^*(y))\bigr]\,,
  \quad\text{for every formula } \vartheta(x)\,.
\end{align*}
For formulae $\vartheta(x)$~and~$\psi(X,x)$, it follows by induction on the structure of~$\psi$ that
\begin{align*}
  \fA \models
    \forall x\bigl[\delta(x) \to \bigl((\psi[\vartheta/X])^* \leftrightarrow \psi^*[\vartheta^*/X]\bigr)\bigr]\,.
\end{align*}
A simple induction on~$\alpha$ yields
\begin{align*}
  \fA \models \forall x\bigl[\delta(x) \to \bigl((\psi^\alpha)^* \leftrightarrow (\psi^*)^\alpha\bigr)\bigr]\,.
\end{align*}
Since $\fA \models \chi'$, it follows by the definition of the mapping $\vartheta \mapsto \vartheta^\cI$
that, for every $\alpha < \omega$, we have
\begin{align*}
  \cI(\fA) &\models \forall x(\psi^{\alpha+1} \leftrightarrow \psi^\alpha) \\
\text{iff}\quad
  \fA &\models \forall x(\psi^{\alpha+1} \leftrightarrow \psi^\alpha)^\cI \\
\text{iff}\quad
  \fA &\models \chi' \land \forall x\bigl[\delta(x) \to
               ((\psi^{\alpha+1})^* \leftrightarrow (\psi^\alpha)^*)\bigr] \\
\text{iff}\quad
  \fA &\models \forall x\bigl[(\chi' \land \delta(x) \land (\psi^{\alpha+1})^*) \leftrightarrow
                              (\chi' \land \delta(x) \land (\psi^\alpha)^*)\bigr] \\
\text{iff}\quad
  \fA &\models \forall x\bigl[(\chi' \land \delta(x) \land (\psi^*)^{\alpha+1}) \leftrightarrow
                              (\chi' \land \delta(x) \land (\psi^*)^\alpha)\bigr] \\
\text{iff}\quad
  \fA &\models \forall x\bigl[(\chi' \land \delta(x) \land \psi^*)^{\alpha+1} \leftrightarrow
                              (\chi' \land \delta(x) \land \psi^*)^\alpha\bigr] \\
\text{iff}\quad
  \fA &\models \forall x[(\psi^\cI)^{\alpha+1} \leftrightarrow (\psi^\cI)^\alpha\bigr]\,.
\end{align*}
Consequently, $\psi$~is bounded over $\cI(\cC)$
if and only if $\psi^\cI$ is bounded over~$\cC$.
\end{proof}

\begin{cor}\label{cor:decidability and subclasses}
Let $\cC$~be a class of $\tau$-structures and $\psi$~an $\MSO$-formula.
If\/ $\BDDm(\MSO,\cC)$ is decidable then so is
$\BDDm(\MSO,\cC_\psi)$ where
\begin{align*}
  \cC_\psi := \set{ \fA \in \cC }{ \fA \models \psi }\,.
\end{align*}
\end{cor}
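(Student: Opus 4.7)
The plan is to reduce $\BDDm(\MSO,\cC_\psi)$ to $\BDDm(\MSO,\cC)$ by viewing $\cC_\psi$ as the image of~$\cC$ under a trivial $\MSO$-interpretation guarded by~$\psi$, and then invoking Proposition~\ref{prop: boundedness and interpretations}. Concretely, I would take the definition scheme
\[
  \cI := \bigl\langle\psi,\ \top,\ x\seq y,\ (R\bar x)_{R \in \tau}\bigr\rangle\,,
\]
i.e., the identity interpretation whose domain formula is~$\top$, whose equivalence formula is equality, and whose guard sentence is~$\psi$. By construction, $\cI(\fA)$ is defined and equals~$\fA$ precisely when $\fA \models \psi$, so that $\cI(\cC) = \cC_\psi$. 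Proposition~\ref{prop: boundedness and interpretations} then directly yields the decidability of $\BDDm(\MSO,\cI(\cC)) = \BDDm(\MSO,\cC_\psi)$ from that of $\BDDm(\MSO,\cC)$.

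As a sanity check, and to avoid appealing to the full interpretation machinery, one can give a completely elementary reduction. Given an $X$-positive formula $\varphi(X,x)$, set
\[
  \varphi'(X,x) := \varphi(X,x) \lor \neg\psi\,.
\]
Since $\psi$ is a $\tau$-formula and does not mention~$X$, the new formula~$\varphi'$ is still $X$-positive. For $\fA \in \cC$ with $\fA \models \psi$, the disjunct $\neg\psi$ is false and the fixed-point iteration of~$\varphi'$ coincides stage-for-stage with that of~$\varphi$, so $\cl{\varphi'}{\fA} = \cl{\varphi}{\fA}$. For $\fA \in \cC$ with $\fA \not\models \psi$, the disjunct holds trivially, whence $\varphi'^{1}(\fA) = A$ and $\cl{\varphi'}{\fA} \leq 1$. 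Therefore $\varphi'$ is bounded over~$\cC$ if and only if $\varphi$ is bounded over~$\cC_\psi$, which is exactly the required reduction.

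No real obstacle is anticipated: both approaches are essentially a one-line observation. The only point requiring minimal care is the preservation of $X$-positivity when passing from~$\varphi$ to~$\varphi'$ (respectively, from~$\psi$ to the stage formulae in the interpretation), and this is automatic because $\psi$ is a pure $\tau$-formula with no occurrence of~$X$.
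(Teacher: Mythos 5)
Your main argument is exactly the paper's proof: the paper uses the identity interpretation $\cI = \langle\chi,\delta,\varepsilon,(\varphi_R)\rangle$ with $\chi := \psi$, $\delta(x) := x\seq x$, $\varepsilon(x,y) := x\seq y$ and $\varphi_R(\bar x) := R\bar x$, and then invokes Proposition~\ref{prop: boundedness and interpretations} (your choice of $\top$ for the domain formula should formally be $x\seq x$, since the definition scheme requires $\delta$ to have one free variable, but that is cosmetic). Your elementary fallback via $\varphi(X,x) \lor \neg\psi$ is also correct and nicely bypasses the interpretation machinery altogether.
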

\proof
We can use the interpretation $\cI = \langle\chi,\delta,\varepsilon,(\varphi_R)\rangle$
with
\begin{alignat*}{-1}
  \chi &:= \psi\,, &\qquad
  \varepsilon(x,y) &:= x\seq y\,, \\
  \delta(x) &:= x\seq x\,, &\qquad
  \varphi_R(\bar x) &:= R\bar x\,.\rlap{\hbox to 143 pt{\hfill\qEd}}
\end{alignat*}

For the application to boundedness below we will need the following
interpretation results. First, let us consider classes of trees.
The proof of the following lemma is straightforward.
For (a)~and~(b), we use the usual first-child/next-sibling encoding of a tree,
while for~(c) we use a marking of the root, which can be used to recover the
orientation of the edges since we can express reachability in $\MSO$.
\begin{lem}\label{lem: tree interpretations}
There exist $\MSO$-interpretations mapping
\begin{enumerate}
\item the class~$\cT_3$ of all ternary trees to the class~$\cT_{\aleph_0}$ of all countable trees\?;
\item the class of all finite ternary trees to the class of all finite trees\?;
\item the class of all undirected trees to the class of all directed trees.
\qed\end{enumerate}
\end{lem}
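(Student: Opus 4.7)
The plan is to give explicit $\MSO$-interpretations in each case and verify that the image covers the target class. For (a) and (b) I use the classical first-child/next-sibling encoding; for (c), a simple root-marking construction.

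For (a), given a countable tree~$T$, pick any root and totally order the children of each vertex. Form the binary encoding $B(T)$ whose edges come in two kinds\?: a ``first-child'' edge from each non-leaf to its leftmost child, and a ``next-sibling'' edge from each non-last child to its immediate right sibling. Each vertex of $B(T)$ is incident to at most one first-child edge going downward, at most one edge going upward (to its parent or previous sibling), and at most one next-sibling edge sideways, so the undirected skeleton of~$B(T)$ has maximum degree at most three and admits a proper edge colouring with three colours, making it a ternary tree. Auxiliary unary predicates record which incident edge plays which role at each vertex and mark the chosen root. The interpretation~$\cI$ then takes $\delta(x):=x\seq x$, $\varepsilon(x,y):=x\seq y$, and the target edge relation
\begin{align*}
  \varphi_E(x,y) \,:=\, \psi(x,y) \lor \psi(y,x)\,,
\end{align*}
where $\psi(x,y)$ expresses that $x$~is the parent of~$y$ in~$T$\?: there is a vertex~$z$ with a first-child edge from~$x$ to~$z$ such that $y$ is reachable from~$z$ along a (possibly empty) chain of next-sibling edges. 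Reachability along a prescribed set of edges is $\MSO$-definable by set quantification, and the sentence~$\chi$ enforces well-formedness of the encoding. Every countable tree arises as $\cI(\fT)$ for $\fT$ its first-child/next-sibling encoding, so $\cI(\cT_3) \supseteq \cT_{\aleph_0}$. Item~(b) uses the same interpretation, since the encoding of a finite tree is itself finite.

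For (c), mark a single vertex by a unary predicate~$P_r$ (with~$\chi$ asserting that $P_r$~is a singleton). Taking $\delta(x):=x\seq x$ and $\varepsilon(x,y):=x\seq y$, the directed edge relation is defined by
\begin{align*}
  \varphi_E(x,y) \,:=\, E(x,y) \land \rho(x,y)\,,
\end{align*}
where $\rho(x,y)$ says that every connected subset containing~$y$ and the unique vertex of~$P_r$ also contains~$x$. In an undirected tree this expresses that $x$~lies on the unique path from the root to~$y$, so that $(x,y)$ is oriented away from the root. Every directed tree arises from its undirected reduct with the root marked, giving the required surjection.

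The main obstacle is not the definability of the decoded edge relations—which is a routine application of $\MSO$ transitive closure—but the bookkeeping required to fit the auxiliary data (root marker, distinction between first-child and next-sibling edges, and the proper $3$-edge-colouring needed in (a) and (b)) into the fixed vocabulary of the source class, so that the interpretation genuinely covers every element of the target class.
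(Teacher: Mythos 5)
Your proposal is correct and follows exactly the route the paper indicates for this lemma: the first-child/next-sibling encoding for (a) and (b), and a marked root plus $\MSO$-definable reachability (here via connected sets) to recover the orientation in (c); the paper itself gives no more than this one-line sketch, so your write-up simply supplies the routine details. The degree/colouring check for the encoding and the well-formedness sentence~$\chi$ are handled correctly, so there is nothing to object to.
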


Next, we study structures of bounded tree-width.
\begin{defi}
Let $\tau$ be a relational vocabulary and $\fA$ a $\tau$-structure.
A \defn{tree-decomposition} of~$\fA$
is a $2^A$-labelled directed tree $D=\parlr{T,E,\lambda}$
satisfying the following conditions\?:
\begin{itemize}
\item $\bigcup_{t\in T} \lambda(t) = A$.
\item For all relation symbols $R\in\tau$ of arity~$n$ and all tuples
  $(a_1,\dots, a_n) \in R^\fA$,
  there is some $t\in T$ such that $a_1,\ldots,a_n \in \app{\lambda}t$.
\item For all $a\in A$,
  the set $\set{ t\in T }{ a\in\app{\lambda}t }$
  is connected in $\parlr{T,E}$.
\end{itemize}
The \defn{width} of~$D$ is $\max_{t\in T} {\card{\app{\lambda}t}} - 1$.
The \defn{tree-width} of~$\fA$ is the minimum width
of a tree decomposition of~$\fA$.
\end{defi}

\begin{lem}\label{lem: interpretation for bounded twd}
For every $k < \omega$ and all relational vocabularies~$\tau$,
there exists an $\MSO$-interpretation
mapping the class of all trees to the class $\cW_k[\tau]$
of all relational $\tau$-structures of tree-width at most~$k$,
and the class of all finite trees 
to the class of all finite structures from $\cW_k[\tau]$. 
\end{lem}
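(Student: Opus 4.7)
The plan is to encode each structure of tree-width at most~$k$ by a suitable tree whose labelling records a tree decomposition, and then to recover the structure from the tree by an $\MSO$-interpretation that is uniform in the input. Since the target is $\cW_k[\tau]$ for a fixed finite vocabulary~$\tau$, only finitely many unary predicates are needed in the tree vocabulary~$\sigma$, and the whole construction can be performed uniformly.

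First, I would fix a tree vocabulary~$\sigma$ consisting of the edge relation~$E$ together with the following unary predicates\?: a predicate~$B$ marking \emph{bag nodes}\?; predicates $P_1,\dots,P_{k+1}$ marking \emph{slot nodes}, where each bag node is to be adjacent to exactly $k+1$ slot nodes, one of each colour~$P_i$\?; a predicate~$M_{ij}$ for each $1\le i,j\le k+1$, marking a bag node~$b$ to indicate that its $i$-th slot represents the same element as the $j$-th slot of its parent bag\?; and, for every $R\in\tau$ of arity~$n$, predicates $S_R^{\bar\jmath}$ indexed by $\bar\jmath\in\{1,\dots,k+1\}^n$, marking bag nodes at which the tuple of elements in slots~$\bar\jmath$ belongs to~$R^\fA$. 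Given any $\fA\in\cW_k[\tau]$ with a tree decomposition~$D=(T,E^D,\lambda)$, I pad every bag to size exactly $k+1$ by repeating elements, fix an enumeration of each bag, and build the corresponding $\sigma$-tree~$\fT_\fA$ by adjoining $k+1$ slot children to each bag node and setting the marking predicates as described. For finite~$\fA$, the resulting tree is finite.

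Next, I define the interpretation $\cI=\langle\chi,\delta,\varepsilon,(\varphi_R)_{R\in\tau}\rangle$. The sentence~$\chi$ is a routine $\FO$-statement asserting the local shape of the encoding (bag nodes carry exactly $k+1$ slot children of the appropriate colours, $M_{ij}$-markers are consistent, and $S_R^{\bar\jmath}$-markers respect the implicit bag structure). The formula $\delta(x)$ selects slot nodes that are not \emph{inherited} from above\?; concretely, $\delta(x)$ holds for a slot $x$ of colour~$P_i$ if its parent bag~$b$ carries no marker $M_{ij}$ for any~$j$, so that $x$~introduces a fresh element. The equivalence $\varepsilon(x,y)$ expresses that two slot nodes represent the same element of~$\fA$\?; this is the key step. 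Using the connectedness clause of tree decompositions, two slot nodes~$x,y$ of colours $P_i,P_{i'}$ belong to the same element iff there is a path of bag nodes $b_0,\dots,b_m$ between their parent bags along which one can choose matching slot indices $i=i_0,i_1,\dots,i_m=i'$ such that consecutive indices are linked by the appropriate~$M_{i_\ell i_{\ell+1}}$ (or its reverse) at~$b_\ell$ or~$b_{\ell+1}$. This is readily expressible in $\MSO$ by quantifying over a set of bag nodes and a partition of that set into $k+1$ colour-classes encoding the choice of indices along the path\?; the existence of such a labelled path is an $\MSO$-property of the tree. Finally, $\varphi_R(\bar x)$ asserts that the slot nodes $\bar x$ are $\varepsilon$-equivalent to slot nodes $\bar y$ lying below a common bag that carries the marker $S_R^{\bar\jmath}$ for the indices~$\bar\jmath$ recording their colours.

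The main obstacle is the formula $\varepsilon$\?: it must faithfully represent the merging induced by the $M_{ij}$-markers all along the decomposition, yet remain inside $\MSO$. The choice of a single partition of an auxiliary set of bags into $k+1$ colour-classes, witnessing a consistent slot path between the two endpoints, is exactly what makes this feasible in $\MSO$. Once $\varepsilon$ is in place, the verification that $\cI(\fT_\fA)\cong\fA$ is a straightforward unwinding of the definitions\?: $\delta/{\sim}$ is in bijection with~$A$ via ``first occurrence on the tree decomposition'', and the clause on tree decompositions guarantees that every relational tuple of~$R^\fA$ is witnessed at some common bag, so that $\varphi_R$~recovers~$R^\fA$ exactly. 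Since $\fT_\fA$ can be chosen finite whenever $\fA$~is finite, the interpretation restricts to the finite classes as required.
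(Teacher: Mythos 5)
Your proposal follows essentially the same route as the paper: encode a width-$k$ tree decomposition as a labelled tree carrying (i) the isomorphism type of each bag, (ii) markers identifying slots of a bag with slots of its parent, and then recover the structure by an $\MSO$-interpretation whose equivalence $\varepsilon$ is defined by the existence of a consistently indexed chain of bags -- exactly the step the paper delegates to its formulae $\varepsilon_{i,j}$, and which your set-plus-colouring formulation makes explicit. The differences are cosmetic: the paper keeps only the bag nodes, lets $\delta(x) := x\seq x$, and uses $\varepsilon_{0,0}$ together with the normalisation that every element occurs as the $0$-th entry of some bag, whereas you adjoin explicit slot children and restrict $\delta$ to ``fresh'' slots.

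One detail in your construction does not work as written: you pad every bag to size exactly $k+1$ \emph{by repeating elements}. Two slots of the \emph{same} bag may then denote the same element of~$\fA$, but your $\varepsilon$ only identifies slots linked by a chain of $M_{ij}$-markers between \emph{distinct} parent--child bag pairs, so intra-bag duplicates are never merged; if the duplicated element is fresh at that bag, $\delta/{\sim}$ acquires several classes for a single element and $\cI(\fT_\fA)\not\cong\fA$. This is easily repaired -- pad with dummy slots excluded from the encoding (as the paper does by simply allowing bags of size $\leq k+1$), or add intra-bag equality markers and fold them into the path condition for~$\varepsilon$, or let $\delta$ select only the least-indexed fresh slot. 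You should also say explicitly that a root marker is part of the tree vocabulary, since the trees here are undirected and the parent relation you use throughout must be $\MSO$-recovered from it; the paper includes such a predicate~$R$ for this reason.
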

\proof
For finite trees, such an interpretation was first given
by Courcelle and Engelfriet~\cite{CourcelleEngelfriet95},
but using a slightly different notion of an interpretation.
We give a detailed proof, since the precise version needed here
does not appear in the literature.

We start by explaining how we can encode a structure $\fA \in \cW_k[\tau]$
into a tree.
Then we will construct an interpretation~$\cI$ performing the inverse translation.
Let $\fA$ be a $\tau$-structure of tree-width at most~$k$
and let $D=\parlr{T,E,\lambda}$ be a tree decomposition of~$\fA$ of width at most~$k$.
It is no restriction to assume that
no $\lambda(v)$ is empty
and that there is some injective function $\iota : A \to T$
with $a \in \lambda(\iota(a))$, for all~$a$.
For every $v\in T$, we
fix an enumeration $c^v_0,\dots,c^v_{\ell_v}$
of $\lambda(v)$ where $0 \leq \ell_v \leq k$
and $a = c^{\iota(a)}_0$ for all $a\in A$.
We obtain a tree structure
by turning $(T,E)$ into an undirected tree
expanded by the following monadic relations\?:
\begin{itemize}
\item a unary predicate~$R$ containing only the root\?;
\item unary relations~$E_{i,j}$, for $0\leq i,j\leq k$,
  containing those $v\in T$ such that $v$~has a parent~$u$ and $c^v_i = c^u_j$\?;
\item unary predicates $P_\fC$, for every $\tau$-structure~$\fC$ with universe
  $C = \{0,\ldots,\ell\}$, for some $0\leq\ell\leq k$,
  where $P_\fC$~contains those $v\in T$ such that $\fC$ is isomorphic
  to the substructure of $\fA$ induced by $\app{\lambda}v$
  via the isomorphism $i \mapsto c^v_i$.
\end{itemize}
The interpretation $\cI = \langle\chi, \delta(x),\varepsilon(x,y),(\varphi_R(\bar x))_{R \in \tau}\rangle$
that reverses this encoding is defined as follows.
The formula $\chi$~states that
\begin{itemize}
\item $R$ is a singleton, unless the universe is empty\?;
\item for all $0\leq i,i',j,j' \leq k$ with $i \neq i'$ and $j \neq j'$,
  $E_{i,j}$ is disjoint from $E_{i,j'}$ and from $E_{i',j}$\?;
\item the $P_{\fC}$ are disjoint\?;
\item if $v \in P_\fC$ and $0\leq i,j\leq k$ are such that $i\geq\card C$,
  then there is no $u \in T$ with $(v,u) \in E_{i,j}$ or $(u,v) \in E_{j,i}$.
\end{itemize}
For all $0\leq i,j\leq k$,
there is an $\MSO$-formula $\app{\varepsilon_{i,j}}{x,y}$
which defines the set of pairs $\parlr{v,u}$ such that $c^v_i = c^u_j$.
We set
\begin{align*}
  \delta(x) := x\seq x
  \quad\text{and}\quad
  \varepsilon := \varepsilon_{0,0}\,.
\end{align*}
Finally, for each relation symbol $R\in\tau$ of arity~$r$, we define
\[
  \app{\varphi_R}{x_1,\ldots,x_r} :=
    \bigvee_{0\leq i_1\leq k}\dots\bigvee_{0\leq i_r\leq k}
    \bigvee_{\substack{\fC \\ \parlr{i_1,\ldots,i_r}\in R^{\fC}}}
    \exists y
    \biggl[
      \bigwedge_{1\leq j\leq r}\app{\varepsilon_{0,i_j}}{x_j,y}
      \wedge P_{\fC}y
    \biggr]\,.\eqno{\qEd}
\]

\section{Decidability results for boundedness}
\label{sect:decidabilities}
\label{sect:end II}

Using the reduction techniques of the previous sections
we obtain a wealth of decidability results.
We start with $\BDDm(\MSO,\cT)$ and $\BDDm(\MSO,\cW_k)$.
\begin{prop}\label{prop:boundedness for all trees}
The monadic boundedness problem for $\MSO$ over the class of all trees is decidable.
\end{prop}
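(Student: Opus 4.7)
The plan is to reduce $\BDDm(\MSO,\cT)$ to the already-decided problem $\BDDm(\MSO,\cT_3)$ from Theorem~\ref{thm:boundedness for ternary trees}, in two steps: first discard uncountably branching witnesses, then re-encode countable trees inside ternary ones via an $\MSO$-interpretation.

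First I would invoke Proposition~\ref{prop:countable branching property for MSO}: a formula $\varphi(X,x)\in\MSO$ is bounded over $\cT$ if and only if it is bounded over the class $\cT_{\aleph_0}$ of countably branching trees. This already yields $\BDDm(\MSO,\cT)=\BDDm(\MSO,\cT_{\aleph_0})$, so it suffices to decide boundedness over $\cT_{\aleph_0}$.

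Second I would appeal to Lemma~\ref{lem: tree interpretations}(a), which supplies an $\MSO$-interpretation $\cI$ mapping $\cT_3$ onto $\cT_{\aleph_0}$ (via the standard first-child/next-sibling encoding, every countable tree is realised as $\cI(\fT)$ for some ternary tree $\fT$). Proposition~\ref{prop: boundedness and interpretations} then guarantees that decidability of $\BDDm(\MSO,\cT_3)$ transfers to decidability of $\BDDm(\MSO,\cI(\cT_3)) = \BDDm(\MSO,\cT_{\aleph_0})$. Concretely, given input $\psi(X,x)$, one computes the translated formula $\psi^\cI(X,x)$ as in Lemma~\ref{lem:interpretations} and decides its boundedness over $\cT_3$ using Theorem~\ref{thm:boundedness for ternary trees}.

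The two steps chain together: $\psi$ is bounded over $\cT$ iff it is bounded over $\cT_{\aleph_0}$ (by Proposition~\ref{prop:countable branching property for MSO}) iff $\psi^\cI$ is bounded over $\cT_3$ (by Proposition~\ref{prop: boundedness and interpretations}, using surjectivity of $\cI$ onto $\cT_{\aleph_0}$) iff a decidable condition on $\psi^\cI$ holds (by Theorem~\ref{thm:boundedness for ternary trees}). There is no real obstacle here beyond bookkeeping; the two nontrivial ingredients---the L\"owenheim--Skolem style collapse to countable branching and the ternary encoding of countable trees---have already been established, and Proposition~\ref{prop: boundedness and interpretations} packages precisely the syntactic preservation of boundedness under $\MSO$-interpretations (note that the interpretation in Lemma~\ref{lem: tree interpretations}(a) is purely first-order in its atomic clauses, so the positivity of $X$ is preserved and Proposition~\ref{prop: boundedness and interpretations} applies without caveats).
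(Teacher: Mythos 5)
Your proposal is correct and follows essentially the same route as the paper: a L\"owenheim--Skolem style collapse to countably branching trees (Proposition~\ref{prop:countable branching property for MSO}, resp.\ Proposition~\ref{prop:Loewenheim-Skolem for trees} in the paper's phrasing), followed by the first-child/next-sibling interpretation of Lemma~\ref{lem: tree interpretations} combined with Proposition~\ref{prop: boundedness and interpretations} and Theorem~\ref{thm:boundedness for ternary trees}. The only difference is cosmetic bookkeeping about which of the two equivalent transfer statements is cited first.
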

\begin{proof}
By Proposition~\ref{prop:Loewenheim-Skolem for trees}, an $\MSO$-formula
is bounded over the class of all trees if, and only if, it is bounded over
the class of all countable trees. Hence, it is sufficient to prove that
$\BDDm(\MSO,\cT_{\aleph_0})$ is decidable.
By Lemma~\ref{lem: tree interpretations},
there exists an $\MSO$-interpretation~$\cI$ mapping~$\cT_3$
to $\cT_{\aleph_0}$. Hence, the decidability of $\BDDm(\MSO,\cT_{\aleph_0})$
follows by Proposition~\ref{prop: boundedness and interpretations}
and Theorem~\ref{thm:boundedness for ternary trees}.
\end{proof}
\begin{thm}\label{theo:tree-width extension}
For every $k < \omega$ and all relational vocabularies~$\tau$,
$\BDDm(\MSO,\cW_k[\tau])$, the monadic boundedness problem for $\MSO$
over the class~$\cW_k[\tau]$ of all relational $\tau$-structures
of tree-width at most~$k$ is decidable.
\end{thm}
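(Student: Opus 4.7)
The plan is to deduce this essentially immediately from three ingredients already assembled: the decidability of $\BDDm(\MSO,\cT)$ over arbitrary trees (Proposition~\ref{prop:boundedness for all trees}), the $\MSO$-interpretation from trees onto $\cW_k[\tau]$ provided by Lemma~\ref{lem: interpretation for bounded twd}, and the transfer of boundedness decidability along $\MSO$-interpretations (Proposition~\ref{prop: boundedness and interpretations}). So I would simply chain these together.

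More explicitly: fix $k < \omega$ and a relational vocabulary~$\tau$, and let $\cI$ be the $\MSO$-interpretation from Lemma~\ref{lem: interpretation for bounded twd} that sends (trees coding tree-decompositions of width $\leq k$) to structures in $\cW_k[\tau]$. The construction in the proof of that lemma is in fact surjective onto $\cW_k[\tau]$: every $\fA \in \cW_k[\tau]$ has a tree-decomposition of width $\leq k$, which, enriched by the root marker~$R$, the edge-labels~$E_{i,j}$, and the bag-types~$P_\fC$, yields a tree~$\fT$ with $\cI(\fT) \cong \fA$. Thus $\cI(\cT) = \cW_k[\tau]$ up to isomorphism, where $\cT$ denotes the class of all trees.

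By Proposition~\ref{prop:boundedness for all trees}, $\BDDm(\MSO,\cT)$ is decidable. Applying Proposition~\ref{prop: boundedness and interpretations} to~$\cI$ and~$\cT$, we conclude that $\BDDm(\MSO,\cI(\cT)) = \BDDm(\MSO,\cW_k[\tau])$ is decidable as well. Concretely, given an $X$-positive $\MSO[\tau]$-formula $\varphi(X,x)$, one computes the translated formula $\varphi^\cI$ along the lines of Lemma~\ref{lem:interpretations} (note that since $\cI$ contains no formula $\varphi_X$, the positivity of~$\varphi$ in~$X$ is automatically preserved by~$\varphi^\cI$), and then asks whether $\varphi^\cI$ is bounded over~$\cT$.

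There is essentially no obstacle here beyond checking that the ingredients glue as claimed; the only mild point is to verify surjectivity of~$\cI$ onto $\cW_k[\tau]$, which amounts to observing that the encoding of an arbitrary width-$\leq k$ tree-decomposition by a labelled tree in the sense of the proof of Lemma~\ref{lem: interpretation for bounded twd} satisfies the constraints defined by the sentence~$\chi$ of~$\cI$, so that $\cI$ recovers the original structure. All real work has already been done in Part~I (via the reduction to limitedness of weighted parity automata) and in the countable-branching and ternary-encoding reductions leading to Proposition~\ref{prop:boundedness for all trees}.
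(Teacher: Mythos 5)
Your proposal is correct and follows exactly the paper's own argument: the paper likewise proves this theorem by chaining Lemma~\ref{lem: interpretation for bounded twd} and Proposition~\ref{prop: boundedness and interpretations} to reduce $\BDDm(\MSO,\cW_k[\tau])$ to $\BDDm(\MSO,\cT)$, which is decidable by Proposition~\ref{prop:boundedness for all trees}. Your additional remarks on surjectivity of the interpretation and preservation of $X$-positivity are accurate elaborations of points the paper leaves implicit.
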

\begin{proof}
With the help of
Lemma~\ref{lem: interpretation for bounded twd}
and Proposition~\ref{prop: boundedness and interpretations}
we can reduce $\BDDm(\MSO,\cW_k[\tau])$ to $\BDDm(\MSO,\cT)$.
The latter is decidable by
Proposition~\ref{prop:boundedness for all trees}.
\end{proof}

\begin{cor}
$\BDDm(\EFO)$, $\BDDm(\AFO)$, and\/ $\BDDm(\ML)$ are decidable.
\end{cor}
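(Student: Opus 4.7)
The plan is to exploit the bounded-tree-width property for $\BDD$ that each of these fragments enjoys (cf.\ Observation~\ref{obs:classical transfer results}), reducing each problem to $\BDDm(\MSO,\cW_k[\tau])$ for a computable width bound~$k$ and then invoking Theorem~\ref{theo:tree-width extension}.

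Concretely, given an $X$-positive formula $\varphi(X,x)$ in one of $\EFO$, $\AFO$, or $\ML$ over a relational vocabulary~$\tau$, I would first compute an explicit width bound $k = k(\varphi,\tau)$ such that $\varphi$ is bounded over the class of all structures if, and only if, it is bounded over $\cW_k[\tau]$. The discussion following Observation~\ref{obs:classical transfer results} indicates that for $\ML$ one may take $k$ to be the width of~$\tau$ (so a bound depending only on $\tau$), while for $\EFO$ and $\AFO$ a simple function of the size of~$\varphi$, for instance the number of first-order variables plus one, suffices, as witnessed by the argument in Section~\ref{sect:classical transfer}. In each case $k$ is effectively computable from the input.

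Finally, each of the three logics is a syntactic fragment of $\MSO$ that still admits a positive monadic variable~$X$: indeed $\ML \subseteq \FO \subseteq \MSO$, and $\EFO$, $\AFO$ are by definition $\MSO$-formulae subject only to prefix and polarity constraints on their first-order parts. Hence $\varphi(X,x)$ is a legitimate input to $\BDDm(\MSO,\cW_k[\tau])$, and Theorem~\ref{theo:tree-width extension} decides whether $\varphi$ is bounded over $\cW_k[\tau]$; combined with the transfer this decides boundedness of $\varphi$ over the class of all structures. There is no substantive obstacle here, as all the heavy lifting has been done in Part~I and in Section~\ref{sect:transfer}; the only routine task is to verify effectivity of the width bound in each of the three cases, which is immediate from the classical tree-unfolding argument (for $\ML$) and from the $\exists^{\ast}\forall^{\ast}$-substructure argument in Section~\ref{sect:classical transfer} (for $\EFO$ and $\AFO$).
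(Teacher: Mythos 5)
Your proposal is correct and follows essentially the same route as the paper: the paper's proof consists precisely of invoking Observation~\ref{obs:classical transfer results} to transfer each problem to (finite) structures of bounded tree-width and then applying Theorem~\ref{theo:tree-width extension}. Your additional remarks on the computable width bounds (width of~$\tau$ for $\ML$; a bound derived from the formula for $\EFO$, $\AFO$ — the paper's lemma gives the maximum of the width of~$\tau$ and the number of quantified variables plus one) just make explicit what the paper leaves to the preceding discussion.
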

\begin{proof}
For each of these logics,
Observation~\ref{obs:classical transfer results}
provides a transfer result
to (finite) structures of bounded tree-width.
\end{proof}

Using similar techniques as above, one can extend Theorem~\ref{theo:tree-width extension}
to the extension of $\MSO$ by counting quantifiers,
to guarded second-order logic $\GSO$, and to simultaneous fixed points.
Instead of replacing $\MSO$ by a stronger logic,
one can also replace tree-width by clique-width.

We only give a sketch of the proof.
Let us denote by $\MSO+\mathrm{C}$ and $\GSOs+\mathrm{C}$ the extension
of the respective logic by predicates of the form
$\lvert X\rvert < \aleph_0$ and $\lvert X\rvert \equiv k \pmod m$,
where $X$~is a second-order variable and $k,m < \omega$.
A \emph{simultaneous} fixed point is defined by a system of formulae
$\varphi_0(\bar X,\bar x_0),\dots,\varphi_{n-1}(\bar X,\bar x_{n-1})$
with first-order variables~$\bar x_i$ and $n$~second-order variables $X_0,\dots,X_{n-1}$.
\begin{thm}\label{thm: bdd for GSO over bounded twd}
For every $k < \omega$, the boundedness problem for simultaneous $(\GSOs + \mathrm{C})$-fixed points
over the class of all relational structures of tree-width at most~$k$
is decidable.
\end{thm}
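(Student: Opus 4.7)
The plan is to reduce this boundedness problem to $\BDDm(\MSO,\cT)$, whose decidability is Proposition~\ref{prop:boundedness for all trees}. I would proceed in three reduction steps, each preserving boundedness of the fixed-point recursion in question.

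First, I would translate $(\GSOs+\mathrm{C})$-formulae over $\cW_k[\tau]$ into $(\MSO+\mathrm{C})$-formulae over trees. Every structure in $\cW_k[\tau]$ admits an $\MSO$-interpretation from a tree encoding of a tree-decomposition of width at most~$k$, as established in Lemma~\ref{lem: interpretation for bounded twd}. Since any guarded tuple in such a structure must lie entirely within a single bag, guarded first- and second-order quantification over the interpreted structure corresponds to ordinary $\MSO$-quantification over positions and over unary predicates on the encoding tree. In particular, a strictly guarded $X$-positive $\GSOs$-formula lifts to an $\MSO$-formula on the tree vocabulary whose fixed-point recursion has the same closure ordinal on each tree encoding as the original recursion does on the interpreted structure. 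By Proposition~\ref{prop: boundedness and interpretations}, this reduces $\BDD(\GSOs+\mathrm{C},\cW_k[\tau])$ to the corresponding boundedness problem over trees.

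Second, I would eliminate simultaneous fixed points in favour of a single monadic one. A system $\varphi_0(\bar X,\bar x_0),\dots,\varphi_{n-1}(\bar X,\bar x_{n-1})$ of strictly guarded fixed points can be encoded by introducing $n$ fresh unary predicates $U_0,\dots,U_{n-1}$ and, for each guarded $r_i$-tuple, a fresh proxy point marked by~$U_i$. On trees this amounts to an $\MSO$-interpretation that produces a structure still of bounded tree-width (since each proxy can be colocated with its underlying bag). The system is then equivalent to the single monadic recursion $\varphi(X,x) := \bigvee_{i<n}(U_i x \wedge \varphi_i^*(X\cap U_0,\dots,X\cap U_{n-1},x))$, where $\varphi_i^*$ retrieves the original tuple from the proxy. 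Positivity in~$X$ and uniformity of the closure bound are preserved, so boundedness of the system and of~$\varphi$ coincide.

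Third, I would absorb the counting predicates at the automaton level. The $\MSO$-automata of Walukiewicz used in Section~\ref{sect:transfer} readily accommodate the predicates $\lvert X\rvert<\aleph_0$ and $\lvert X\rvert\equiv k\pmod m$: the former via a parity condition on the occurrence of $X$-labelled positions along branches, the latter via a modular counter in the state space. Consequently, the Feferman--Vaught argument of Proposition~\ref{prop:Feferman-Vaught}, and hence the entire Part~I reduction to weighted parity automata culminating in Theorem~\ref{thm:boundedness for ternary trees}, extends from $\MSO$-types to $(\MSO+\mathrm{C})$-types on ternary trees. Combining the three reductions yields the claim. The main obstacle will be the first step: carrying out a clean $\MSO$-interpretation that faithfully translates guarded second-order quantification into $\MSO$-quantification over unary predicates on tree encodings while preserving monotonicity of the recursion; once this is in place and the bookkeeping on bag sizes is handled (noting that guarded tuples fit in bags of size at most $k+1$ and hence are encodable by finitely many unary tree-labels), the remaining two steps are routine adaptations of constructions already present in Parts~I and~II.
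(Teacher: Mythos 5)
Your overall route matches the paper's: reduce to fixed points of $\MSO$-formulae over (ternary) trees via the interpretation of Lemma~\ref{lem: interpretation for bounded twd} and Proposition~\ref{prop: boundedness and interpretations}, and eliminate simultaneity by introducing copies/proxies on the tree side. Two of your steps diverge from the paper in substance. First, you handle the passage from $\GSOs$ to $\MSO$ by arguing directly that every guarded tuple sits inside a single bag of the tree-decomposition, so guarded relations become tuples of unary predicates on the encoding tree; the paper instead invokes the sparsity theorem ($\GSOs+\mathrm{C}$ collapses to $\MSO+\mathrm{C}$ on structures of tree-width at most~$k$, citing Courcelle and Blumensath) \emph{before} interpreting into trees. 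These are essentially the same fact, and your version makes explicit why it holds, though note that an $r$-ary guarded fixed-point variable then becomes a \emph{tuple} of unary predicates on the tree, so even a single non-monadic fixed point turns into a simultaneous monadic one -- your step two has to absorb this as well, and Proposition~\ref{prop: boundedness and interpretations} as stated covers only single monadic $\MSO$ fixed points, so it needs the same (routine but unstated) generalisation that the paper's sketch also relies on. Second, and more significantly, your treatment of the counting predicates is heavier than necessary: you propose to rework the entire Part~I machinery (the positive-type Feferman--Vaught argument, annotations, and the weighted-automaton reduction) for $\MSO+\mathrm{C}$-types. The paper avoids this entirely by observing that on ternary trees $\MSO+\mathrm{C}$ collapses to plain $\MSO$ (bounded branching gives an $\MSO$-definable linear order, so finiteness and modular cardinality of a set parameter are $\MSO$-expressible), after which Theorem~\ref{thm:boundedness for ternary trees} applies unchanged. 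Your extension is plausible -- and your finiteness test along branches does need the bounded branching via K\"onig's lemma, which you have -- but it is asserted rather than carried out, and since the collapse makes it unnecessary, you should replace that step by the collapse argument rather than reopening Part~I.
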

\begin{proof}[Sketch]
Since structures of tree-width at most~$k$ are sparse,
we can find, for every $(\GSOs+\mathrm{C})$-formula, an equivalent $(\MSO+\mathrm{C})$-formula
(see \cite{Courcelle03,Blumensath10}).
Therefore, the boundedness problem reduces to the boundedness
of simultaneous $(\MSO+\mathrm{C})$-fixed points on that class.
Using the interpretation argument from above, we can reduce it further
to the boundedness for simultaneous $(\MSO+\mathrm{C})$-fixed points
on the class of all ternary trees. On ternary trees,
$\MSO+\mathrm{C}$ collapses to $\MSO$.
Therefore, we only need to decide boundedness for simultaneous
$\MSO$-fixed points.
Finally, using again an interpretation argument we can replace
a simultaneous fixed point by an ordinary one (by making several copies of each vertex
of the tree, one for each component of the simultaneous fixed point).
\end{proof}
\begin{cor}\label{cor:decidability of BDD(GF), etc}
The following problems are decidable\?:
$\BDD(\GF)$, $\BDD(\muGF)$, $\BDDm(\Lmu)$, $\BDD(\GSOg,\cW_k)$, and\/
$\BDDm(\GSO,\cW_k)$.
\end{cor}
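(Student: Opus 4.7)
The plan is to derive each of the five decidability statements by chasing the reductions already established: the bounded-tree-width transfer from Observation~\ref{obs:classical transfer results}, the guarded-to-strictly-guarded reduction of Proposition~\ref{prop: GSOg reduces to GSOgs}, and finally the heavy lifting done by Theorem~\ref{thm: bdd for GSO over bounded twd} (for the guarded cases) and Theorem~\ref{theo:tree-width extension} (for the monadic ones). Since all the nontrivial work is already behind us, the corollary amounts to verifying that each problem fits into one of these templates.

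For the two problems already stated over $\cW_k$, the argument is direct. Given $\varphi \in \GSOg$, Proposition~\ref{prop: GSOg reduces to GSOgs} effectively produces $\varphi^g \in \GSOgs \subseteq \GSOs$ with the same boundedness behaviour, so $\BDD(\GSOg,\cW_k)$ reduces to $\BDD(\GSOs,\cW_k)$, which is decidable by Theorem~\ref{thm: bdd for GSO over bounded twd}. For $\BDDm(\GSO,\cW_k)$, the monadic recursion variable $X$ is automatically guarded (by equality), so, after collapsing $\GSO$ to $\MSO$ on the sparse class $\cW_k$ (as already used in the proof sketch of Theorem~\ref{thm: bdd for GSO over bounded twd}), the problem reduces to $\BDDm(\MSO,\cW_k)$, which is Theorem~\ref{theo:tree-width extension}.

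For the three problems stated over the class of all structures, the first move is a transfer to bounded tree-width. By Observation~\ref{obs:classical transfer results}(c), both $\GF$ and $\muGF$ have the bounded-tree-width property for $\BDD$, with bound depending only on the width of the vocabulary: hence $\BDD(\GF) = \BDD(\GF,\cW_k)$ and $\BDD(\muGF) = \BDD(\muGF,\cW_k)$ for the appropriate $k$. Since $\GF \subseteq \muGF \subseteq \GSOg$, both reduce to $\BDD(\GSOg,\cW_k)$, handled above. Similarly, $\Lmu$ has the tree-model property (Observation~\ref{obs:classical transfer results}(b)), hence the bounded-tree-width property (with $k$ bounded by the width of the vocabulary), so $\BDDm(\Lmu) = \BDDm(\Lmu,\cW_k)$; since $\Lmu \subseteq \MSO$ and $\Lmu$-fixed points are monadic, this reduces to $\BDDm(\MSO,\cW_k)$ from Theorem~\ref{theo:tree-width extension}.

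The only step requiring care is keeping the reductions polarity-aware: when invoking Proposition~\ref{prop: GSOg reduces to GSOgs}, one must check that $\varphi^g$ remains positive in $X$, which is ensured by the construction in that proposition, and when reducing a $\GF$ or $\muGF$ formula into $\GSOg$, no change of syntax is required. I do not expect any genuine obstacle here: all technical difficulties have been discharged in Part~I and in Sections~\ref{sect:transfer}--\ref{sect:interpretations}, and the corollary is effectively a bookkeeping exercise composing those black boxes.
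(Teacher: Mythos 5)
Your proof is correct and composes the same chain of reductions as the paper: bounded-tree-width transfer (Observation~\ref{obs:classical transfer results}), the guarded-to-strictly-guarded reduction (Proposition~\ref{prop: GSOg reduces to GSOgs}), and Theorems~\ref{thm: bdd for GSO over bounded twd} and~\ref{theo:tree-width extension}. The only (harmless) deviations are that you route $\BDDm(\Lmu)$ through $\BDDm(\MSO,\cW_k)$ rather than via the inclusion $\Lmu \subseteq \muGFs$, and that for $\BDDm(\GSO,\cW_k)$ the relevant point is that the single free first-order variable $x$ is a guarded (singleton) tuple, so $\varphi(X,x)\in\GSOs$ --- not merely that the monadic variable $X$ is guarded --- after which Theorem~\ref{thm: bdd for GSO over bounded twd} applies directly.
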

\begin{proof}
By Observation~\ref{obs:classical transfer results},
$\GF$ and $\muGF$ have the bounded-tree-width property for $\BDD$.
Hence, $\BDD(\GF)$ and $\BDD(\muGF)$ reduce to $\BDD(\GF,\cW_k)$ and $\BDD(\muGF,\cW_k)$, respectively,
which in turn are subsumed by $\BDD(\GSOg,\cW_k)$.
According to Proposition~\ref{prop: GSOg reduces to GSOgs},
$\BDD(\GSOg,\cW_k)$ reduces to $\BDD(\GSOgs,\cW_k)$ which is decidable
by Theorem~\ref{thm: bdd for GSO over bounded twd}.

For $\BDDm(\GSO,\cW_k)$ note that, singletons being always guarded,
every $\GSO$-formula $\varphi(X,x)$
with a single free first-order variable~$x$ belongs
to $\GSOs$.
Hence, $\BDDm(\GSO,\cW_k)$ reduces to $\BDDm(\GSOs,\cW_k)$
and the claim follows again from Theorem~\ref{thm: bdd for GSO over bounded twd}.
\end{proof}

\bigskip
\section*{Part III. Complexity Results}

\medskip
\section{Complexity}
\label{sect:complexity}
\label{sect:start III}
\label{sect:end III}

In connection with our decision procedures we have not been
specific about the algorithmic complexities involved.
The fact that we have to deal with $X$-positive $n$-types
as basic data has a major impact on all upper bounds that
can be derived from our approach.
Space $\exp^n(\Theta(\len{\tau}))$ is necessary to even
store such a type ($\exp^n$ denotes
the $n$-fold application of the exponentiation operation,
that is, a tower of height~$n$).
Overall it is straightforward to check that, on input~$\varphi$,
our decision procedure runs
in time $\exp^{\app{\qr}{\varphi} + O(1)}(\len{\varphi})$.

We now provide a corresponding lower bound,
even for monadic boundedness for first-order logic over just finite trees.
Note that, for most natural fragments of $\MSO$,
one can obtain a lower bound from the complexity of the
satisfiability problem of the fragment.
For instance, $\BDDm(\ML)$ is \textsc{Pspace}-hard
since satisfiability for $\ML$ is \textsc{Pspace}-complete.
For first-order logic over finite words \emph{with order,}
as well as over finite trees \emph{without order,}
we can similarly derive lower bounds from corresponding 
bounds for the satisfiability problem.

\begin{thm}\label{theo:lower complexity bound}
\textup{(a)}
The boundedness problem $\BDDm(\FO,\cP)$ for 
first-order logic over the class of finite words with order, 
is complete for $\DSPACE(\exp^{\poly(n)}(1))$.

\textup{(b)}
The boundedness problem 
$\BDDm(\FO,\cT_\fin)$ for first-order logic over the class
of all finite trees is hard for $\DTIME(\exp^{\app{\poly}n}(1))$.
\qed\end{thm}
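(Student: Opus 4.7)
The proof has two parts: an upper bound to establish completeness in~(a), and reductions from satisfiability to establish the two lower bounds. Only hardness is claimed in~(b), so no matching upper bound is needed there.

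For the \textbf{upper bound} in~(a), I would specialise the techniques of Part~I to the class~$\cP$ of finite words with order. On words, a direct Feferman--Vaught analysis shows that the number of $X$-positive $n$-types of relevance is bounded by $\exp^{O(n)}(|\tau|)$, and a pigeonhole/pumping argument on the sequence of type-tiles along a word shows that if $\varphi(X,x)$ is bounded at all, it is bounded by some $N \leq \exp^{O(\qr(\varphi))}(|\tau|)$. Boundedness then reduces to unsatisfiability of the $\FO[\tau]$-formula $\varphi^{N+1} \wedge \neg\varphi^N$ over~$\cP$. Representing the iterated substitution by a shared, DAG-style encoding instead of expanding it as a tree, this satisfiability question fits into $\DSPACE(\exp^{\poly(n)}(1))$, matching the classical Stockmeyer bound for $\SAT(\FO,\cP)$.

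For the \textbf{lower bounds} in~(a) and~(b), I would reduce from the respective satisfiability problems, which are hard for $\DSPACE(\exp^{\poly(n)}(1))$ and $\DTIME(\exp^{\poly(n)}(1))$, respectively, by classical results of Stockmeyer and Meyer. Given a sentence~$\psi$ over a vocabulary $\tau_0 \cup \{P_1,\dots,P_k\}$ extending the word/tree signature~$\tau_0$ by fresh unary predicates, I would construct an $X$-positive formula $\varphi_\psi(X,x)$ whose fixed-point iteration advances one element at a time precisely when a translated version of~$\psi$ can be witnessed. Over words with successor~$S$, take
\[
  \varphi_\psi(X,x) \ :=\ \forall y\, \neg(y < x) \ \vee\
                   \exists y\bigl(Xy \wedge S(y,x) \wedge \widehat\psi\bigr),
\]
where $\widehat\psi$ is the $\FO[\tau_0 \cup \{P_1,\dots,P_k\}]$-sentence expressing ``some contiguous substring of the word satisfies~$\psi$ (under the given interpretation of the~$P_i$)''; this is obtained from~$\psi$ by relativising its quantifiers to a pair of $\FO$-defined endpoints. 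A padding argument then shows that if~$\psi$ is satisfiable, say on some word~$w_0$, then every sufficiently long word admits an expansion of the~$P_i$'s that makes $\widehat\psi$ true (realise a copy of $w_0$ inside some substring), so the closure ordinal of $\varphi_\psi$ is unbounded; if~$\psi$ is unsatisfiable, then $\widehat\psi$ is false in every expansion, and the closure ordinal is at most~$1$. Hence $\varphi_\psi$ is bounded over~$\cP$ iff~$\psi$ is unsatisfiable, and $|\varphi_\psi|$ is polynomial in~$|\psi|$, giving the desired hardness. For~(b), the same scheme is carried out along a designated root-to-leaf path of the tree, using the (first-order definable) parent/child relation and a fresh ``path'' predicate in place of~$S$.

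The main obstacle is the careful bookkeeping needed for the upper bound in~(a): one must exhibit an explicit bound —— polynomial in $|\varphi|$ —— on the height of the tower of exponentials used by the algorithm, and in particular ensure that both~$N$ and a usable representation of $\varphi^N$ can be stored and manipulated in space $\exp^{\poly(n)}(1)$ rather than a level higher in the tower. The lower-bound reductions themselves are technically lighter, but require care in defining~$\widehat\psi$ so that satisfiability of~$\psi$ translates to the existence of arbitrarily long padded words on which $\widehat\psi$ eventually holds, independently of any length-constraints that~$\psi$ itself might impose.
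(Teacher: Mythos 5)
Your overall architecture---upper bound from the decision procedure, lower bounds by reduction from satisfiability---matches the paper's, and your gadget for part~(a) (an $X$-positive formula that advances one position per stage, guarded by a relativised copy of~$\psi$, with a padding argument) is a workable concrete instance of the paper's generic $\SAT$-to-$\BDD$ reduction (Observation~\ref{obs:SAT vs BDD}), adapted correctly to the fact that words are not closed under disjoint unions. But part~(b) has a genuine gap: you take the hardness of $\SAT(\FO,\cT_\fin)$ for $\DTIME(\exp^{\poly(n)}(1))$ to be a ``classical result of Stockmeyer and Meyer''. The classical non-elementary lower bound is for first-order logic over \emph{ordered} words---which is exactly what the paper uses for part~(a)---whereas for finite \emph{unordered} trees no such bound was available; the paper has to prove it (Proposition~\ref{Prop: SAT(FO,Tfin) hard}) by a reduction from the tiling problem for $\exp^n(1)\times\exp^n(1)$ grids, using the Flum--Grohe encoding of numbers up to $\exp^n(1)$ by trees of height~$n$ so that equality and successor on such numbers become polynomial-size $\FO$-formulae. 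That construction is the substantive content of the proof of~(b), and your proposal omits it entirely. A secondary problem with your tree gadget: relativising $\psi$ to ``some subtree'' is not $\FO$-expressible in an unordered tree (descendant-closure is not first-order definable), and relativising to an arbitrary fresh unary predicate changes the target of the reduction from satisfiability over trees to satisfiability over induced substructures of trees, i.e.\ forests; this is repairable but must be addressed.

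A further remark on the upper bound in~(a): your pigeonhole/pumping argument on the sequence of types along a word is essentially the naive outline that the paper explicitly warns against---already for paths the stage-indexed annotations are not regular (see the discussion after Example~\ref{ex:paths}), which is why Part~I introduces histories, synchronisations and the reduction to limitedness. The intended route is simply to invoke the general decision procedure, which runs in time $\exp^{\qr(\varphi)+O(1)}(\lvert\varphi\rvert)$ and hence well within $\DSPACE(\exp^{\poly(n)}(1))$; a fresh direct argument is neither needed nor, as sketched, clearly correct.
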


Part~(a) follows from the corresponding 
result for $\SAT(\FO,\cP)$\?;
see \cite{Reinhardt02} for a proof and exposition.
Part~(b) is a consequence of the following complexity bound for
$\SAT(\FO,\cT_\fin)$\?; although this is based on standard techniques,
we include a proof since this complexity bound does not seem to appear
in the literature.

\begin{prop}\label{Prop: SAT(FO,Tfin) hard}
$\SAT(\FO,\cT_\fin)$ is hard for $\DTIME(\exp^{\app{\poly}n}(1))$
under polynomial time reductions.
\end{prop}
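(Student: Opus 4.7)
The plan is to reduce, in polynomial time, from a canonical problem complete for $\DTIME(\exp^{\poly(n)}(1))$, such as the acceptance problem for deterministic Turing machines that halt within time $\exp^{p(n)}(1)$ for some fixed polynomial~$p$. Given such a machine~$M$ and an input~$x$ of length~$n$, I would build an $\FO$-sentence $\varphi_{M,x}$ over a suitable finite tree vocabulary (with a handful of unary predicates encoding tape symbols, states, and auxiliary markings) such that $\varphi_{M,x}$ is satisfiable in some finite tree if and only if $M$ accepts~$x$.

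The central technical device is the standard \emph{non-elementary counter} construction: $\FO$-formulae of quantifier rank~$k$ can define, over suitable finite trees, a linear order on some distinguished set of ``positions'' of size up to $\exp^k(1)$, together with $\FO$-definable successor and comparison relations on that order. One implements this inductively: rank-$1$ formulae give orders of polynomial size, and at each further level one uses subtrees (or marked subsets of vertices at a given depth) to represent ``digits'' of an exponentially longer counter, with the $<$-relation on the new level defined by lexicographic comparison of digit tuples using the order from the previous level. With depth $p(n)$ this yields addressing of $\exp^{p(n)}(1)$ many cells by $\FO$-formulae of size $\poly(n)$, via extensive sharing of subformulae.

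The sentence $\varphi_{M,x}$ then asserts the existence of a correctly laid-out computation tableau of width and height $\exp^{p(n)}(1)$: each tableau cell is addressed by a pair of counter positions (time, tape-cell), its contents are recorded by unary predicates, and three standard conjuncts enforce (i)~the initial row matches~$x$, (ii)~an accepting state appears, and (iii)~the transition relation of~$M$ is respected, the last condition requiring the comparison of each cell with its three predecessors in the previous row, which is expressible once ``same position at next time step'' is $\FO$-definable from the counter construction above. Correctness of the reduction then reads directly from the semantics.

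The main obstacle will be the counter construction itself: making sure that the inductively built orders, successor, and equality predicates can all be expressed uniformly in plain $\FO$ over an \emph{undirected} finite tree (with no built-in ordering of siblings), and that the total formula size remains polynomial despite the $p(n)$ levels of nesting. This is essentially Stockmeyer's classical non-elementary lower-bound technique, adapted from linear orders (or $\WSOS$) to finite trees; the adaptation is routine once one fixes a marking scheme, for instance using two unary predicates to encode a Gray-code-style sibling order at each node, so I would spend the bulk of the proof on making this encoding precise and verifying the $\poly(n)$ size bound.
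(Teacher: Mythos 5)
Your proposal follows essentially the same route as the paper: the paper reduces from a tiling problem for the $(\exp^n(1)\times\exp^n(1))$-grid (complete for $\NTIME(\exp^{\poly(n)}(1))=\DTIME(\exp^{\poly(n)}(1))$) rather than directly from a Turing-machine tableau, but that is a cosmetic difference; in both cases the entire weight of the argument rests on a family of polynomial-time constructible $\FO$-formulae defining numbers up to $\exp^{\poly(n)}(1)$, together with equality and successor, via trees of small height. The paper simply imports this encoding from Flum and Grohe instead of rebuilding it, and then handles the undirected-tree issue by attaching a uniquely long path to mark the root.

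One concrete detail of your plan for the crux would fail as stated: a bounded number of unary predicates cannot impose a (Gray-code or any other) linear order on the children of a node, because a node may have unboundedly many pairwise automorphic children and finitely many unary predicates only partition them into boundedly many classes. The correct adaptation to unordered trees needs no sibling order at all: a level-$(k{+}1)$ number is encoded by the \emph{set} of positions of its $1$-bits, each position being a recursively encoded level-$k$ number sitting in its own child subtree, so that equality, successor, and comparison reduce to order-free comparisons of sets of children. Relatedly, plain $\FO$ has no subformula sharing, so your ``extensive sharing'' must be realised by the Fischer--Rabin variable-reuse trick (each level's equality predicate occurring exactly once, applied to a universally quantified pair of arguments constrained by a disjunction) to keep the formula size polynomial across $\poly(n)$ levels. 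With these two corrections your reduction goes through and coincides, in substance, with the paper's proof.
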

\begin{proof}
We show that $\SAT(\FO,\cT_\fin)$ is hard for $\NTIME(\exp^{\app{\poly}n}(1))$,
which is the same as $\DTIME(\exp^{\app{\poly}n}(1))$.
We use the following tiling problem, which is complete for
$\NTIME(\exp^{\app{\poly}n}(1))$ (see~\cite{Harel85} for an overview)\?:
given a set~$D$ of tiles, two relations $H,V \subseteq D \times D$, and
a natural number~$n$ (in unary encoding), determine whether there exists a tiling
of the $(\exp^n(1)\times \exp^n(1))$-grid, i.e., a function
$\tau : \exp^n(1)\times \exp^n(1) \to D$ such that
\begin{align*}
  (\tau(x,y),\tau(x+1,y)) \in H
  \quad\text{and}\quad
  (\tau(x,y),\tau(x,y+1)) \in V\,,
  \quad\text{for all } x,y\,.
\end{align*}

For the reduction, we set $N := \exp^n(1)$.
One can show that the problem remains complete for
$\NTIME(\exp^{\app{\poly}n}(1))$
even if we require for convenience that there are at most~$N$ tiles.
Thus, we can represent tiles by numbers less than~$N$.
We construct a formula~$\psi$ that is satisfied by some finite tree
if, and only if, there exists a tiling of the $(N \times N)$-grid.

We use an encoding of numbers by directed trees
introduced in~\cite{FlumGrohe06} (see also~\cite{DawarGrKrSchw07})
where numbers from $\{0,\dots,N-1\}$ are encoded by trees of height at most~$n$.
The encoding is such that there are first-order formulae
$\varphi_N(x)$, $\varphi_{\min}(x)$, $\varphi_{\max}(x)$,
$\varphi_=(x,y)$ and $\varphi_{\mathrm{suc}}(x,y)$,
which can be constructed in time polynomial in~$n$,
with the property that
\begin{itemize}
\item a vertex~$v$ in a tree~$\fT$ satisfies $\varphi_N(x)$ if the
  subtree rooted at~$v$ encodes a number from $\{0,\dots,N-1\}$\?;
\item a vertex~$v$ satisfies $\varphi_{\min}(x)$ or $\varphi_{\max}(x)$
  if the subtree rooted at~$v$ encodes the number $0$~or $N-1$, respectively\?;
\item the formulae $\varphi_=(x,y)$ and $\varphi_{\mathrm{suc}}(x,y)$
  similarly define equality and the successor relation for numbers
  encoded in the subtrees rooted at~$x$ and~$y$.
\end{itemize}

\noindent We use this encoding to represent triples of numbers as follows.
The triple $(x,y,z)$ is encoded by a tree of the form
\begin{center}
\includegraphics{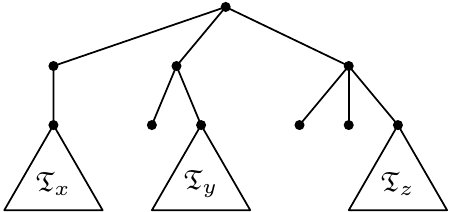}
%
%
%
%
%
%
%
\end{center}
where $\fT_x$, $\fT_y$, and $\fT_z$ are the trees representing
$x$, $y$,~and~$z$, respectively.
Then, a tiling can by represented by a set of triples $(x,y,z)$,
where $x$~and~$y$ are coordinates and $z$ is the tile at position $(x,y)$.
The respective set of trees is turned into a single tree
by making all these triples children of a new root.

To axiomatise the representation of a valid tiling,
we use a formula~$\psi$ based on the formulae
$\varphi_N$, $\varphi_{\min}$, $\varphi_{\max}$,
$\varphi_{\mathrm{suc}}$, and $\varphi_=$ from above.
The formula $\psi$ expresses the following\?:
\begin{itemize}
\item All children of the root encode triples of numbers.
\item There is some triple $(x,y,z)$ with $x = y = 0$.
\item Each triple has a neighbour to the right,
  unless the $x$-coordinate already is $N-1$.
  The tiles of the triple and its neighbour match.
\item Similarly, there is a neighbour above.
\item Each position occurs at most once.
\end{itemize}

Such a formula~$\psi$ is constructible
in time polynomial in $n$ and the size of the tile set.
Clearly, directed tree models of~$\psi$
correspond to valid tilings of an $(N\times N)$-grid.
Hence, $\psi$~is satisfiable by such a tree if, and only if,
such a tiling exists.

To work inside the class $\cT_\fin$,
we need to replace the directed trees by undirected ones.
Observe that every model of~$\psi$ is a tree of height at most $n+3$.
Hence, we can uniquely mark the root by attaching a path of length $n+4$ to it.
It is easy to modify~$\psi$ to work with such undirected trees instead.
\end{proof}

\begin{cor}
The following boundedness problems are $\app{\DTIME}{\exp^{\app{\poly}n}(1)}$-complete\?:
\begin{enumerate}
\item $\BDDm(\FO,\cT_{\mathrm{fin}})$ where $\cT_{\mathrm{fin}}$~is the class of all finite trees.
\item $\BDDm(\MSO,\cT)$ where $\cT$ is the class of all trees.
\item Boundedness for simultaneous $(\GSOs+\mathrm C)$-fixed points
  over the class of structures of bounded tree-width.
\end{enumerate}
\end{cor}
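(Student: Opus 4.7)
The plan is to establish membership and hardness separately for each of the three problems.

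For membership in $\DTIME(\exp^{\poly(n)}(1))$: the decidability procedures already assembled --- via weighted parity automata and the (finite-tree or general) limitedness problems of Theorems~\ref{thm:finite limitedness decidable}~and~\ref{thm:limitedness decidable} --- run in time $\exp^{\qr(\varphi)+O(1)}(|\varphi|)$ on input~$\varphi$, since the automaton has size bounded by the number of $X$-positive $\qr(\varphi)$-types, namely $\exp^{\qr(\varphi)}(\poly(|\varphi|))$, and limitedness adds at most one further exponential. With $\qr(\varphi)\leq|\varphi|$ this fits inside $\DTIME(\exp^{\poly(n)}(1))$. Item~(1) invokes the finite-trees version through the interpretation in Lemma~\ref{lem: tree interpretations}(b); items~(2) and~(3) invoke Proposition~\ref{prop:boundedness for all trees} and Theorem~\ref{thm: bdd for GSO over bounded twd} respectively.

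The lower bound for (1) is exactly Theorem~\ref{theo:lower complexity bound}(b). To transfer hardness to (2), the plan is to reduce $\BDDm(\FO,\cT_\fin)$ to $\BDDm(\MSO,\cT)$ in two stages. First, via Lemma~\ref{lem: tree interpretations}(b) and the reduction direction of Proposition~\ref{prop: boundedness and interpretations}, any $\varphi(X,x)\in\FO$ over $\cT_\fin$ is turned into $\varphi^\cI(X,x)\in\MSO$ over the class of finite ternary trees with the same boundedness behaviour. Second, on the class of ternary trees ``the tree is finite'' is $\MSO$-expressible: by K\"onig's lemma a finitely-branching tree is finite iff it contains no infinite simple path, and the existence of such a path is $\MSO$-definable as ``there exists a non-empty set $X$ in which every vertex has at most two neighbours and at most one vertex has exactly one neighbour''. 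Let $\mathit{fin}$ be the negation of this sentence and set $\varphi'(X,x) := \varphi^\cI(X,x) \lor \lnot \mathit{fin}$. Then $\varphi'$ is $X$-positive (since $\lnot \mathit{fin}$ is $X$-free), coincides with $\varphi^\cI$ on finite ternary trees, and reaches the full universe in one step on infinite ternary trees; hence $\varphi'$ is bounded over $\cT_3$ iff $\varphi$ is bounded over $\cT_\fin$. The polynomial-time equivalence $\BDDm(\MSO,\cT_3) \equiv \BDDm(\MSO,\cT)$ established in Section~\ref{sect:decidabilities} (combining Lemma~\ref{lem: tree interpretations}(a), Proposition~\ref{prop:countable branching property for MSO}, and Corollary~\ref{cor:decidability and subclasses}) then transports the hardness to~(2). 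The same recipe yields~(3): finite trees have tree-width~$1$, so $\cT_\fin\subseteq\cW_k$ for all $k\geq 1$; the finiteness condition above is definable in the relevant sparse subclass; and any $\MSO$-formula in a single free first-order variable is a $\GSOs$-formula because singletons are always guarded.

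The main obstacle --- the only step that requires a genuinely new idea beyond cited results --- is the mismatch between ``bounded over finite trees'' and ``bounded over all trees''. An $\FO$-formula can perfectly well be bounded on every finite tree while being unbounded on some infinite tree, which would break any naive reduction; the $X$-free disjunct $\lnot \mathit{fin}$ circumvents this by trivialising the fixed-point iteration on infinite structures without altering it on finite ones. The remaining verifications (preservation of $X$-positivity, $\MSO$-definability of $\mathit{fin}$ on ternary trees, and polynomial-time computability of the construction) are routine.
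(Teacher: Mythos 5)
Your proposal is correct and follows essentially the same route as the paper: upper bounds from the running time $\exp^{\qr(\varphi)+O(1)}(\len{\varphi})$ of the decision procedure, the lower bound for~(1) from Theorem~\ref{theo:lower complexity bound}(b), and transfer of hardness to (2)~and~(3) via the interpretation of Lemma~\ref{lem: tree interpretations} together with the $\MSO$-definability (by K\"onig's lemma) of finiteness inside the class of ternary trees -- a step the paper compresses into ``as we have seen, $\BDDm(\FO,\cT_{\mathrm{fin}})$ reduces to $\BDDm(\MSO,\cT_3)$'' and which you usefully make explicit, your $X$-free disjunct $\neg\mathit{fin}$ playing the role of the relativisation in Corollary~\ref{cor:decidability and subclasses}. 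One small repair: your explicit witness formula for an infinite path is satisfied by any singleton set (its unique vertex has zero neighbours in~$X$, so both clauses hold vacuously); you need to additionally require that $X$ be connected, or that every vertex of~$X$ have at least one neighbour in~$X$, so that $X$ is forced to be a ray.
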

\begin{proof}
(1) follows from Proposition~\ref{Prop: SAT(FO,Tfin) hard}.
As (2)~reduces to~(3), for which we already have a trivial non-elementary upper
bound, it is sufficient to provide a lower bound for $\BDDm(\MSO,\cT)$.
As we have seen, $\BDDm(\FO,\cT_{\mathrm{fin}})$ reduces to
$\BDDm(\MSO,\cT_3)$, which in turn reduces to $\BDDm(\MSO,\cT)$.
Hence, the lower bound follows from~(1).
\end{proof}

This lower bound shows that, for many cases, our algorithm is best possible.
Of course, there are important fragments of $\MSO$ to which the lower bound is not applicable.
For instance, the following upper bounds are known from the literature\?:
\begin{thm}\leavevmode
\begin{enumerate}
\item $\BDDm(\ML)$ is in \textsc{Exptime} \textup{\cite{Otto99}.}
\item $\BDDm(\EFO)$ is in \textsc{2-Exptime} \textup{\cite{CosmadakisGaKaVa88}.}
\qed\end{enumerate}
\end{thm}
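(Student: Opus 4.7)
The plan is to recover the two cited upper bounds along a common template: for a fixed-point formula $\varphi$ in the fragment under consideration, first pin down an effectively computable bound $N = N(\varphi)$ with the property that if $\varphi$ is not already bounded by $N$ then it is not bounded at all, and then decide $\BDDm$ by an (un)satisfiability test for the formula $\varphi^{N+1} \wedge \neg \varphi^N$. The value $N(\varphi)$ should come from a type-counting/pumping argument over suitable tree-shaped witnesses, so that the complexity is governed by (i)~the size of $N$, (ii)~the cost of expressing the finite stages $\varphi^N$ in the target logic (crucially avoiding a naive unfolding that would multiply $|\varphi|$ by $N$), and (iii)~the satisfiability problem of that target logic.

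For part~(1), concerning $\BDDm(\ML)$, I would exploit bisimulation invariance to reduce to tree models of bounded branching. Over such tree models, the stage $\varphi^\alpha$ at a point is determined by the modal $k$-type at that point (for $k = \qr(\varphi)$), and there are at most exponentially many such types in $|\varphi|$. A standard saturation argument along branches shows that once no new type is added at any vertex, the iteration has stabilised; this gives $N(\varphi) \leq 2^{O(|\varphi|)}$. Crucially, the finite stages $\varphi^{N}$ themselves are definable by a modal $\mu$-calculus formula of size polynomial in $|\varphi|$ (one uses sharing through fixed-point operators rather than literal unfolding), so the unsatisfiability check $\models \neg\exists x(\varphi^{N+1} \wedge \neg \varphi^N)$ can be handled by $\mu$-calculus satisfiability, which is in \textsc{Exptime}.

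For part~(2), concerning $\BDDm(\EFO)$, I would use the correspondence between positive existential monadic fixed points and monadic \textsc{Datalog} programs. A derivation tree witnessing membership of an element in $\varphi^\infty(\fA)$ is captured by a two-way alternating tree automaton $\cA_\varphi$ whose acceptance condition encodes the stage count. Boundedness of $\varphi$ then translates to the statement: every accepting run of $\cA_\varphi$ has depth at most $N$, where $N = N(\varphi)$ can be taken to be (at worst) doubly exponential in $|\varphi|$ by a type-based pumping argument on derivation trees. Testing this property reduces to an equivalence/inclusion check between $\cA_\varphi$ and its depth-restricted variant, with the two exponential blowups arising from alternation elimination and the subsequent powerset construction, yielding the \textsc{2-Exptime} upper bound.

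The main obstacle in both cases is keeping $N(\varphi)$ tight enough and, even more importantly, expressing or testing the finite-stage condition without a further blowup that would push the complexity past \textsc{Exptime} or \textsc{2-Exptime}. In~(1) this means encoding the iterates succinctly within the modal $\mu$-calculus rather than as first-order unfoldings; in~(2) it means designing $\cA_\varphi$ so that alternation removal and determinisation cost only the strictly necessary exponentials. These are precisely the technical cores of \cite{Otto99} and \cite{CosmadakisGaKaVa88} respectively, and a from-scratch proof would have to reproduce these careful automata constructions.
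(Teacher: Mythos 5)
You should first note that the paper offers no proof of this theorem at all: it is introduced with the words ``the following upper bounds are known from the literature'' and both items carry only a citation, so there is no in-paper argument to compare yours against. Judged on its own terms, your sketch for item~(2) is essentially a restatement of the strategy of \cite{CosmadakisGaKaVa88} with every complexity-determining step (the design of $\cA_\varphi$, the pumping bound, the cost of alternation removal) explicitly deferred to that paper; that is a citation with extra steps, not a proof.

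For item~(1) there is a concrete gap. Your template is: compute $N(\varphi)\leq 2^{O(|\varphi|)}$, then test unsatisfiability of $\varphi^{N+1}\wedge\neg\varphi^{N}$. The step that fails is the claim that the finite stage $\varphi^{N}$ is definable by an $\Lmu$-formula of size polynomial in $|\varphi|$ by ``sharing through fixed-point operators''. A fixed-point operator defines the limit $\varphi^\infty$, not a prescribed finite stage; $\Lmu$ has no mechanism to truncate the recursion after exactly $N$ steps, and any honest representation of $\varphi^{N}$ --- even as a DAG or circuit obtained by sharing the iterates --- has size at least linear in $N$, hence exponential in $|\varphi|$. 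Feeding an exponential-size formula into a \textsc{Pspace} ($\ML$) or \textsc{Exptime} ($\Lmu$) satisfiability procedure then yields only an \textsc{Expspace} or \textsc{2-Exptime} bound, not \textsc{Exptime}. The argument of \cite{Otto99} avoids materialising $\varphi^{N}$ altogether: by the modal version of the Barwise--Moschovakis theorem, boundedness of $\varphi$ is equivalent to definability of the fixed point $\mu_X\varphi$ in plain $\ML$, and that definability property is decided directly via a singly exponential tree-automata construction whose emptiness test is polynomial in the size of the automaton. To salvage your route you would have to show how to decide the condition $\varphi^{N+1}\equiv\varphi^{N}$ without ever writing down the iterated formula, or switch to the definability characterisation.
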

Since it is not the main concern of this article, we leave the exact
complexity of $\BDDm(\ML)$, $\BDDm(\Lmu)$, $\BDDm(\EFO)$, $\BDDm(\AFO)$, $\BDD(\GF)$, and $\BDD(\muGF)$ open.

\bibliography{bnd}

\begin{thebibliography}{10}

\bibitem{ABN}
Hainal Andr\'eka, Johan {van Benthem}, and Istvan N\'emeti.
\newblock Modal languages and bounded fragments of predicate logic.
\newblock {\em Journal of Philosophical Logic}, 27:217--274, 1998.

\bibitem{BarwiseMoschovakis78}
Jon Barwise and Yiannis Moschovakis.
\newblock Global inductive definability.
\newblock {\em The Journal of Symbolic Logic}, 43(3):521--534, 1978.

\bibitem{Blumensath10}
Achim Blumensath.
\newblock {Guarded Second-Order Logic, Spanning Trees, and Network Flows}.
\newblock {\em Logical Methods in Computer Science}, 6, 2010.

\bibitem{BGG}
Egon B{\"o}rger, Erich Gr{\"a}del, and Yuri Gurevich.
\newblock {\em The Classical Decision Problem}.
\newblock Springer-Verlag, 1997.

\bibitem{Colcombet09b}
Thomas Colcombet.
\newblock {Regular cost functions over words}.
\newblock unpublished, 2009.

\bibitem{Colcombet13}
Thomas Colcombet.
\newblock {\em {Fonctions r\'eguli\`eres de co\^ut}}.
\newblock {Habilitation Thesis}, Universit\'e Paris 7, 2013.

\bibitem{ColcombetLoeding08}
Thomas Colcombet and Christof L{\"o}ding.
\newblock The nesting-depth of disjunctive $\mu$-calculus for tree languages
  and the limitedness problem.
\newblock In {\em Proc.\ 17th {EACSL} Annual Conference on Computer Science
  Logic, {CSL} 2008}, 2008.

\bibitem{CosmadakisGaKaVa88}
Stavros Cosmadakis, Haim Gaifman, Paris Kanellakis, and Moshe Vardi.
\newblock Decidable optimization problems for database logic programs.
\newblock In {\em Proceedings 20th Annual {ACM} Symposium on Theory of
  Computing, {STOC} 1988}, pages 477--490, 1988.

\bibitem{Courcelle03}
Bruno Courcelle.
\newblock {The monadic second-order logic of graphs {XIV}: Uniformly sparse
  graphs and edge set quantifications}.
\newblock {\em Theoretical Computer Science}, 299:1--36, 2003.

\bibitem{CourcelleEngelfriet95}
Bruno Courcelle and Joost Engelfriet.
\newblock {A Logical Characterization of the Sets of Hypergraphs Defined by
  Hyperedge Replacement Grammars}.
\newblock {\em Math.\ System Theory}, 28:515--552, 1995.

\bibitem{DawarGrKrSchw07}
Anuj Dawar, Martin Grohe, Stephan Kreutzer, and Nicole Schweikardt.
\newblock Model theory makes formulas large.
\newblock In {\em {ICALP} 2007, Proc.\ 34th International Colloquium on
  Automata, Languages and Programming}, number 4596 in LNCS, pages 913--924,
  2007.

\bibitem{EbbinghausFlum95}
Heinz-Dieter Ebbinghaus and J{\"o}rg Flum.
\newblock {\em {Finite Model Theory}}.
\newblock Springer Verlag, 1995.

\bibitem{FlumGrohe06}
J{\"o}rg Flum and Martin Grohe.
\newblock {\em {Parametrized Complexity Theory}}.
\newblock Springer Verlag, 2006.

\bibitem{GaifmanMaSaVa93}
Haim Gaifman, Harry Mairson, Yehoshua Sagiv, and Moshe Vardi.
\newblock Undecidable optimization problems for database logic problems.
\newblock {\em Journal of the ACM}, 40:683--713, 1993.

\bibitem{GOHBML2007}
Valentin Goranko and Martin Otto.
\newblock Model theory of modal logic.
\newblock In Patrick Blackburn, Johann van Benthem, and Frank Wolter, editors,
  {\em Handbook of Modal Logic}, pages 249--329. Elsevier, 2007.

\bibitem{Graedel99}
Erich Gr{\"a}del.
\newblock {On the restraining power of guards}.
\newblock {\em JSL}, 64:1719--1742, 1999.

\bibitem{GHO}
Erich Gr{\"a}del, Colin Hirsch, and Martin Otto.
\newblock Back and forth between guarded and modal logics.
\newblock {\em ACM Transactions on Computational Logic}, 3:418--463, 2002.

\bibitem{GW}
Erich Gr{\"a}del and Igor Walukiewicz.
\newblock {Guarded Fixed Point Logic}.
\newblock In {\em Proceedings of 14th IEEE Symposium on Logic in Computer
  Science LICS `99, Trento}, pages 45--54, 1999.

\bibitem{Harel85}
David Harel.
\newblock {Recurring Dominoes: Making the Highly Undecidable Highly
  Understandable}.
\newblock {\em Annals of Discrete Mathematics}, 24:51--72, 1985.

\bibitem{HillebrandEtAl95}
Gerd Hillebrand, Paris Kanellakis, Harry Mairson, and Moshe Vardi.
\newblock Undecidable boundedness problems for datalog programs.
\newblock {\em The Journal of Logic Programming}, 25:163--190, 1995.

\bibitem{JaninWalukiewicz}
David Janin and Igor Walukiewicz.
\newblock On the expressive completeness of the propositional $\mu$-calculus
  w.r.t.\ monadic second-order logic.
\newblock In {\em Proceedings CONCUR '96}, pages 263--277, 1996.

\bibitem{KOunpub}
Phokion Kolaitis and Martin Otto.
\newblock On the boundedness problem for fragments of first-order logic.
\newblock unpublished draft.

\bibitem{KOS}
Stephan Kreutzer, Martin Otto, and Nicole Schweikardt.
\newblock Boundedness of monadic {FO} over acyclic structures.
\newblock In {\em {ICALP} 2007, Proc.\ 34th International Colloquium on
  Automata, Languages and Programming}, number 4596 in LNCS, pages 571--582,
  2007.

\bibitem{Otto99}
Martin Otto.
\newblock Eliminating recursion in the $\mu$-calculus.
\newblock In {\em {STACS} 1999, Proc.\ 16th Annual Symposium on Theoretical
  Aspects of Computer Science}, number 1563 in LNCS, pages 531--540, 1999.

\bibitem{Otto06}
Martin Otto.
\newblock The boundedness problem for monadic universal first-order logic.
\newblock In {\em Proc.\ 21th {IEEE} Symposium on Logic in Computer Science,
  {LICS} 2006}, pages 37--46, 2006.

\bibitem{Reinhardt02}
Klaus Reinhardt.
\newblock The complexity of translating logic to finite automata.
\newblock In E.~{Gr\"adel}, W.~Thomas, and T.~Wilke, editors, {\em Automata,
  Logic and Infinite Games}, {LNCS 2500}, pages 231--238. Springer Verlag,
  2002.

\bibitem{Ben83}
Johan {van Benthem}.
\newblock {\em Modal Logic and Classical Logic}.
\newblock Bibliopolis, 1983.

\bibitem{VandenBoom12}
Miachael {Vanden Boom}.
\newblock {\em {Weak Cost Automata over Infinite Trees}}.
\newblock {Ph.\,D. Thesis}, University Oxford, 2012.

\bibitem{Walukiewicz02}
Igor Walukiewicz.
\newblock {Monadic second-order logic on tree-like structures}.
\newblock {\em Theoretical Computer Science}, 275(1--2):311--346, 2002.

\end{thebibliography}
\bibliographystyle{plain}

\end{document}